\documentclass{article}

\usepackage{arxiv}

\usepackage[utf8]{inputenc} 
\usepackage[T1]{fontenc}    
\usepackage{hyperref}       
\usepackage{url}            
\usepackage{booktabs}       
\usepackage{amsfonts}       
\usepackage{nicefrac}       
\usepackage{microtype}      
\usepackage{lipsum,enumerate,bm,mathrsfs,yfonts}
\usepackage{graphicx,amsmath,multicol,multirow}
\usepackage{url}
\usepackage{amsthm}
\usepackage[style=apa, backend=biber, natbib=true]{biblatex}
\usepackage{adjustbox,xr,comment} 

\usepackage{amsmath}
\usepackage{amssymb}
\usepackage{amsthm,tikz}
\usepackage{colortbl,caption,xr}
\usepackage{adjustbox}  
\usepackage{graphicx}
\usepackage{lscape}
\usepackage{rotating}
\usepackage{graphicx}
\usepackage{float}
\usepackage{url}
\usepackage{subfigure}
\usepackage{dcolumn}
\usepackage{multirow}
\usepackage{verbatim,booktabs, array,arydshln}
\usepackage{color}
\usepackage{setspace,amsthm}
\usepackage{bbm}
\usepackage{hhline}
\usepackage{bm}
\usepackage{dsfont,import}
\usepackage{enumerate}
\usepackage{graphicx}
\usetikzlibrary{shapes.geometric, arrows, shadows,positioning,arrows.meta,calc}
\usepackage{graphicx}
\usepackage{booktabs,multicol,multirow,tabularx,array}
\usepackage{longtable}
\usepackage{caption}
\usepackage{graphicx}

\newtheorem{theorem}{Theorem}[section]   
\newtheorem{corollary}[theorem]{Corollary}

\theoremstyle{definition}

\theoremstyle{remark}

\theoremstyle{assumption}

\theoremstyle{proposition}

\newcommand{\Ind}[1]{\mathrm{I}\!\left\{#1\right\}}
\newcommand{\E}{\mathrm{E}}
\newcommand{\Var}{\mathrm{Var}}

\newcommand{\WR}{\mathrm{WR}}
\newcommand{\NB}{\mathrm{NB}}
\newcommand{\WO}{\mathrm{WO}}

\newcommand{\bT}{\mathbf{T}}
\newcommand{\bY}{\mathbf{Y}}
\newcommand{\bYobs}{\widetilde{\mathbf{Y}}}
\newcommand{\bDelta}{\boldsymbol{\Delta}}
\newcommand{\bZ}{\mathbf{Z}}
\newcommand{\bpi}{\boldsymbol{\pi}}

\newcommand{\bOmega}{\boldsymbol{\Omega}}

\newcommand{\mbf}[1]{\mathbf{#1}}

\newcommand{\bz}{\mbf{z}}

\newcommand{\dee}{\,\mathrm{d}}

\title{Making censored pairs count: conditional tie weighting for win statistics with composite survival endpoints}

\author{
 Xi Fang \\
    Data Science Institute \\
    Medical College of Wisconsin \\
    Milwaukee, WI, USA\\
\And
 Fan Li \\
Department of Biostatistics \\
Yale School of Public Health \\
New Haven, CT, USA\\
  \texttt{fan.f.li@yale.edu} \\
}

\begin{document}
\maketitle
\begin{abstract}
Hierarchical composite endpoints are increasingly used in clinical trials to compare patients first on the most clinically important outcome and then, only when that comparison is tied, on lower priority outcomes. Under right censoring, a lower priority comparison may already be observed but still cannot contribute because the higher priority genuine tie required for descent through the hierarchy is not confirmed. Existing restricted win-statistic estimators address censoring by requiring such ties from higher priority to be observed as genuine ties. This all-or-nothing rule preserves the restricted-time estimand, but excludes pairs with censoring-induced ties even when their lower priority comparisons contain useful information. We propose conditional tie weighting, which replaces the unavailable higher priority genuine-tie indicator by its conditional probability given the observed pairwise data. The resulting estimator targets the same restricted-time win probabilities while allowing partially observed pairs to contribute fractionally when their lower priority comparison is informative. We establish identification and large-sample theory for the resulting two-sample U-statistics with estimated nuisance functions, and derive sandwich variance estimators for the win ratio, net benefit, and win odds. Simulations show substantial efficiency gains, especially under heavier censoring and longer restriction horizons. A reanalysis of the HF-ACTION trial illustrates how conditional tie weighting recovers information from censoring-induced ties in death-first hospitalization comparisons.
\end{abstract}

\keywords{Hierarchical composite endpoint; conditional tie weighting; copula model; estimands; U-statistics; win ratio}

\section{Introduction} \label{sec:intro}

Composite endpoints are increasingly common in clinical trials when treatment benefit cannot be adequately summarized by any single outcome alone. In cardiovascular trials and heart failure trials, where mortality and major nonfatal clinical events may all contribute to the overall disease burden, a primary analysis that reflects multiple dimensions of patient experience is often desirable for both efficiency and interpretation. At the same time, component outcomes within a composite endpoint commonly differ in clinical importance, frequency, and responsiveness to treatment. This complexity has motivated the evolving development of the hierarchical, or prioritized, composite endpoints, in which case the outcomes are ordered according to clinical importance and compared sequentially based on that priority ordering \citep{buyse2010generalized}.  Building on the foundational work of \citet{finkelstein1999combining} and \citet{buyse2010generalized}, \citet{pocock2012win} introduced the win ratio as a summary measure for hierarchical composite endpoints, and subsequent developments have broadened this framework to include related measures such as the net benefit and the win odds \citep{dong2023stratified}. 

Our work is motivated by clinical trials with prioritized time-to-event outcomes. The Heart Failure: A Controlled Trial Investigating Outcomes of Exercise Training (HF-ACTION) trial \citep{o2009efficacy} randomized 2,331 medically stable outpatients with heart failure and reduced ejection fraction at 82 centers in the United States, Canada, and France, with a median follow-up of approximately 30 months. The original primary endpoint was all-cause mortality or all-cause hospitalization. This endpoint reflects a common clinical tension in chronic disease trials, where death and hospitalization are both clinically meaningful, but death is naturally prioritized over hospitalization, whereas hospitalization occurs more frequently and can carry substantial information about treatment benefit. In this study, mortality occurred in about 16--17\% of patients, whereas at least one hospitalization occurred in about 63--65\% of patients. Thus, a hierarchical comparison that prioritizes death while allowing hospitalization to contribute when death does not distinguish the pair is clinically appealing. Although HF-ACTION provides a two component example, the same inferential problem arises more generally in any prioritized composite survival endpoint when a lower priority component can contribute only after all higher priority components form genuine ties. The same feature that makes the hierarchical comparison clinically interpretable also creates a statistical difficulty under right censoring. To contribute a lower priority comparison, all higher priority ties must be genuine ties through the restriction time, which censoring may prevent. When censoring prevents a genuine tie from being determined, the pair may have only a censoring-induced tie on the higher priority component, even when the lower priority comparison itself has already been observed. Thus, censoring can block descent through the hierarchy, not because the lower priority outcome is unobserved, but because the required higher priority genuine tie is not confirmed. Such pairs are not useless. Instead, they contain partial information about whether the latent higher priority outcomes would have formed genuine ties through the restriction time.

This issue is closely connected to the estimand question for win statistics. A win estimand is determined not only by the component hierarchy, but also by the rule used to define ties and by the time frame over which pairs are compared. However, the conventional, unrestricted win ratio need not correspond to a well-defined population quantity, because different patient pairs may be compared over different follow up intervals, making the resulting estimand inherently dependent on the unknown censoring distribution and specific follow up time rather than solely on the underlying outcome process \citep{oakes2016win,dong2020win,mao2024defining}. Such dependence is difficult to reconcile with the ICH E9(R1) estimands framework, which requires that the target of inference be specified separately from the censoring process \citep{ich2019e9r1}. A natural resolution is to fix a clinically meaningful restriction time and define wins, losses, and ties using the latent component outcomes restricted at that horizon. Under incomplete follow-up, right censoring can prevent the observed data from resolving whether a pair is a win, loss, or tie by the restriction time, and estimators that treat indeterminate comparisons as ties or discard them without correction are generally biased for the restricted win probabilities \citep{mao2024defining}. Existing work has addressed this by applying inverse probability of censoring weighting (IPCW) to recover the target estimand. For example, \citet{dong2020inverse} and \citet{dong2021adjusting} proposed IPCW adjusted win statistics under independent and covariate-dependent censoring; \citet{cui2025ipcw} broadened IPCW estimation under explicit tie rules and \citet{cao2026generalized} applied IPCW in the regression setting based on win measures.

For hierarchical composite endpoints, however, the main unresolved issue is not bias correction itself, but also the efficiency with which censored pairwise information, especially from lower priority comparisons, is utilized. A lower priority comparison conributes only when the higher priority components are tied. At the restricted horizon, this tie can be genuine, where neither individual experiences the higher priority event by that horizon, or censoring-induced, where early censoring leaves the status of those components indeterminate so that the pair only appears tied. Existing IPCW methods require this tie to be genuine before a lower priority comparison can contribute, so pairs with a censoring-induced tie before the restriction time are excluded from those comparisons. Although the this IPCW estimator corrects the resulting bias, the excluded pairs still represent a source of efficiency loss, which becomes more pronounced as censoring increases and the restriction horizon increases. 

\begin{figure}
    \centering
    \includegraphics[width=0.8\linewidth]{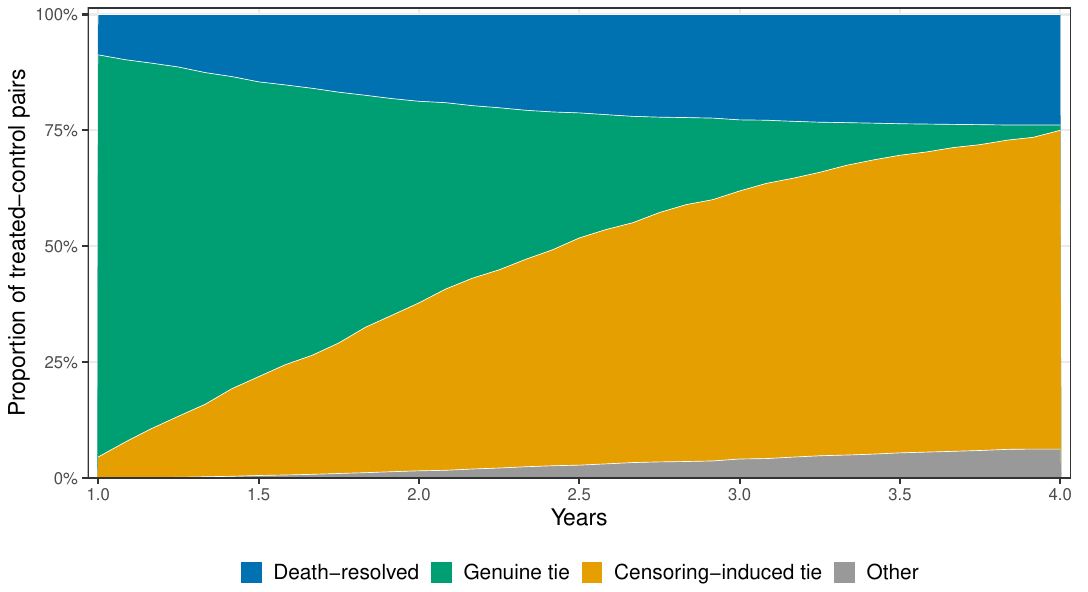}
    \caption{Composition of the first priority death gate in the HF-ACTION data. At each restriction time $\tau$, all exercise training versus usual care pairs are classified as resolved at death, genuine ties, censoring-induced ties, or other non-descending states. Pairs resolved at death are determined by an observed death before $\tau$. Genuine ties are pairs in which both patients are observed alive through $\tau$. Censoring-induced ties have no observed death by $\tau$ but at least one patient is censored before $\tau$, so descent to the lower priority hospitalization comparison cannot be verified by the standard IPCW rule. The other category contains the remaining non-descending death comparisons, such as pairs in which an observed death occurs only after the other patient has already been censored, as well as exact same-day deaths arising from day-level event recording.}
    \label{fig:hfaction_composition}
\end{figure}

This information loss is visible in the motivating HF-ACTION data. Figure \ref{fig:hfaction_composition} classifies all exercise training versus usual care pairs according to the observed status of the first-priority death comparison as the restriction time varies. As time increases, the proportion of genuine ties on death decreases, while the proportion of censoring-induced ties increases. These censoring-induced ties are precisely the pairs for which descent to the hospitalization comparison cannot be verified by the standard IPCW estimator\citep{cui2025ipcw}, even though the observed follow-up still contains information about the probability that the latent death outcomes would have formed a genuine tie through that restriction time. This observed data pattern motivates conditional tie weighting. Rather than treating the higher priority tie requirement as an all-or-nothing observed quantity, we replace the unavailable genuine-tie indicator by its conditional probability given the observed pairwise data. This converts censoring-induced ties from zero contributions into fractional contributions when the lower priority comparison is informative. The proposed estimator therefore targets the same restricted-time win probabilities as the standard IPCW estimator, but uses more of the partially observed pairwise information created by right censoring. The main contribution of this paper is to show that, for hierarchical composite survival endpoints, efficiency loss under censoring arises not only because event times are missing, but also because censoring prevents genuine higher priority ties from being confirmed. Conditional tie weighting addresses this specific source of information loss while preserving the original clinical hierarchy. We develop large-sample theory for the resulting two-sample U-statistics with estimated nuisance functions, derive sandwich variance estimators for the win ratio, net benefit, and win odds, evaluate finite-sample performance under nuisance-model misspecification and copula sensitivity, and apply the method to the HF-ACTION trial.

\section{Conditional tie weighting for restricted win statistics} \label{sec:method}

\subsection{Restricted win estimands} \label{sec:setup}
Consider a two-arm randomized clinical trial comparing a treatment condition ($A=1$) with a control condition ($A=0$). The trial enrolls $n_1$ individuals in the treatment arm and $n_0$ individuals in the control arm, with $n=n_1+n_0$. For individual $i$, write $A_i\in\{1,0\}$ for the treatment assignment. The clinical outcome is a hierarchical composite endpoint consisting of $Q$ prioritized time-to-event components, ordered $q=1,2,\dots,Q$ from the most clinically important to the least. A typical example in cardiovascular trials is $Q=2$ with cardiovascular death as the first priority component ($q=1$) and nonfatal hospitalization as the second-priority component ($q=2$). For individual $i$ and component $q$, let $T_{qi}$ denote the latent event time, that is, the time at which the component-$q$ event would occur in the absence of censoring or any other form of incomplete observation. We write $\bT_i=(T_{1i},\dots,T_{Qi})$ for the full latent outcome vector. As explained by \citet{mao2024defining}, the ordinary unmatched win ratio defined without a fixed time horizon does not target a well-defined treatment effect, because different pairs are compared over different time frames depending on when follow-up ends.
To avoid this, we fix a clinically meaningful restriction horizon $\tau>0$ and define every pairwise comparison in terms of the restricted latent outcomes $Y_{qi}=T_{qi}\wedge\tau$, which equals the event time if the event occurs by $\tau$ and equals $\tau$ otherwise. We denote $\bY_i=(Y_{1i},\dots,Y_{Qi})$ as the vector of restricted outcomes. For a treated individual $i$ ($A_i=1$) and a control individual $j$ ($A_j=0$), the hierarchical comparison at time $\tau$ proceeds through the ordered components, where the pair is first compared on the most important component, and only if that comparison results in a tie does the pair move to the second component, and so on. Under the continuity assumption, a tie on component $q$ at horizon $\tau$ means $Y_{qi}=Y_{qj}$, which can happen only if both event times exceed $\tau$, namely $\{Y_{qi}=Y_{qj}\}=\{T_{qi}>\tau,\;T_{qj}>\tau\}$. This continuity assumption rules out exact event time ties within the observation window. In trials where event times are recorded on a coarse time scale, or where clinically meaningful equivalence margins are used to define ties, the pairwise win and loss functions in \eqref{eq:winfun} and \eqref{eq:lossfun} would require corresponding modification. We focus on continuous event times and zero equivalence margins. The hierarchical comparison can be written as a pairwise win function
\begin{equation}\label{eq:winfun}
  W_{ij}(\tau) =\sum_{q=1}^{Q} \left[\prod_{k=1}^{q-1} \Ind{Y_{ki}=Y_{kj}} \right] \Ind{Y_{qi}>Y_{qj}},
\end{equation}
with the convention that the empty product (when $q=1$) equals one. 
Since the hierarchical cascade terminates as soon as a win or loss is identified, at most one term in the sum \eqref{eq:winfun} is nonzero for any given pair. The pairwise loss function can be defined analogously as
\begin{equation}\label{eq:lossfun}
  L_{ij}(\tau)=\sum_{q=1}^{Q} \left[\prod_{k=1}^{q-1} \Ind{Y_{ki}=Y_{kj}} \right] \Ind{Y_{qj}>Y_{qi}}.
\end{equation}
Since for each component either $Y_{qi}>Y_{qj}$, $Y_{qj}>Y_{qi}$, or $Y_{qi}=Y_{qj}$, the three pairwise outcomes are mutually exclusive and satisfy $W_{ij}(\tau)+L_{ij}(\tau)+U_{ij}(\tau)=1$, where $U_{ij}(\tau)=\prod_{q=1}^Q\Ind{Y_{qi}=Y_{qj}}$ is the overall tie indicator. 

We define the target estimands as the restricted-time win probabilities:
\begin{equation}\label{eq:win-prob}
  \pi_t(\tau)=\E\left[W_{ij}(\tau)\right],
  \quad
  \pi_c(\tau)=\E\left[L_{ij}(\tau)\right],
  \quad
  \pi_u(\tau)=1-\pi_t(\tau)-\pi_c(\tau).
\end{equation}
These are well-defined population quantities that depend only on the joint distribution of the latent event times and on the restriction horizon $\tau$. By the additive structure of $W_{ij}(\tau)$, the win and loss probabilities decompose by component:
\begin{equation}\label{eq:pi-decomp}
  \pi_t(\tau)=\sum_{q=1}^Q \pi_{tq}(\tau),
  \quad
  \pi_c(\tau)=\sum_{q=1}^Q \pi_{cq}(\tau),
\end{equation}
where $\pi_{tq}(\tau)=\E\bigl\{\bigl[\prod_{k<q}\Ind{Y_{ki}=Y_{kj}}\bigr]\Ind{Y_{qi}>Y_{qj}}\bigr\}$ and $\pi_{cq}(\tau)$ is defined symmetrically. The component-specific probabilities can quantify which part of the hierarchy is driving the overall effect. Standard summary measures built from $(\pi_t,\pi_c)$ include the win ratio $\WR(\tau)=\pi_t(\tau)/\pi_c(\tau)$ \citep{pocock2012win}, the net benefit $\NB(\tau)=\pi_t(\tau)-\pi_c(\tau)$ \citep{buyse2010generalized}, and the win odds $\WO(\tau)=\{{\pi_t(\tau)+\frac{1}{2}\pi_u(\tau)}\}/\{{\pi_c(\tau)+\frac{1}{2}\pi_u(\tau)}\}$ \citep{dong2020inverse}. Because $\tau$ is part of the estimand, it should be chosen to correspond to a clinically meaningful comparison horizon rather than to the realized censoring distribution. In applications, natural choices include a planned follow-up time or one or more clinically interpretable landmark times at which the endpoint hierarchy is considered meaningful. When several restriction times are scientifically relevant, the estimands $\pi_t(\tau)$, $\pi_c(\tau)$, $\WR(\tau)$, $\NB(\tau)$, and $\WO(\tau)$ may be reported over a prespecified set of values of $\tau$, or displayed as functions of $\tau$, with the understanding that changing $\tau$ changes the target comparison. The component-specific probabilities in \eqref{eq:pi-decomp} are also important for interpretation, because an overall win ratio or net benefit may be driven primarily by the highest-priority outcome or by more frequent lower priority outcomes. We therefore view $\{\pi_{tq}(\tau),\pi_{cq}(\tau):q=1,\ldots,Q\}$ not as secondary technical quantities, but as part of the clinical reporting of a hierarchical composite endpoint.

\subsection{Censoring-induced ties and the standard IPCW estimator} \label{sec:censor_tie_ipcw}

In practice, the latent event times are not fully observed due to right censoring. Let $C_i$ denote the censoring time for individual $i$. We focus on the common censoring setting, so that an individual who is lost to follow-up at time $C_i$ has all component outcomes censored simultaneously. We define the observed time at restriction time \(\tau\) and the event indicator by $\widetilde{Y}_{qi}(\tau)=T_{qi}\wedge C_i\wedge\tau$ and $\delta_{qi}(\tau)=\Ind{T_{qi}\le C_i\wedge\tau}$. In particular, when $\delta_{qi}(\tau)=0$, the event is not observed by $\tau$, which may occur either because the individual is censored before the event and before $\tau$, or because the individual remains event free through $\tau$. For every individual $i$, we let $\bZ_i$ denote a vector of baseline covariates and assume covariate-dependent censoring such that $C_i\perp \bT_i\mid (A_i,\bZ_i)$. We further define the arm-specific censoring survival function $G_a(t\mid \bm{Z})=\Pr(C_i>t\mid A_i=a,\bZ_i=\bm{Z})$, with the positivity condition $G_a(\tau\mid \bZ_i)>0$ satisfied almost surely for $a=0,1$. Under right censoring, the observed tie indicator for component $k$ at time $\tau$ is $$\widetilde{U}_{ij,k} = 1-\Ind{\widetilde{Y}_{ki}(\tau)>\widetilde{Y}_{kj}(\tau)}\,\delta_{kj}(\tau)
    -\Ind{\widetilde{Y}_{kj}(\tau)>\widetilde{Y}_{ki}(\tau)}\,\delta_{ki}(\tau), \nonumber$$
which equals one not only for genuine ties but also for every indeterminate comparison scenario due to right censoring. In this case, the $q$th observed win kernel is 
$$\widetilde{W}_{ij,q}(\tau)=\left\{\prod_{k<q}\widetilde{U}_{ij,k}\right\}\Ind{\widetilde{Y}_{qi}(\tau)>\widetilde{Y}_{qj}(\tau)}\,\delta_{qj}(\tau),$$ 
with the overall $\widetilde{W}_{ij}(\tau)=\sum_{q=1}^Q\widetilde{W}_{ij,q}(\tau)$. The observed loss kernel $\widetilde{L}_{ij}(\tau)$ is defined analogously. Although the partition $\widetilde{W}_{ij}(\tau)+\widetilde{L}_{ij}(\tau)+\widetilde{U}_{ij}(\tau)=1$ mirrors the original partition in Section \ref{sec:setup}, the two identities are not equal, because $\widetilde{U}_{ij}(\tau)$ absorbs both genuine ties and censoring-induced ties. As a result, $\E[\widetilde{W}_{ij}(\tau)]\neq\pi_t(\tau)$ in general, and the ordinary sample average of $\widetilde{W}_{ij}(\tau)$ is not an unbiased estimator of $\pi_t(\tau)$.

To construct an unbiased estimator, let $O_{ij,q}(\tau)$ denote the indicator that the component $q$ comparison is ascertainable from the observed data, with observation probability $p_{ij,q}(\tau)=\Pr\{O_{ij,q}(\tau)=1\mid W_{ij,q}(\tau),L_{ij,q}(\tau),\bZ_i,\bZ_j\}>0$ almost surely. We can show that 
$$\E\!\left[\{O_{ij,q}(\tau)\,W_{ij,q}(\tau)\}/\{p_{ij,q}(\tau)\}\right]=\pi_{tq}(\tau),$$ 
and therefore an unbiased estimator $\widehat{\pi}_t(\tau)=(n_1 n_0)^{-1}\sum_{i:A_i=1}\sum_{j:A_j=0}\sum_{q=1}^Q O_{ij,q}(\tau)\,W_{ij,q}(\tau)/p_{ij,q}(\tau)$. This construction is generic and holds for any choice of observation event satisfying the positivity condition. The practical question is therefore not only how to correct bias from censoring, but also which observation event should be used. Different choices can target the same restricted-time win probabilities while using different amounts of censored pairwise information.
%
Given a restricted time horizon $\tau$, \citet{cui2025ipcw} proposed an IPCW estimator for the win estimands. Specifically, for $q=1$, a win is ascertainable whenever the control individual's event is observed ($\delta_{1j}(\tau)=1$) and the treated individual remains at risk beyond that time ($C_i>T_{1j}$), giving $p_{ij,1}(\tau)=G_1(\widetilde{Y}_{1j}(\tau)\mid\bZ_i)\,G_0(\widetilde{Y}_{1j}(\tau)\mid\bZ_j)$. For $q\ge 2$, the higher priority components must also form a genuine tie $\prod_{k<q}\Ind{T_{ki}>\tau,T_{kj}>\tau}$, 
which requires $\widetilde{Y}_{ki}(\tau)=\tau$ for both individuals on every higher priority component. The resulting estimator is given by
\begin{align} \label{eq:cuiA}
\widehat{\pi}_{tq}^{(\text{ipcw})}(\tau)&=\frac{1}{n_1 n_0}\sum_{i:A_i=1}\sum_{j:A_j=0}
\begin{cases}
\displaystyle\frac{\Ind{\widetilde{Y}_{1i}(\tau)>\widetilde{Y}_{1j}(\tau)}\,\delta_{1j}(\tau)}
{\widehat{G}_1\!\bigl(\widetilde{Y}_{1j}(\tau)\mid\bZ_i\bigr)\,\widehat{G}_0\!\bigl(\widetilde{Y}_{1j}(\tau)\mid\bZ_j\bigr)}, & q=1, \\[16pt]
\displaystyle\frac{\left[\prod_{k<q}\Ind{\widetilde{Y}_{ki}(\tau)=\tau,\,\widetilde{Y}_{kj}(\tau)=\tau}\right]
\Ind{\widetilde{Y}_{qi}(\tau)>\widetilde{Y}_{qj}(\tau)}\,\delta_{qj}(\tau)}
{\widehat{G}_1(\tau\mid\bZ_i)\,\widehat{G}_0(\tau\mid\bZ_j)}, & q\ge 2,
\end{cases}
\end{align}
with $\widehat{\pi}_t^{(\text{ipcw})}(\tau)=\sum_{q=1}^Q\widehat{\pi}_{tq}^{(\text{ipcw})}(\tau)$. The loss estimator $\widehat{\pi}_c^{(\text{ipcw})}(\tau)$ is defined symmetrically. The corresponding estimator is denoted as \(\widehat{\WR}^{(\text{ipcw})}(\tau)\), \(\widehat{\WO}^{(\text{ipcw})}(\tau)\), and \(\widehat{\NB}^{(\text{ipcw})}(\tau)\) of WR, WO, and NB, respectively. In principle, the censoring survival function $G_a(t\mid\bm{Z})$ can be estimated nonparametrically by the Kaplan-Meier estimator under completely independent censoring, or by a Cox proportional hazards model under covariate dependent censoring. 
For completeness, we show the consistency of this estimator \eqref{eq:cuiA} in \ref{supp:consistent_cui},.


Despite its simplicity, \eqref{eq:cuiA} uses an all-or-nothing rule for ties on the higher priority components at $q\ge 2$. The genuine tie requirement $\prod_{k<q}\Ind{\widetilde{Y}_{ki}(\tau)=\tau,\;\widetilde{Y}_{kj}(\tau)=\tau}$ discards every pairs with a censoring-induced tie, that is, every pair whose status on the higher priority components is obscured by censoring before \(\tau\), even when the lower priority comparison itself is observed. A pair satisfying $T_{ki}\wedge T_{kj}>\tau$ for all $k<q$ but $C_i\wedge C_j\le\tau$ are excluded from the component $q$ comparison. This IPCW scheme corrects the bias from this exclusion, but at the cost of efficiency due to removing such censoring-induced ties. More concretely, consider \(Q=2\) with death prioritized over hospitalization. Suppose the control individual has an observed hospitalization at time \(t<\tau\) and is subsequently censored at \(u\in(t,\tau)\), whereas the treated individual remains under observation beyond \(t\). Although the hospitalization comparison is observed at time \(t\), the death comparison cannot be confirmed as a genuine tie through the restriction time \(\tau\). The observed data therefore contain a censoring-induced tie on death. The IPCW estimator of \citet{cui2025ipcw} assigns zero contribution to this hospitalization comparison, whereas the observed survival information through \(u\) still provides partial information about the probability that the latent death outcomes would have formed a genuine tie through \(\tau\). 
Exploiting this mechanistic insight, our method aims to convert that partial information into a fractional contribution to the final win statistic (Figure \ref{fig:two_illustration}). 
\begin{figure}
    \centering
    \includegraphics[width=1\linewidth]{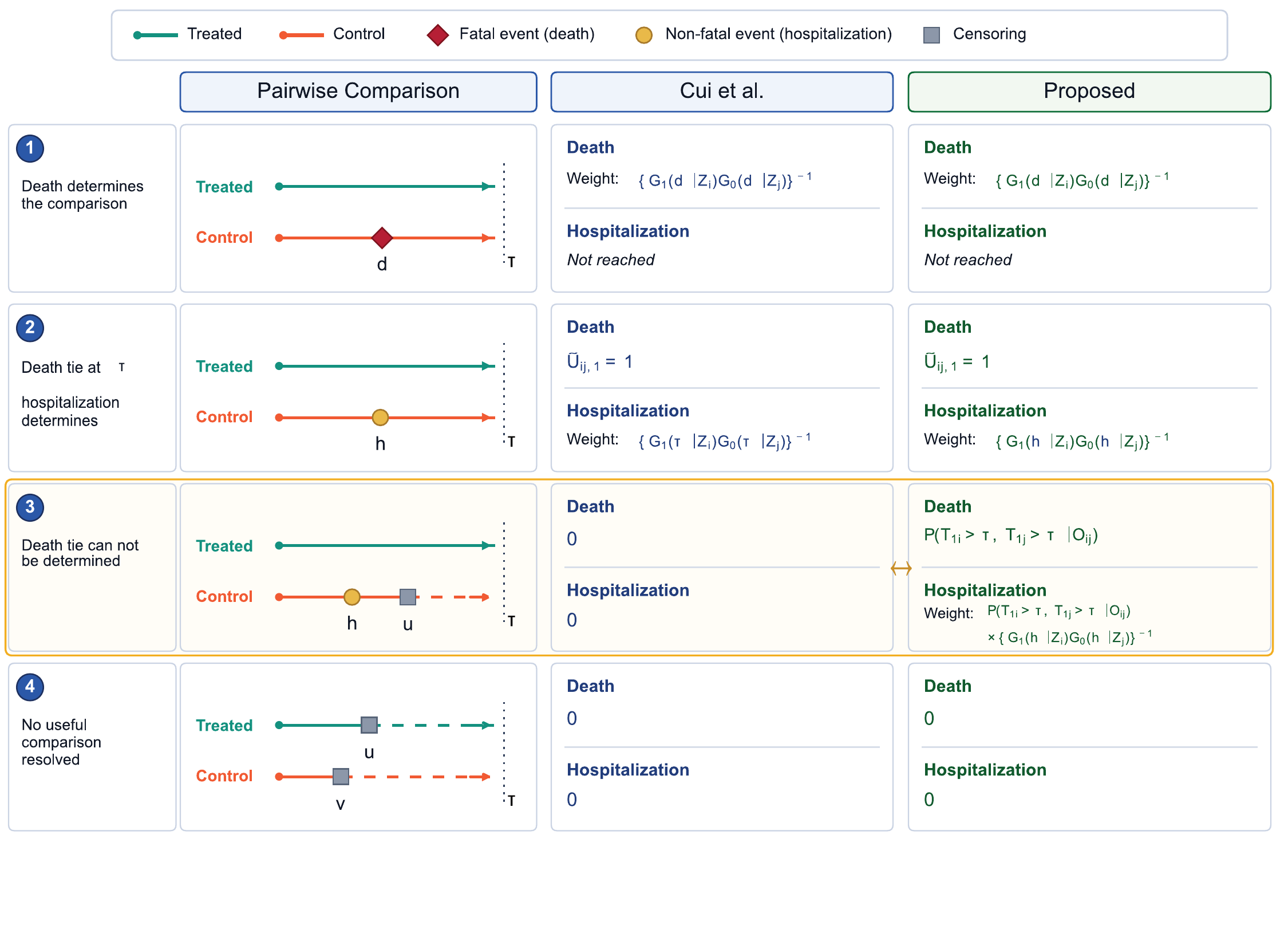}
    \caption{Illustration of pairwise weighting rules under the IPCW estimator of \citet{cui2025ipcw} and the proposed conditional tie-weighting estimator for a two component hierarchical endpoint restricted at time $\tau$, with death as the first priority component and hospitalization as the second priority component. In Scenario 1, death determines the comparison before $\tau$, and both estimators use the same first priority IPCW weight evaluated at the control death time $d$. In Scenario 2, the death comparison is tied through $\tau$, and the pair descends to hospitalization, where the IPCW estimator of \citet{cui2025ipcw} and the proposed estimator weight the hospitalization comparison at $\tau$. In Scenario 3, hospitalization is observed at $h$, but the death tie can not be determined because the control patient is censored at $u<\tau$. The IPCW estimator of \citet{cui2025ipcw} assigns zero weight to the hospitalization comparison, while the proposed estimator replaces the undetermined death tie gate by the conditional probability $P(T_{1i}>\tau,T_{1j}>\tau\mid O_{ij})$ and retains a fractional hospitalization weight. In Scenario 4, no useful comparison is resolved before censoring, so both estimators assign zero contribution. Here $O_{ij}$ denotes the observed pairwise data for patients $i$ and $j$.}
    \label{fig:two_illustration}
\end{figure}


\subsection{Conditional tie weighting} \label{sec:ctw}

To recover information from lower priority comparisons with censoring-induced higher priority ties, we modify the observation event underlying \eqref{eq:cuiA}. Instead of requiring both individuals to be followed event free through $\tau$ on the higher priority components, we weight the observed lower priority comparison at the time it is resolved and incorporate the genuine tie requirement on the higher priority components through a conditional tie probability given the observed data. Specifically, let $\overline{\delta}_{qi}(\tau)=\prod_{k<q}\{1-\delta_{ki}(\tau)\}$ indicate that no higher priority event is observed for individual $i$ by time $\tau$. Under the common censoring assumption, $\overline{\delta}_{qi}(\tau)=1$ implies $\widetilde{Y}_{1i}(\tau)=\cdots=\widetilde{Y}_{(q-1)i}(\tau)=C_i\wedge\tau$, a common value we denoted by \(u_{iq}\). When $\overline{\delta}_{qi}(\tau)=0$, we set $u_{iq}=\tau$ so that the ratios introduced below remain well-defined. Intuitively, the quantity $u_{iq}$ is the last time up to which individual $i$ is known to be event free on all higher priority components. In \eqref{eq:cuiA}, \(u_{iq}\) enters only through the binary indicator $\mathrm{I}\{u_{iq}=\tau\}$, so a pair contributes only when individual \(i\) is observed event free through \(\tau\). In our proposed estimator, \(u_{iq}\) enters continuously, through the conditional probability that individual \(i\) remains event free on the higher priority components from \(u_{iq}\) to \(\tau\). To proceed, for each arm $a\in\{0,1\}$ and $q\ge 2$, we first define the joint survival function
\begin{equation}\label{eq:jointSq}
  \mathcal{S}_{q,a}(u,t\mid \bm{Z})=\Pr\!\bigl(T_{1i}>u,\dots,T_{(q-1)i}>u,\,T_{qi}>t\bigm| A_i=a,\,\bZ_i=\bm{Z}\bigr), \quad 0\le t\le u\le \tau,
\end{equation}
and let $\mathcal{H}_{q,a}(u,t\mid \bm{Z})=-\partial\,\mathcal{S}_{q,a}(u,t\mid \bm{Z})/\partial t$ denote the associated subdensity, so that $\mathcal{H}_{q,a}(u,t\mid\bm{Z})\,\mathrm{d}t$ is the conditional probability that the component $q$ event occurs in $[t,t+\mathrm{d}t)$ while all higher priority components remain event free beyond $u$. The roles of $u$ and $t$ are asymmetric by design, where the higher priority components are required to remain event free through the horizon $u$, while the $q$th component is evaluated at the local time $t$ that resolves the pairwise comparison. Next, the joint survival function and subdensity allow us to define two conditional tie probabilities,
\begin{equation}\label{eq:Rq}
  \mathcal{R}_{q,a}^{>}(u,t\mid \bm{Z})=\frac{\mathcal{S}_{q,a}(\tau,t\mid \bm{Z})}{\mathcal{S}_{q,a}(u,t\mid \bm{Z})},
  \qquad
  \mathcal{R}_{q,a}^{=}(u,t\mid \bm{Z})=\frac{\mathcal{H}_{q,a}(\tau,t\mid \bm{Z})}{\mathcal{H}_{q,a}(u,t\mid \bm{Z})},
\end{equation}
corresponding to the two distinct roles an individual can play in the comparison for the $q$th component. The first ratio gives the conditional probability that all higher priority components remain event free through $\tau$, given that they are event free through $u$ and the component $q$ event has not yet occurred by time $t$. The second is the analogous probability given that the component $q$ event occurs at time $t$. Under the positivity conditions stated in Section \ref{sec:asymptotics}, both ratios lie in $[0,1]$. These two ratios arise from conditioning on the observed information available at the time the component-$q$ comparison is resolved. For a win on component $q$ resolved by the control individual's component-$q$ event at time $t$, the treated individual is known to have no observed higher priority event through $u_{iq}$ and to remain event free on component $q$ beyond $t$. Thus,
\[
\E\!\left[\prod_{k<q}\Ind{T_{ki}>\tau}\mid T_{1i}>u_{iq},\ldots,T_{(q-1)i}>u_{iq},\,T_{qi}>t,\,A_i=1,\,\bZ_i\right]=\mathcal{R}_{q,1}^{>}(u_{iq},t\mid\bZ_i).\]
For the control individual whose component-$q$ event occurs at time $t$, the corresponding conditional expectation conditions on the event time itself,
\[
\E\!\left[\prod_{k<q}\Ind{T_{kj}>\tau}\mid T_{1j}>u_{jq},\ldots,T_{(q-1)j}>u_{jq},\,T_{qj}=t,\,A_j=0,\,\bZ_j\right]=\mathcal{R}_{q,0}^{=}(u_{jq},t\mid\bZ_j).
\]
The product of these two conditional expectations is the conditional probability that the higher priority components would have formed genuine ties through $\tau$, given the observed pairwise information used to resolve the lower priority comparison. This is the quantity that is unavailable under censoring and is replaced probabilistically in the proposed estimator. For $q=1$, conditional tie weighting coincides with the IPCW estimator in \eqref{eq:cuiA}. For $q\ge 2$, the proposed conditional tie weighting estimator is
\begin{align}\label{eq:ctw-win}
  \widehat{\pi}_{tq}^{(\text{ctw})}(\tau)=
  \frac{1}{n_1 n_0}\sum_{i:A_i=1}\sum_{j:A_j=0}&
  \frac{\overline{\delta}_{qi}(\tau)\,\overline{\delta}_{qj}(\tau)\;\widehat{\mathcal{R}}_{q,1}^{>}\!\bigl(u_{iq},\,\widetilde{Y}_{qj}(\tau)\bigm|\bZ_i\bigr)\;\widehat{\mathcal{R}}_{q,0}^{=}\!\bigl(u_{jq},\,\widetilde{Y}_{qj}(\tau)\bigm|\bZ_j\bigr)\;
  }{\widehat{G}_1\!\bigl(\widetilde{Y}_{qj}(\tau)\bigm|\bZ_i\bigr)\;\widehat{G}_0\!\bigl(\widetilde{Y}_{qj}(\tau)\bigm|\bZ_j\bigr)}\times\nonumber \\
& \Ind{\widetilde{Y}_{qi}(\tau)>\widetilde{Y}_{qj}(\tau)}\,\delta_{qj}(\tau)
\end{align}
and the component $q$ loss estimator $\widehat{\pi}_{cq}^{(\text{ctw})}(\tau)$ is defined symmetrically. The overall conditional tie weighting estimators are then given by \(\widehat{\pi}_{t}^{(\text{ctw})}(\tau)=\sum_{q=1}^{Q}\widehat{\pi}_{tq}^{(\text{ctw})}(\tau)\), \(\widehat{\pi}_{c}^{(\text{ctw})}(\tau)=\sum_{q=1}^{Q}\widehat{\pi}_{cq}^{(\text{ctw})}(\tau)\), with $\widehat{\pi}_{t1}^{(\text{ctw})}(\tau)=\widehat{\pi}_{t1}^{(\text{ipcw})}(\tau)$ and $\widehat{\pi}_{c1}^{(\text{ctw})}(\tau)=\widehat{\pi}_{c1}^{(\text{ipcw})}(\tau)$. The corresponding win ratio, net benefit, and win odds are denoted by $\widehat{\WR}^{(\text{ctw})}(\tau)$, $\widehat{\NB}^{(\text{ctw})}(\tau)$, and $\widehat{\WO}^{(\text{ctw})}(\tau)$.

The kernel in \eqref{eq:ctw-win} has a direct pairwise interpretation. The lower priority comparison becomes observable once the control individual's component-$q$ event is observed at $t=\widetilde Y_{qj}(\tau)$ and the treated individual remains under observation beyond that time. The inverse censoring weight therefore corrects for observing the lower priority comparison at its resolving time $t$, rather than requiring both individuals to be followed completely through $\tau$. The factor $\overline\delta_{qi}(\tau)\overline\delta_{qj}(\tau)$ removes pairs for which a higher priority event is already known to have occurred, since such pairs cannot descend to component $q$ under the hierarchy. For the remaining pairs, the genuine tie indicator $\prod_{k<q}\Ind{T_{ki}>\tau,T_{kj}>\tau}$ is not always observable. The proposed estimator replaces this latent indicator by the conditional probability represented by $\mathcal R_{q,1}^{>}(u,t\mid \bm{Z})$ and $\mathcal R_{q,0}^{=}(u,t\mid \bm{Z})$. This probability equals one when the higher priority genuine tie is verified through $\tau$, lies between zero and one when the tie is censoring-induced, and is zeroed out by $\overline\delta_{qi}(\tau)\overline\delta_{qj}(\tau)$ when a higher priority event is observed. Thus the estimator preserves the same restricted-time hierarchy but allows partially observed lower priority comparisons to contribute fractionally. Consistency of \eqref{eq:ctw-win} is proved in \ref{supp:consistent_proposed}. For the first-priority component, the proposed estimator and \eqref{eq:cuiA} are identical because there are no higher priority components. If there is no censoring before \(\tau\), then \(u_{iq}=\tau\) whenever \(\overline\delta_{qi}(\tau)=1\), the conditional tie ratios equal one, and the proposed estimator reduces to the usual uncensored U-statistic. The key statistical distinction is that lower priority comparison ascertainment and higher priority descent eligibility are separated. In the IPCW estimator for genuine time in \eqref{eq:cuiA}, these two requirements are tied to the same horizon: a pair can contribute to component \(q\) only if the component-\(q\) comparison is observed and all higher priority ties can be verified through \(\tau\). The proposed estimator uses a less restrictive observation time for the lower priority comparison, namely the time at which the component-\(q\) ordering is resolved, while retaining the restricted horizon \(\tau\) for the target hierarchy. Thus, the lower priority event ordering is not imputed after censoring; it must already be observed. What is modeled is only the remaining eligibility to descend through the higher priority components. This separation is important because it preserves the clinical priority structure of the estimand while avoiding the unnecessary loss of lower priority information caused by requiring complete higher priority follow-up through \(\tau\). It also clarifies the expected efficiency pattern: the proposed estimator should behave similarly to the IPCW estimator in \eqref{eq:cuiA} when most descending pairs have verified higher priority ties, and should gain precision when many lower priority comparisons are resolved before censoring but their higher priority descent status remains unresolved at \(\tau\).

The proposed estimator is implementable with standard survival-modeling tools. It requires the same censoring model $\widehat G_a$ as in \eqref{eq:cuiA}, together with an arm-specific event-time model used only to evaluate the conditional tie probabilities in \eqref{eq:Rq}. We may use a survival-copula construction because it separates marginal event-risk modeling from the dependence model among endpoint components. For each treatment arm $a\in\{0,1\}$ and component $r\in\{1,\ldots,Q\}$, let \(S_{r,a}(s\mid\bz)=\Pr(T_r>s\mid A=a,\bZ=\bz)\) denote the marginal survival function. For a current priority level $q\ge2$, let $C_{q,a}(\cdot;\theta_{q,a})$ be a $q$-dimensional survival copula. We model the joint survival function in \eqref{eq:jointSq} by
\[
\mathcal{S}_{q,a}(u,t\mid\bz)=C_{q,a}\{S_{1,a}(u\mid\bz),\ldots,S_{q-1,a}(u\mid\bz),S_{q,a}(t\mid\bz);\theta_{q,a}\}.\]
If $\partial_q C_{q,a}$ denotes the partial derivative of the copula with respect to its $q$th argument and $f_{q,a}(t\mid\bz)=-\partial S_{q,a}(t\mid\bz)/\partial t$, then
\[
\mathcal{H}_{q,a}(u,t\mid\bz)=\partial_q C_{q,a}\{S_{1,a}(u\mid\bz),\ldots,S_{q-1,a}(u\mid\bz),S_{q,a}(t\mid\bz);\theta_{q,a}\}f_{q,a}(t\mid\bz).
\]
Substitution into \eqref{eq:Rq} gives
\begin{align}
\mathcal{R}_{q,a}^{>}(u,t\mid\bz)&=\frac{C_{q,a}\{S_{1,a}(\tau\mid\bz),\ldots,S_{q-1,a}(\tau\mid\bz),S_{q,a}(t\mid\bz);\theta_{q,a}\}}{C_{q,a}\{S_{1,a}(u\mid\bz),\ldots,S_{q-1,a}(u\mid\bz),S_{q,a}(t\mid\bz);\theta_{q,a}\}},
\label{eq:copula-Rgt-main}
\\
\mathcal{R}_{q,a}^{=}(u,t\mid\bz)&=\frac{\partial_q C_{q,a}\{S_{1,a}(\tau\mid\bz),\ldots,S_{q-1,a}(\tau\mid\bz),S_{q,a}(t\mid\bz);\theta_{q,a}\}}{\partial_q C_{q,a}\{S_{1,a}(u\mid\bz),\ldots,S_{q-1,a}(u\mid\bz),S_{q,a}(t\mid\bz);\theta_{q,a}\}}.
\label{eq:copula-Req-main}
\end{align}
The marginal density $f_{q,a}(t\mid\bz)$ cancels from \eqref{eq:copula-Req-main}. This cancellation is useful in practice because Cox proportional hazards margins can be used without estimating a smooth event-time density.

In the semiparametric implementation, the component-specific margins can be estimated by separate Cox proportional hazards models within each arm,
\[
\lambda_{r,a}(s\mid\bz)=\lambda_{0r,a}(s)\exp(\bm\beta_{r,a}^{\top}\bz),\qquad
\widehat S_{r,a}(s\mid\bz)=\exp\{-\widehat\Lambda_{0r,a}(s)\exp(\widehat{\bm\beta}_{r,a}^{\top}\bz)\},
\]
where $\widehat{\bm\beta}_{r,a}$ is obtained from partial likelihood and $\widehat\Lambda_{0r,a}$ is the Breslow estimator. After the marginal models are fitted, the copula parameter is estimated by pseudo-likelihood. In the bivariate case, define $x_{ri}=\widetilde Y_{ri}(\tau)$, $d_{ri}=\delta_{ri}(\tau)$, and $\widehat U_{ri}=\widehat S_{r,a}(x_{ri}\mid\bZ_i)$ for $r=1,2$. Let $c_\theta(u,v)=\partial^2 C_\theta(u,v)/\partial u\,\partial v$, and let $C_{\theta,1}$ and $C_{\theta,2}$ denote the first-order partial derivatives. Treating the fitted margins as fixed, the arm-specific copula pseudo log-likelihood is
\begin{align}
\ell_a(\theta)=
\sum_{i:A_i=a}
\bigl\{
&d_{1i}d_{2i}\log c_\theta(\widehat U_{1i},\widehat U_{2i})
+d_{1i}(1-d_{2i})\log C_{\theta,1}(\widehat U_{1i},\widehat U_{2i}) \nonumber\\
&+(1-d_{1i})d_{2i}\log C_{\theta,2}(\widehat U_{1i},\widehat U_{2i})
+(1-d_{1i})(1-d_{2i})\log C_\theta(\widehat U_{1i},\widehat U_{2i})
\bigr\}.
\label{eq:copula-pseudolik-main}
\end{align}
We estimate $\theta_a$ by maximizing \eqref{eq:copula-pseudolik-main} over the admissible parameter space of the selected copula family. Once $\widehat S_{1,a}$, $\widehat S_{2,a}$, and $\widehat\theta_a$ are available, the pairwise conditional tie probabilities are computed by direct substitution into \eqref{eq:copula-Rgt-main} and \eqref{eq:copula-Req-main}. Table \ref{tab:copula-main} summarizes the four bivariate copula families used in the simulations. Each family has closed-form expressions for both $C_\theta(u,v)$ and $C_{\theta,2}(u,v)$, so that the conditional tie probabilities can be evaluated without numerical differentiation.

For $q>2$, the same construction applies with a $q$-dimensional copula. The fitted model only needs to provide $C_{q,a}$ and $\partial_q C_{q,a}$ at the relevant marginal survival values, so the conditional tie win weighting structure of \eqref{eq:ctw-win} is unchanged. In numerical implementation, fitted marginal survival probabilities are truncated to a small interval $[\epsilon,1-\epsilon]$ before evaluating logarithms and copula derivatives. The fitted ratios $\widehat{\mathcal R}_{q,a}^{>}$ and $\widehat{\mathcal R}_{q,a}^{=}$ are conditional probabilities and should lie in $[0,1]$ under a valid fitted joint distribution; small numerical deviations can be truncated back to this range. The identities $\mathcal R_{q,a}^{>}=1$ and $\mathcal R_{q,a}^{=}=1$ when $u=\tau$ provide simple diagnostic checks. When a higher priority event is observed before $\tau$, the factor $\overline\delta_{qi}(\tau)$ in \eqref{eq:ctw-win} is zero, so the assigned value of $u_{iq}$ has no effect on the estimator. These implementation checks are useful in practice because the fitted event time model affects only the conditional descent probabilities, while the observed lower priority event ordering and the weighting structure of \eqref{eq:ctw-win} remain unchanged. Additional copula families, influence-function details, and software implementation are given in \ref{supp:copula} and \ref{supp:ifdetails}.

%
%

\begin{table}[htbp]
\centering
\caption{Bivariate copula families used to model the joint survival function in the proposed estimator. Here $C_{2,\theta}(u,v)=\partial C_\theta(u,v)/\partial v$ is the partial derivative needed for $\mathcal{R}_{2,a}^{=}(u,t\mid\bZ)$.}
\label{tab:copula-main}
\resizebox{0.65\textwidth}{!}{%
\begin{tabular}{lll}
\toprule
Family & Copula $C_\theta(u,v)$ & Partial derivative $C_{2,\theta}(u,v)$ \\
\midrule
Gumbel--Hougaard
&\shortstack[l]{
$\exp\{-A^{1/\theta}\}$, $\theta\ge 1$\\
$A=(-\log u)^\theta+(-\log v)^\theta$}
&
$C_\theta(u,v)A^{1/\theta-1}(-\log v)^{\theta-1}/v$
\\[3mm]
Clayton
&\shortstack[l]{
$B^{-1/\theta}$, $\theta>0$\\
$B=u^{-\theta}+v^{-\theta}-1$}
&
$v^{-\theta-1}B^{-1/\theta-1}$
\\[3mm]
Frank
&
\shortstack[l]{$-\theta^{-1}\log(B)$, $\theta\ne0$\\
$B=1+\dfrac{(e^{-\theta u}-1)(e^{-\theta v}-1)}{e^{-\theta}-1}$}
&
$\dfrac{e^{-\theta v}(e^{-\theta u}-1)}{(e^{-\theta}-1)B}$
\\[4mm]
Plackett
&\shortstack[l]{
$\dfrac{B-D^{1/2}}{2(\psi-1)}$, $\psi>0$, $\psi\ne1$\\
$B=1+(\psi-1)(u+v)$\\
$D=B^2-4\psi(\psi-1)uv$}
&
$\dfrac{1}{2}\left\{1-\dfrac{B-2\psi u}{D^{1/2}}\right\}$
\\
\bottomrule
\end{tabular}%
}
\end{table}

\section{Asymptotic properties}\label{sec:asymptotics}

Both the estimator of \eqref{eq:cuiA} and the proposed conditional tie weighting estimator \eqref{eq:ctw-win} are two sample U-statistics with estimated nuisance functions, and in what follows, we treat them in a unified way. Let $\nu\in\{\text{ipcw},\text{ctw}\}$ denote the two estimators and write
\[
  \widehat{\bpi}^{(\nu)}(\tau)
  =\bigl\{\widehat{\pi}_t^{(\nu)}(\tau),\,\widehat{\pi}_c^{(\nu)}(\tau)\bigr\}^\top
  =(n_1n_0)^{-1}\sum_{i=1}^{n_1}\sum_{j=1}^{n_0}\bm{H}^{(\nu)}(O_{1i},O_{0j};\widehat{\eta}_\nu(\cdot)),
\]
where $O_{ai}=\{\bYobs_i(\tau),\bDelta_i(\tau),\bZ_i\}$ collects the observed record for individual $i$ in arm $a$, with $\bYobs_i(\tau)=\{\widetilde{Y}_{1i}(\tau),\ldots,\widetilde{Y}_{Qi}(\tau)\}$ the vector of component wise observed times and $\bDelta_i(\tau)=\{\delta_{1i}(\tau),\ldots,\delta_{Qi}(\tau)\}$ the corresponding vector of event indicators. The comparison kernel $\bm{H}^{(\nu)}(\cdot)=\{K^{(\nu)}(\cdot),L^{(\nu)}(\cdot)\}^\top$ collects the pairwise win kernel $K^{(\nu)}$ and loss kernel $L^{(\nu)}$. The nuisance functions are $\eta_{\text{ipcw}}(\cdot)=\{G_1(\cdot),G_0(\cdot)\}$ for the IPCW estimator and \(\eta_{\text{ctw}}(\cdot)= \{G_1(\cdot),G_0(\cdot),\{\mathcal{S}_{q,a}(\cdot),\mathcal{H}_{q,a}(\cdot)\}_{q\ge2,\,a=0,1}\}\) for the conditional tie-weighting estimator, with $\eta_{\nu0}(\cdot)$ denoting the corresponding true nuisance functions.We assume the following regularity conditions.
\begin{enumerate}
\renewcommand{\theenumi}{A\arabic{enumi}}%
\renewcommand{\labelenumi}{(\theenumi)}%
\item \label{ass:A1} The observations $\{O_{1i}\}_{i=1}^{n_1}$ and $\{O_{0j}\}_{j=1}^{n_0}$ are i.i.d.\ within each arm, the two arms are independent, and $n_1/n\to\rho\in(0,1)$ as $n=n_1+n_0\to\infty$.
\item \label{ass:A2} $C_i\perp\bT_i\mid(A_i,\bZ_i)$, and there exists $\epsilon_G>0$ such that $G_a(t\mid\bZ)\ge\epsilon_G$ a.s.\ for $a\in\{0,1\}$ and $0\le t\le\tau$. For the proposed estimator, $\mathcal{S}_{q,a}(u,t\mid\bZ)\ge\epsilon_S>0$ and $\mathcal{H}_{q,a}(u,t\mid\bZ)\ge\epsilon_H>0$ a.s.\ on $0\le t\le u\le\tau$ for $q\ge2$ and $a=0,1$.
\item \label{ass:A3} $\E\bigl[\bigl\|\bm{H}^{(\nu)}(O_1,O_0;\eta_{\nu0}(\cdot))\bigr\|^2\bigr]<\infty$.
\item \label{ass:A4} For each $a\in\{0,1\}$, $\sqrt{n_a}\bigl\{\widehat{\eta}_{a,\nu}(\cdot)-\eta_{a,\nu0}(\cdot)\bigr\}=n_a^{-1/2}\sum_{k=1}^{n_a}\kappa_{a,\nu}(O_{ak})+o_p(1)$ for some mean zero, square integrable influence function $\kappa_{a,\nu}(\cdot)$ of the nuisance estimator.
\item \label{ass:A5} $\Psi_\nu\{\eta_\nu(\cdot)\}=\E\{\bm{H}^{(\nu)}(O_1,O_0;\eta_\nu(\cdot))\}$ is Fr\'{e}chet differentiable at $\eta_{\nu0}(\cdot)$ with derivative $\Gamma_\nu(\cdot)$, and $\Psi_\nu\{\widehat{\eta}_\nu(\cdot)\}-\Psi_\nu\{\eta_{\nu0}(\cdot)\}-\Gamma_\nu\{\widehat{\eta}_\nu(\cdot)-\eta_{\nu0}(\cdot)\}=o_p(n^{-1/2})$.
\end{enumerate}

Condition (\ref{ass:A1}) is the usual independent and identical distribution assumption to describe the sampling model. Condition (\ref{ass:A2}) imposes covariate-dependent censoring together with the positivity assumption required for weighting. The positivity on $G_a(\cdot)$ bounds the inverse probability of censoring weights by $\epsilon_G^{-2}$, and the additional positivity on $\mathcal{S}_{q,a}(\cdot)$ and $\mathcal{H}_{q,a}(\cdot)$ introduced by the proposed weighting ensures the conditional tie ratios in \eqref{eq:Rq} are well defined and lie in $[0,1]$, making (\ref{ass:A3}) bounded. Condition (\ref{ass:A4}) holds for the Kaplan-Meier estimator under independent censoring, and for a correctly specified Cox model under covariate-dependent censoring. Condition \ref{ass:A5} assumes that the plug-in remainder from the smooth nuisance functional is asymptotically negligible. Under these mild regularity conditions, we then obtain the following result that details the asymptotic property of the restricted-time win estimator.

\begin{theorem}\label{thm:asymp}
Under regularity conditions \ref{ass:A1}--\ref{ass:A5}, for each $\nu\in\{\text{ipcw},\text{ctw}\}$,
\begin{equation*}
  \sqrt{n}\bigl\{\widehat{\bpi}^{(\nu)}(\tau)-\bpi(\tau)\bigr\}=\frac{\sqrt{n}}{n_1}\sum_{i=1}^{n_1}\bm{\psi}_{1,\nu}(O_{1i})+\frac{\sqrt{n}}{n_0}\sum_{j=1}^{n_0}\bm{\psi}_{0,\nu}(O_{0j})+o_p(1),
\end{equation*}
where the individual-level influence function \(\bm{\psi}_{a,\nu}(O_{ai})=\bm{\xi}_{a,\nu}(O_{ai})+\bm{r}_{a,\nu}(O_{ai})\) decomposes the contribution of individual \(i\) in arm \(a\) into the first-order Hoeffding projection \(\bm{\xi}_{1,\nu}(o)=\E\{\bm{H}^{(\nu)}(o,O_0;\eta_{\nu0})\}-\bpi(\tau)\) and \(\bm{\xi}_{0,\nu}(o)=\E\{\bm{H}^{(\nu)}(O_1,o;\eta_{\nu0})\}-\bpi(\tau) \), and the nuisance-estimation correction \(\bm{r}_{a,\nu}(O_{ai})=\Gamma_{a,\nu}\{\kappa_{a,\nu}(O_{ai})\}\). Thus, 
\[
\sqrt{n}\{\widehat{\bpi}^{(\nu)}(\tau)-\bpi(\tau)\}
\xrightarrow{d}
\mathcal{N}(\bm{0},\bOmega_\nu),
\]
where \(\bOmega_\nu=\rho^{-1}\Var\{\bm{\psi}_{1,\nu}(O_1)\}+(1-\rho)^{-1}\Var\{\bm{\psi}_{0,\nu}(O_0)\}\).
\end{theorem}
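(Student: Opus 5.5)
The plan is the standard three-term decomposition for a two-sample U-statistic with plug-in nuisance estimates. Write $U_n(\eta)=(n_1n_0)^{-1}\sum_{i,j}\bm{H}^{(\nu)}(O_{1i},O_{0j};\eta)$ and split
\[
\widehat{\bpi}^{(\nu)}(\tau)-\bpi(\tau)
=\bigl\{U_n(\eta_{\nu0})-\bpi(\tau)\bigr\}
+\bigl[\Psi_\nu(\widehat\eta_\nu)-\Psi_\nu(\eta_{\nu0})\bigr]
+\bigl[\{U_n(\widehat\eta_\nu)-\Psi_\nu(\widehat\eta_\nu)\}-\{U_n(\eta_{\nu0})-\Psi_\nu(\eta_{\nu0})\}\bigr].
\]
Before using this, I need $\Psi_\nu(\eta_{\nu0})=\bpi(\tau)$. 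This is the identification result: for $\nu=$ ipcw it is the consistency argument in \ref{supp:consistent_cui}, and for $\nu=$ ctw it is the one in \ref{supp:consistent_proposed}. Both rest on conditional independent censoring (A2) and on the conditional-expectation identities defining $\mathcal{R}^{>}_{q,a}$ and $\mathcal{R}^{=}_{q,a}$. I take these as given.

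\emph{First term.} Apply the Hoeffding decomposition of a two-sample U-statistic. By (A1) and the finite second moment in (A3), the degenerate remainder has variance $O\{(n_1n_0)^{-1}\}$. Hence
\[
\sqrt{n}\{U_n(\eta_{\nu0})-\bpi\}=\tfrac{\sqrt n}{n_1}\sum_i\bm\xi_{1,\nu}(O_{1i})+\tfrac{\sqrt n}{n_0}\sum_j\bm\xi_{0,\nu}(O_{0j})+o_p(1).
\]

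\emph{Second term.} By (A5), this term equals $\Gamma_\nu(\widehat\eta_\nu-\eta_{\nu0})+o_p(n^{-1/2})$. The nuisance splits by arm, and $\Gamma_\nu$ is linear, so it decomposes as $\Gamma_{1,\nu}+\Gamma_{0,\nu}$. Inserting the asymptotically linear expansions in (A4) and using continuity of the bounded linear map $\Gamma_{a,\nu}$ gives $n_a^{-1}\sum_k\Gamma_{a,\nu}\{\kappa_{a,\nu}(O_{ak})\}+o_p(n^{-1/2})$. This yields the $\bm r_{a,\nu}$ terms, with the factor $\sqrt{n}/n_a$ coming from $n_a/n\to\rho$ or $1-\rho$.

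\emph{Third term (main obstacle).} I must show this empirical-process term is $o_p(n^{-1/2})$; the stated conditions do not address it explicitly. I would argue via stochastic equicontinuity of the U-process indexed by $\eta$ in a neighbourhood of $\eta_{\nu0}$.

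\begin{itemize}
\item By (A2), the kernel $\bm H^{(\nu)}(\cdot;\eta)$ is a product of indicators, of $1/G$ factors bounded by $\epsilon_G^{-2}$, and of ratios of $\mathcal S$ and $\mathcal H$ bounded away from zero. It is therefore uniformly bounded and Lipschitz in $\eta$ in sup-norm on that neighbourhood.
\item If the nuisance classes (Cox/Kaplan--Meier survival curves, and copula-based $\mathcal{S},\mathcal{H}$ with finite-dimensional parameters and monotone baseline functions) are Donsker or have finite uniform entropy, then the projected empirical process of the U-process is equicontinuous. Results of Nolan--Pollard / Arcones--Giné type give this for the degenerate part.
\item Combined with $\|\widehat\eta_\nu-\eta_{\nu0}\|_\infty=O_p(n^{-1/2})$ from (A4), the term is $o_p(n^{-1/2})$.
\end{itemize}

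If I wanted to avoid the entropy conditions, I would instead use sample splitting or cross-fitting, conditioning on $\widehat\eta$ so that a second-moment bound suffices. In the paper's framing, though, I would read (A5) together with the smoothness above as implicitly covering this remainder and state the Donsker-type condition explicitly.

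\emph{Conclusion.} Summing the three pieces gives the stated linear representation with $\bm\psi_{a,\nu}=\bm\xi_{a,\nu}+\bm r_{a,\nu}$. Each $\bm\psi_{a,\nu}$ is mean zero and square integrable by (A3)–(A4), and the two arm sums are independent sums of i.i.d.\ terms. Applying the Lindeberg–Lévy CLT in each arm, with $n/n_1\to\rho^{-1}$ and $n/n_0\to(1-\rho)^{-1}$, and using Slutsky, the limit is $\mathcal N(\bm 0,\bOmega_\nu)$ with the displayed covariance.
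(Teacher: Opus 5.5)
Your proposal is correct and follows the paper's proof. The steps match: identification via the appendix unbiasedness results, then the oracle Hoeffding decomposition with a degenerate remainder of order $O_p\{(n_1n_0)^{-1/2}\}$, then linearization of the nuisance effect through the Fr\'echet derivative in (A5) and the expansions in (A4) to obtain $\Gamma_{a,\nu}\{\kappa_{a,\nu}(O_{ai})\}$, and finally an arm-wise CLT with Slutsky. The empirical-process term you isolate is exactly the step the paper handles by attributing ``stochastic equicontinuity'' to (A5), although (A5) as stated only supplies differentiability and the plug-in remainder. Making the Donsker/U-process (or cross-fitting) condition explicit is therefore a legitimate tightening of the same argument, not a different route.
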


The proof of Theorem \ref{thm:asymp} is provided in \ref{supp:asymp}. The Hoeffding decomposition \citep{lee2019u} reduces the oracle U-statistic with $\eta_{\nu0}(\cdot)$ in place of $\widehat{\eta}_\nu(\cdot)$ to the individual-level projection $\bm{\xi}_{a,\nu}(\cdot)$ plus a degenerate second-order remainder, and the differentiability condition in \ref{ass:A5} together with \ref{ass:A4} converts the nuisance plug-in error into the linear correction $\bm{r}_{a,\nu}(O_{ai})=\Gamma_{a,\nu}\{\kappa_{a,\nu}(O_{ai})\}$. Because each individual contributes to many treated-control pairs, the relevant object for variance estimation is this individual-level influence function rather than the variance of the $n_1n_0$ pairwise terms. For the IPCW estimator, $\eta_{\text{ipcw}}(\cdot)$ contains only the censoring survival functions, so the nuisance correction is driven entirely by estimation of $G_a(\cdot)$: \(\bm r_{a,\text{ipcw}}(\cdot)= \bm r^{G}_{a,\text{ipcw}}(\cdot)\). For the conditional tie-weighting estimator, $\eta_{\text{ctw}}(\cdot)$ contains both the censoring survival functions and the joint event-time quantities used to estimate the conditional tie probabilities. The nuisance correction therefore splits as \(\bm r_{a,\text{ctw}}(\cdot)=\bm r^{G}_{a,\text{ctw}}(\cdot)+\bm r^{E}_{a,\text{ctw}}(\cdot)\), where $\bm r^{G}_{a,\text{ctw}}(\cdot)$ comes from estimating $G_a(\cdot)$ and $\bm r^{E}_{a,\text{ctw}}(\cdot)$ comes from estimating $\{\mathcal{S}_{q,a}(\cdot),\mathcal{H}_{q,a}(\cdot)\}_{q\ge2}$. This decomposition is useful for understanding the cost of recovering partially observed lower priority information. The term $\bm{\xi}_{a,\nu}$ is the variability that would remain if all nuisance functions were known; the censoring correction is common to both estimators; and the event-model correction appears only for conditional tie weighting. Thus, the conditional tie weighting estimator trades additional nuisance-model variability for a larger set of contributing lower priority comparisons. This decomposition is useful for understanding the cost of recovering partially observed lower priority information. The term \(\bm{\xi}_{a,\nu}\) is the variability that would remain even if all nuisance functions were known; it reflects the fact that each enrolled individual is paired with many individuals in the opposite arm. The censoring correction \(\bm r^G_{a,\nu}\) is common to both estimators and reflects uncertainty in the inverse probability weights. The additional term \(\bm r^E_{a,I}\) appears only for the proposed estimator and represents the price paid for estimating the conditional tie probabilities. This term is driven by lower priority comparisons whose contribution depends on the fitted higher priority descent probability. It is absent for the first priority component and becomes small when most lower priority comparisons have verified ties from higher priority. Thus, the proposed estimator trades additional nuisance-model variability for a larger set of contributing pairwise comparisons.

For implementation of the variance estimator, it is useful to write the empirical influence-function components explicitly. Let $\widehat K_{ij,q}^{(\nu)}$ and $\widehat L_{ij,q}^{(\nu)}$ denote the fitted component-$q$ win and loss kernels for estimator $\nu\in\{\text{ipcw},\text{ctw}\}$, so that \(\widehat K_{ij}^{(\nu)}=\sum_{q=1}^{Q}\widehat K_{ij,q}^{(\nu)}\), \(\widehat L_{ij}^{(\nu)}= \sum_{q=1}^{Q}\widehat L_{ij,q}^{(\nu)}\). The empirical Hoeffding projections are computed by pairing each individual with all individuals in the opposite arm:
\[
\widehat{\bm\xi}_{1,\nu,i}=
\begin{pmatrix}
n_0^{-1}\sum_{j=1}^{n_0}\widehat K_{ij}^{(\nu)}-\widehat\pi_t^{(\nu)}(\tau)\\[2mm]
n_0^{-1}\sum_{j=1}^{n_0}\widehat L_{ij}^{(\nu)}-\widehat\pi_c^{(\nu)}(\tau)
\end{pmatrix},
\qquad
\widehat{\bm\xi}_{0,\nu,j}=
\begin{pmatrix}
n_1^{-1}\sum_{i=1}^{n_1}\widehat K_{ij}^{(\nu)}-\widehat\pi_t^{(\nu)}(\tau)\\[2mm]
n_1^{-1}\sum_{i=1}^{n_1}\widehat L_{ij}^{(\nu)}-\widehat\pi_c^{(\nu)}(\tau)
\end{pmatrix}.
\]
These terms capture the two-sample U-statistic variation that would remain if the nuisance functions were known. The censoring-model correction has the same form for the two estimators, although the kernels and censoring time arguments differ. Let $\widehat s_{1,ij,q}^{W}$ and $\widehat s_{0,ij,q}^{W}$ be the time arguments at which $\widehat G_1$ and $\widehat G_0$ are evaluated in the component-$q$ win kernel, and define $\widehat s_{1,ij,q}^{L}$ and $\widehat s_{0,ij,q}^{L}$ analogously for the loss kernel. For example, for a lower-priority treated win under CTW, the censoring weights are evaluated at the observed resolving time $\widetilde Y_{qj}(\tau)$, whereas under the IPCW estimator in \eqref{eq:cuiA}, lower-priority weights are evaluated at $\tau$. Let $\widehat\kappa_{ai,G}(s\mid\bz)$ denote the estimated influence function of $\widehat G_a(s\mid\bz)$ for subject $i$ in arm $a$. The treatment-arm censoring correction is
\[
\widehat{\bm r}^{G}_{1,\nu,i}=-\frac{1}{n_0}\sum_{j=1}^{n_0}\sum_{q=1}^{Q}
\begin{pmatrix}
\widehat K_{ij,q}^{(\nu)}
\dfrac{\widehat\kappa_{1i,G}(\widehat s_{1,ij,q}^{W}\mid\bZ_i)}
{\widehat G_1(\widehat s_{1,ij,q}^{W}\mid\bZ_i)}
\\[4mm]
\widehat L_{ij,q}^{(\nu)}
\dfrac{\widehat\kappa_{1i,G}(\widehat s_{1,ij,q}^{L}\mid\bZ_i)}
{\widehat G_1(\widehat s_{1,ij,q}^{L}\mid\bZ_i)}
\end{pmatrix},
\]
and the control-arm censoring correction is
\[
\widehat{\bm r}^{G}_{0,\nu,j}=-\frac{1}{n_1}\sum_{i=1}^{n_1}\sum_{q=1}^{Q}
\begin{pmatrix}
\widehat K_{ij,q}^{(\nu)}
\dfrac{\widehat\kappa_{0j,G}(\widehat s_{0,ij,q}^{W}\mid\bZ_j)}
{\widehat G_0(\widehat s_{0,ij,q}^{W}\mid\bZ_j)}
\\[4mm]
\widehat L_{ij,q}^{(\nu)}
\dfrac{\widehat\kappa_{0j,G}(\widehat s_{0,ij,q}^{L}\mid\bZ_j)}
{\widehat G_0(\widehat s_{0,ij,q}^{L}\mid\bZ_j)}
\end{pmatrix}.
\]
For the conditional tie weighting estimator, an additional event-model correction appears because the lower-priority kernels depend on the fitted conditional tie probabilities. For $q\ge2$, define
\[
\widehat\Delta^{>}_{ai,q}(u,t)=\frac{\widehat\kappa_{ai,S,q}(\tau,t,\bZ_i)}
{\widehat{\mathcal S}_{q,a}(\tau,t\mid\bZ_i)}-
\frac{\widehat\kappa_{ai,S,q}(u,t,\bZ_i)}{\widehat{\mathcal S}_{q,a}(u,t\mid\bZ_i)},
\]
and
\[
\widehat\Delta^{=}_{ai,q}(u,t)=\frac{\widehat\kappa_{ai,H,q}(\tau,t,\bZ_i)}{\widehat{\mathcal H}_{q,a}(\tau,t\mid\bZ_i)}-\frac{\widehat\kappa_{ai,H,q}(u,t,\bZ_i)}{\widehat{\mathcal H}_{q,a}(u,t\mid\bZ_i)},
\]
where $\widehat\kappa_{ai,S,q}$ and $\widehat\kappa_{ai,H,q}$ are the estimated influence functions of the fitted joint survival function $\widehat{\mathcal S}_{q,a}$ and subdensity $\widehat{\mathcal H}_{q,a}$. These two quantities are the empirical log-ratio perturbations of $\mathcal R_{q,a}^{>}$ and $\mathcal R_{q,a}^{=}$. The treatment-arm event model correction for conditional tie weighting is
\[
\widehat{\bm r}^{E}_{1,\text{ctw},i}=\frac{1}{n_0}\sum_{j=1}^{n_0}\sum_{q=2}^{Q}
\begin{pmatrix}
\widehat K_{ij,q}^{(\text{ctw})}
\widehat\Delta^{>}_{1i,q}\{u_{iq},\widetilde Y_{qj}(\tau)\}
\\[2mm]
\widehat L_{ij,q}^{(\text{ctw})}
\widehat\Delta^{=}_{1i,q}\{u_{iq},\widetilde Y_{qi}(\tau)\}
\end{pmatrix},
\]
and the control-arm event-model correction is
\[
\widehat{\bm r}^{E}_{0,\text{ctw},j}=\frac{1}{n_1}\sum_{i=1}^{n_1}\sum_{q=2}^{Q}
\begin{pmatrix}
\widehat K_{ij,q}^{(\text{ctw})}
\widehat\Delta^{=}_{0j,q}\{u_{jq},\widetilde Y_{qj}(\tau)\}
\\[2mm]
\widehat L_{ij,q}^{(\text{ctw})}
\widehat\Delta^{>}_{0j,q}\{u_{jq},\widetilde Y_{qi}(\tau)\}
\end{pmatrix}.
\]
The roles of $>$ and $=$ are reversed between the win and loss coordinates because, for a treated win, the control individual has the component-$q$ event at the resolving time, whereas for a treated loss, the treated individual has the component-$q$ event at the resolving time. Combining these components, the empirical influence functions used for sandwich variance estimation are \(\widehat{\bm\psi}_{a,\text{ipcw},i}=\widehat{\bm\xi}_{a,\text{ipcw},i}+\widehat{\bm r}^{G}_{a,\text{ipcw},i}\), \(\widehat{\bm\psi}_{a,\text{ctw},i}=\widehat{\bm\xi}_{a,\text{ctw},i}+
\widehat{\bm r}^{G}_{a,\text{ctw},i}+\widehat{\bm r}^{E}_{a,\text{ctw},i}\). Equivalently, in win-loss coordinate form,
\[
\widehat{\bm\psi}_{a,\text{ctw},i}=
\begin{pmatrix}
\widehat\xi^{W}_{a,\text{ctw},i}
+
\widehat r^{G,W}_{a,\text{ctw},i}
+
\widehat r^{E,W}_{a,\text{ctw},i}
\\[1mm]
\widehat\xi^{L}_{a,\text{ctw},i}
+
\widehat r^{G,L}_{a,\text{ctw},i}
+
\widehat r^{E,L}_{a,\text{ctw},i}
\end{pmatrix}.
\]
The detailed expression are provided in \ref{supp:asymp}.
 
Since the win ratio, net benefit, and win odds are smooth functions of $(\pi_t,\pi_c)$, we use the delta method to obtain their asymptotic distribution, as below.

\begin{corollary}\label{cor:delta}
Under the regularity conditions assumed in Theorem \ref{thm:asymp}, if $\pi_t(\tau)>0$, $\pi_c(\tau)>0$, and $|\pi_t(\tau)-\pi_c(\tau)|<1$, then for each $\nu\in\{\text{ipcw},\text{ctw}\}$, we have
\begin{align*}
  \sqrt{n}\bigl\{\log\widehat{\WR}^{(\nu)}(\tau)-\log\WR(\tau)\bigr\}
  &\xrightarrow{d}\mathcal{N}(0,\bm{g}_{\WR}^\top\bOmega_\nu\bm{g}_{\WR}),\\
  \sqrt{n}\bigl\{\widehat{\NB}^{(\nu)}(\tau)-\NB(\tau)\bigr\}
  &\xrightarrow{d}\mathcal{N}(0,\bm{g}_{\NB}^\top\bOmega_\nu\bm{g}_{\NB}),\\
  \sqrt{n}\bigl\{\log\widehat{\WO}^{(\nu)}(\tau)-\log\WO(\tau)\bigr\}
  &\xrightarrow{d}\mathcal{N}(0,\bm{g}_{\WO}^\top\bOmega_\nu\bm{g}_{\WO}),
\end{align*}
where $\bm{g}_{\WR}=\{1/\pi_t(\tau),-1/\pi_c(\tau)\}^\top$, $\bm{g}_{\NB}=(1,-1)^\top$, and $\bm{g}_{\WO}=2\{1-(\pi_t(\tau)-\pi_c(\tau))^2\}^{-1}(1,-1)^\top$.
\end{corollary}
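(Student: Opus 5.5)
We obtain the corollary from Theorem~\ref{thm:asymp} by the multivariate delta method, applied separately to three smooth maps $f:\mathbb{R}^2\to\mathbb{R}$ of $\bpi(\tau)=\{\pi_t(\tau),\pi_c(\tau)\}^\top$:
\[
f_{\WR}(x,y)=\log x-\log y,\qquad f_{\NB}(x,y)=x-y,\qquad f_{\WO}(x,y)=\log(1+x-y)-\log(1-x+y).
\]
Theorem~\ref{thm:asymp} gives $\sqrt{n}\{\widehat{\bpi}^{(\nu)}(\tau)-\bpi(\tau)\}\xrightarrow{d}\mathcal N(\bm 0,\bOmega_\nu)$. Once each $f$ is shown to be continuously differentiable in a neighbourhood of $\bpi(\tau)$ with gradient $\bm g$, the delta method yields $\sqrt{n}\{f(\widehat{\bpi}^{(\nu)})-f(\bpi)\}\xrightarrow{d}\mathcal N(0,\bm g^\top\bOmega_\nu\bm g)$.

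The first step is to rewrite the win odds as a function of $(\pi_t,\pi_c)$ alone. Substituting $\pi_u=1-\pi_t-\pi_c$ gives
\[
\WO=\frac{\pi_t+\tfrac12\pi_u}{\pi_c+\tfrac12\pi_u}=\frac{1+\pi_t-\pi_c}{1-\pi_t+\pi_c},
\]
so $\log\WO=f_{\WO}(\pi_t,\pi_c)$. We apply the same algebraic identity to the estimator, defining $\widehat{\pi}_u^{(\nu)}=1-\widehat\pi_t^{(\nu)}-\widehat\pi_c^{(\nu)}$.

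The second step is to compute the gradients.
\begin{itemize}
\item For the win ratio, $\nabla f_{\WR}=(1/\pi_t,-1/\pi_c)^\top$.
\item For the net benefit, $\nabla f_{\NB}=(1,-1)^\top$.
\item For the win odds, write $d=\pi_t-\pi_c$. Then $\partial_d\{\log(1+d)-\log(1-d)\}=(1+d)^{-1}+(1-d)^{-1}=2/(1-d^2)$. By the chain rule, $\nabla f_{\WO}=2\{1-(\pi_t-\pi_c)^2\}^{-1}(1,-1)^\top$.
\end{itemize}
These match $\bm g_{\WR}$, $\bm g_{\NB}$ and $\bm g_{\WO}$ in the statement.

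The third step, which is the only real point of care, is to verify the domain conditions. The log transforms are defined and smooth at $\bpi$ only when $\pi_t,\pi_c>0$ and $1\pm(\pi_t-\pi_c)>0$, which is exactly what the assumptions $\pi_t(\tau)>0$, $\pi_c(\tau)>0$ and $|\pi_t(\tau)-\pi_c(\tau)|<1$ guarantee. These conditions define an open set on which all three maps are $C^1$. Theorem~\ref{thm:asymp} implies $\widehat{\bpi}^{(\nu)}\to_p\bpi$, so with probability tending to one the estimator lies in this open set, and the transformed estimators are well defined there. Estimates falling outside the set occur on an event of vanishing probability and do not affect the limit. With this in hand, the standard delta method (a Taylor expansion with the mean-value remainder controlled by continuity of the gradient and $\sqrt n$-tightness) completes the proof for both $\nu=\text{ipcw}$ and $\nu=\text{ctw}$.
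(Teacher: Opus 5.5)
Your proposal is correct and matches the paper's proof: the paper also rewrites $\log\WO$ as $\log\{(1+x-y)/(1-x+y)\}$, computes the same three gradients, and applies the multivariate delta method to Theorem~\ref{thm:asymp}. Your additional check that $\widehat{\bpi}^{(\nu)}(\tau)$ lies in the open domain of the three maps with probability tending to one is a point the paper leaves implicit.
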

The sandwich covariance estimator is obtained by taking the empirical second moment of the estimated individual-level influence functions:
\begin{equation} \label{eq:Omega-hat}
   \widehat{\bOmega}_\nu
  =\frac{n}{n_1^2}\sum_{i=1}^{n_1}\widehat{\bm{\psi}}_{1,\nu,i}\widehat{\bm{\psi}}_{1,\nu,i}^\top
   +\frac{n}{n_0^2}\sum_{j=1}^{n_0}\widehat{\bm{\psi}}_{0,\nu,j}\widehat{\bm{\psi}}_{0,\nu,j}^\top.   
\end{equation}
Under the regularity conditions of Theorem \ref{thm:asymp} and consistency of the plug-in nuisance influence functions, $\widehat{\bOmega}_\nu$ consistently estimates $\bOmega_\nu$. For each win summary estimator in Corollary \ref{cor:delta}, the estimated standard error is $\{\bm{g}_h^\top\widehat{\bOmega}_\nu\bm{g}_h/n\}^{1/2}$, and Wald intervals follow under normal approximation. Because $\WR(\tau)$ and $\WO(\tau)$ are positive and right skewed in finite samples, we construct their intervals on the log scale before exponentiation; $\NB(\tau)$ is reported directly on its natural scale.

\section{Simulation studies} \label{sec:simulation}

We conducted extensive simulation studies to assess the finite sample performance of the proposed estimator and to compare it with the IPCW estimator of \citet{cui2025ipcw} under prioritized composite endpoints. The simulation includes two parts. The first part considers different level of nuisance model mis-specification, where data are generated from a Weibull proportional hazards marginal model linked by a Gumbel copula, and the fitted nuisance models are varied to assess the impact of misspecifying the censoring model, the event time model, or both. The second part is a copula sensitivity study, where the marginal event time model is assumed to be correctly specified by Cox proportional hazards models but the data generating and working copula families are varied. 

For the first part, we focus on a two component prioritized endpoint with death as the first priority event and a serious nonfatal event as the second priority event. We also conducted parallel simulations under a three component prioritized endpoint and those results are reported in Web Appendix \ref{supp:sim3}. In each simulated data set, each treatment arm includes $n_a=400$ individuals. We generated three baseline covariates independently as $Z_{1i}\sim\mathrm{Bernoulli}(0.5)$, $Z_{2i}\sim\mathrm{Uniform}(0,1)$, and $Z_{3i}\sim\mathrm{Bernoulli}(0.4)$. Conditional on treatment assignment $A_i$ and covariates $\bZ_i=(Z_{1i},Z_{2i},Z_{3i})^\top$, the latent event times followed Weibull proportional hazards models; that is, for $q=1,2$, 
\begin{equation}\label{eq:sim-weibull}
\Pr(T_{qi}>t\mid A_i,\bZ_i)=\exp\!\left\{-\lambda_{q0}t^{\rho_q}\exp(\beta_{q1}Z_{1i}+\beta_{q2}Z_{2i}+\beta_{q3}Z_{3i}+\beta_{qa}A_i)\right\},
\end{equation}
where $\lambda_{q0}$ and $\rho_q$ are the Weibull baseline scale and shape for component $q$, and $\beta_{q1},\beta_{q2},\beta_{q3},\beta_{qa}$ are the regression coefficients for the three covariates and the treatment indicator. Parameters were calibrated so that the first priority event (death) was relatively rare with a modest treatment effect, and the second priority event was more common with a larger effect, so the lower priority comparison contribute nontrivially. The full parameter set is given in \ref{supp:sim_additional}. Dependence between the two latent event times was generated using the Gumbel--Hougaard copula (Table \ref{tab:copula-main}) applied to the Weibull marginal survival distributions, where $u_1$ and $u_2$ are the two generic copula arguments lying in $[0,1]$. We considered weak dependence with $\theta=1.25$ and strong dependence with $\theta=4$. Censoring times were generated from a proportional hazards model 
\begin{equation}\label{eq:sim-censor}
\Pr(C_i>t\mid\bZ_i)=\exp\left\{-\lambda_C t\exp(0.80Z_{1i}+1.00Z_{2i}+0.65Z_{3i})\right\},
\end{equation}
where $\lambda_C$ is the baseline censoring rate. We calibrated $\lambda_C$ so that the overall censoring proportion by $\tau=36$ months was approximately $20\%$, $40\%$, $60\%$, or $80\%$, and we considered restriction times $\tau\in\{12,24,36\}$ months. Because the IPCW estimator of \citet{cui2025ipcw} requires only a censoring model, it is unaffected by misspecification of the event model. To examine sensitivity to nuisance model misspecification, we considered four working model configurations. Configuration $\mathcal{M}_1$ considered a Cox model for censoring and the correctly specified Weibull Gumbel event time model. Configuration $\mathcal{M}_2$ considered the same Cox censoring model but replaced the event time model by exponential margins with independence between event types. Configuration $\mathcal{M}_3$ specified a Kaplan-Meier censoring model that ignored covariate dependence together with the correctly specified Weibull Gumbel event time model. Configuration $\mathcal{M}_4$ misspecified both nuisance components by combining Kaplan-Meier censoring with independent exponential event times.

We then designed a copula sensitivity study to assess whether the efficiency gain from conditional tie recovery depends strongly on the chosen copula family when the marginal event time model is correctly specified. In this second study, the marginal event time distributions were generated from the same Weibull proportional hazards models in \eqref{eq:sim-weibull}. The data generating copula was varied over four one parameter families, including Gumbel--Hougaard, Clayton, Frank, and Plackett (see Table \ref{tab:copula-main}). For each generated data set, we fit the proposed estimator using Cox margins combined with each of these four working copula families. We also fit the IPCW estimator of \citet{cui2025ipcw}, which does not use an event time copula.

For each setting in both simulation studies, the true values of the component-specific win probabilities and the resulting summary measures were computed under the uncensored data generating mechanism by deterministic numerical integration. Letting $\pi_{tq}(\tau)$ and $\pi_{cq}(\tau)$ denote the treatment and control win probabilities contributed by the $q$th priority level, we obtained $\pi_t(\tau)=\sum_{q=1}^2\pi_{tq}(\tau)$, $\pi_c(\tau)=\sum_{q=1}^2\pi_{cq}(\tau)$, $\NB(\tau)=\pi_t(\tau)-\pi_c(\tau)$, $\WR(\tau)=\pi_t(\tau)/\pi_c(\tau)$, and $\WO(\tau)=\{1+\NB(\tau)\}/\{1-\NB(\tau)\}$. For each combination of simulation design factors, we generated 1{,}000 Monte Carlo data sets. We summarize performance by the relative bias (RBias), Monte Carlo standard deviation (MCSD), average estimated standard error (ASE), and empirical coverage probability (COV) of the nominal 95\% confidence interval. We report the relative efficiency $\mathrm{RE}=\mathrm{MCSD}^2\{\widehat{\NB}^{(\text{ipcw})}(\tau)\}/\mathrm{MCSD}^2\{\widehat{\NB}^{(\text{ctw})}(\tau)\}$ to compare two estimators.

Table \ref{tab:sim-NB80} summarizes the results for $\NB(\tau)$ under the highest censoring setting, where the overall censoring proportion by $\tau=36$ months is approximately 80\%. Results under 20\%, 40\%, and 60\% censoring, together with corresponding results for $\WR(\tau)$, $\WO(\tau)$ are reported in \ref{supp:sim_additional}. Under the correctly specified working configuration $\mathcal{M}_1$, the proposed estimator is consistently more efficient than the IPCW estimator of \citet{cui2025ipcw}. The efficiency gain is visible at $\tau=12$ and increases as the restriction time lengthens. For example, under weak dependence $(\theta=1.25)$, the relative efficiency under $\mathcal{M}_1$ increases from $1.62$ at $\tau=12$ to 2.52 at $\tau=24$ and 2.65 at $\tau=36$. Under strong dependence $(\theta=4.00)$, the corresponding values are $1.53$, $2.58$, and $2.90$. This pattern is consistent with the mechanism of the proposed estimator. When censoring is heavy and the restriction time is long, more lower priority comparisons are blocked by unverifiable higher priority ties, and conditional tie weighting recovers part of this otherwise discarded information.

\begin{table}[htbp]
\centering
\caption{Simulation results for $\NB(\tau)$ under 80\% censoring by $\tau=36$ for the two component prioritized endpoint.}
\label{tab:sim-NB80}
\resizebox{0.75\textwidth}{!}{%
\begin{tabular}{ccccrrrrrrrrr}
\toprule
 & & & & \multicolumn{4}{c}{$\widehat{\NB}^{(\text{ipcw})}(\tau)$} & \multicolumn{4}{c}{$\widehat{\NB}^{(\text{ctw})}(\tau)$} & \\
\cmidrule(lr){5-8}\cmidrule(lr){9-12}
$\theta$ & $\tau$ & $\mathcal{M}_k$ & True & RB\% & MCSD & ASE & Cov & RB\% & MCSD & ASE & Cov & RE \\
\midrule
\multirow{12}{*}{1.25}
& \multirow{4}{*}{12} & $\mathcal{M}_1$ & 0.078 & $-$0.5 & 0.0523 & 0.0504 & 0.944 & 0.9 & 0.0411 & 0.0402 & 0.951 & 1.62 \\
&  & $\mathcal{M}_2$ & 0.078 & $-$0.5 & 0.0523 & 0.0504 & 0.944 & 1.9 & 0.0417 & 0.0407 & 0.950 & 1.58 \\
&  & $\mathcal{M}_3$ & 0.078 & $-$6.0 & 0.0447 & 0.0433 & 0.942 & $-$1.9 & 0.0390 & 0.0377 & 0.936 & 1.31 \\
&  & $\mathcal{M}_4$ & 0.078 & $-$6.0 & 0.0447 & 0.0433 & 0.942 & $-$0.9 & 0.0394 & 0.0381 & 0.938 & 1.28 \\
\cmidrule(lr){2-13}
& \multirow{4}{*}{24} & $\mathcal{M}_1$ & 0.106 & $-$0.5 & 0.0993 & 0.0935 & 0.963 & 3.3 & 0.0626 & 0.0599 & 0.953 & 2.52 \\
&  & $\mathcal{M}_2$ & 0.106 & $-$0.5 & 0.0993 & 0.0935 & 0.963 & 6.4 & 0.0641 & 0.0614 & 0.955 & 2.40 \\
&  & $\mathcal{M}_3$ & 0.106 & $-$2.7 & 0.0640 & 0.0636 & 0.948 & 1.6 & 0.0513 & 0.0499 & 0.940 & 1.56 \\
&  & $\mathcal{M}_4$ & 0.106 & $-$2.7 & 0.0640 & 0.0636 & 0.948 & 4.5 & 0.0524 & 0.0510 & 0.945 & 1.49 \\
\cmidrule(lr){2-13}
& \multirow{4}{*}{36} & $\mathcal{M}_1$ & 0.111 & $-$6.7 & 0.1492 & 0.1427 & 0.966 & 3.1 & 0.0917 & 0.0826 & 0.958 & 2.65 \\
&  & $\mathcal{M}_2$ & 0.111 & $-$6.7 & 0.1492 & 0.1427 & 0.966 & 9.1 & 0.0939 & 0.0853 & 0.960 & 2.52 \\
&  & $\mathcal{M}_3$ & 0.111 & 0.3 & 0.0802 & 0.0812 & 0.951 & 2.8 & 0.0597 & 0.0593 & 0.947 & 1.81 \\
&  & $\mathcal{M}_4$ & 0.111 & 0.3 & 0.0802 & 0.0812 & 0.951 & 8.2 & 0.0615 & 0.0613 & 0.948 & 1.70 \\
\midrule
\multirow{12}{*}{4.00}
& \multirow{4}{*}{12} & $\mathcal{M}_1$ & 0.082 & $-$0.9 & 0.0477 & 0.0496 & 0.968 & $-$1.3 & 0.0386 & 0.0401 & 0.962 & 1.53 \\
&  & $\mathcal{M}_2$ & 0.082 & $-$0.9 & 0.0477 & 0.0496 & 0.968 & $-$0.7 & 0.0397 & 0.0411 & 0.959 & 1.44 \\
&  & $\mathcal{M}_3$ & 0.082 & $-$5.8 & 0.0416 & 0.0425 & 0.953 & $-$3.5 & 0.0363 & 0.0374 & 0.955 & 1.31 \\
&  & $\mathcal{M}_4$ & 0.082 & $-$5.8 & 0.0416 & 0.0425 & 0.953 & $-$2.8 & 0.0373 & 0.0383 & 0.956 & 1.24 \\
\cmidrule(lr){2-13}
& \multirow{4}{*}{24} & $\mathcal{M}_1$ & 0.117 & $-$5.5 & 0.0957 & 0.0921 & 0.967 & $-$4.1 & 0.0596 & 0.0603 & 0.956 & 2.58 \\
&  & $\mathcal{M}_2$ & 0.117 & $-$5.5 & 0.0957 & 0.0921 & 0.967 & $-$2.0 & 0.0630 & 0.0634 & 0.961 & 2.31 \\
&  & $\mathcal{M}_3$ & 0.117 & $-$6.5 & 0.0607 & 0.0628 & 0.949 & $-$4.1 & 0.0489 & 0.0499 & 0.947 & 1.54 \\
&  & $\mathcal{M}_4$ & 0.117 & $-$6.5 & 0.0607 & 0.0628 & 0.949 & $-$1.5 & 0.0516 & 0.0525 & 0.952 & 1.38 \\
\cmidrule(lr){2-13}
& \multirow{4}{*}{36} & $\mathcal{M}_1$ & 0.129 & $-$11.0 & 0.1479 & 0.1306 & 0.957 & $-$6.6 & 0.0869 & 0.0824 & 0.973 & 2.90 \\
&  & $\mathcal{M}_2$ & 0.129 & $-$11.0 & 0.1479 & 0.1306 & 0.957 & $-$3.3 & 0.0920 & 0.0873 & 0.971 & 2.58 \\
&  & $\mathcal{M}_3$ & 0.129 & $-$4.4 & 0.0788 & 0.0805 & 0.961 & $-$3.7 & 0.0576 & 0.0595 & 0.956 & 1.87 \\
&  & $\mathcal{M}_4$ & 0.129 & $-$4.4 & 0.0788 & 0.0805 & 0.961 & 0.4 & 0.0620 & 0.0639 & 0.958 & 1.61 \\
\bottomrule
\end{tabular}}
\end{table}

The comparison between $\mathcal{M}_1$ and $\mathcal{M}_2$ shows the effect of event time model misspecification when the censoring model remains correctly specified. Such misspecification attenuates the efficiency gain and can increase finite-sample bias, especially at longer restriction times, but the proposed estimator still remains more efficient than the IPCW estimator of \citet{cui2025ipcw} across all settings. For instance, at $\tau=36$, the relative efficiency under $\mathcal{M}_2$ is $2.52$ for $\theta=1.25$ and $2.58$ for $\theta=4.00$. Thus, even a misspecified working event-time model may still recover useful partial information, although less effectively than a correctly specified model. The comparison between $\mathcal{M}_3$ and $\mathcal{M}_4$ shows the additional impact of censoring model misspecification. Because both estimators rely on inverse probability of censoring weighting, misspecifying the censoring distribution can affect bias and coverage for both methods. However, the proposed estimator continues to provide a variance reduction relative to the existing IPCW estimator of \citet{cui2025ipcw}. At $\tau=36$, the relative efficiency remains 1.81 under $\mathcal{M}_3$ and 1.70 under $\mathcal{M}_4$ for $\theta=1.25$. These results show that the proposed estimator's main benefit is efficiency improvement from partially recovering lower priority comparisons, while accurate censoring-model specification remains important for bias performance.

Figure \ref{fig:copula-NB80} reports the copula sensitivity results for $\NB(\tau)$ under 80\% censoring across the three restriction times $\tau=24$, and $36$ months. In this part of the simulation, the event time marginal models are correctly specified through Cox proportional hazards models, while the data generating copula and the working copula are varied across Gumbel--Hougaard, Clayton, Frank, and Plackett. 
Across the four data generating copulas and the two restriction times, the proposed estimator has substantially smaller Monte Carlo variability than \citet{cui2025ipcw}. The RBias panels show that the proposed estimator is generally centered close to the truth across both correctly specified and misspecified working copulas, whereas estimator of \citet{cui2025ipcw} exhibits larger negative relative bias in several data generating settings, especially at the longer restriction times. The COV panels show that empirical coverage remains stable and close to the nominal 95\% level across all copula matrix. Within each data generating copula row, the four proposed method columns are visually similar across RBias, COV, and RE, even when the working copula differs from the data generating copula. With correctly specified Cox margins, the main efficiency improvement is less sensitive to the selection of the copula family. Additional copula sensitivity results under 40\% censoring and the corresponding results for $\WR(\tau)$ and $\WO(\tau)$ are provided in Web Appendix \ref{supp:sim_copula}.

\begin{figure}[htbp!]
\centering
\includegraphics[width=0.9\textwidth]{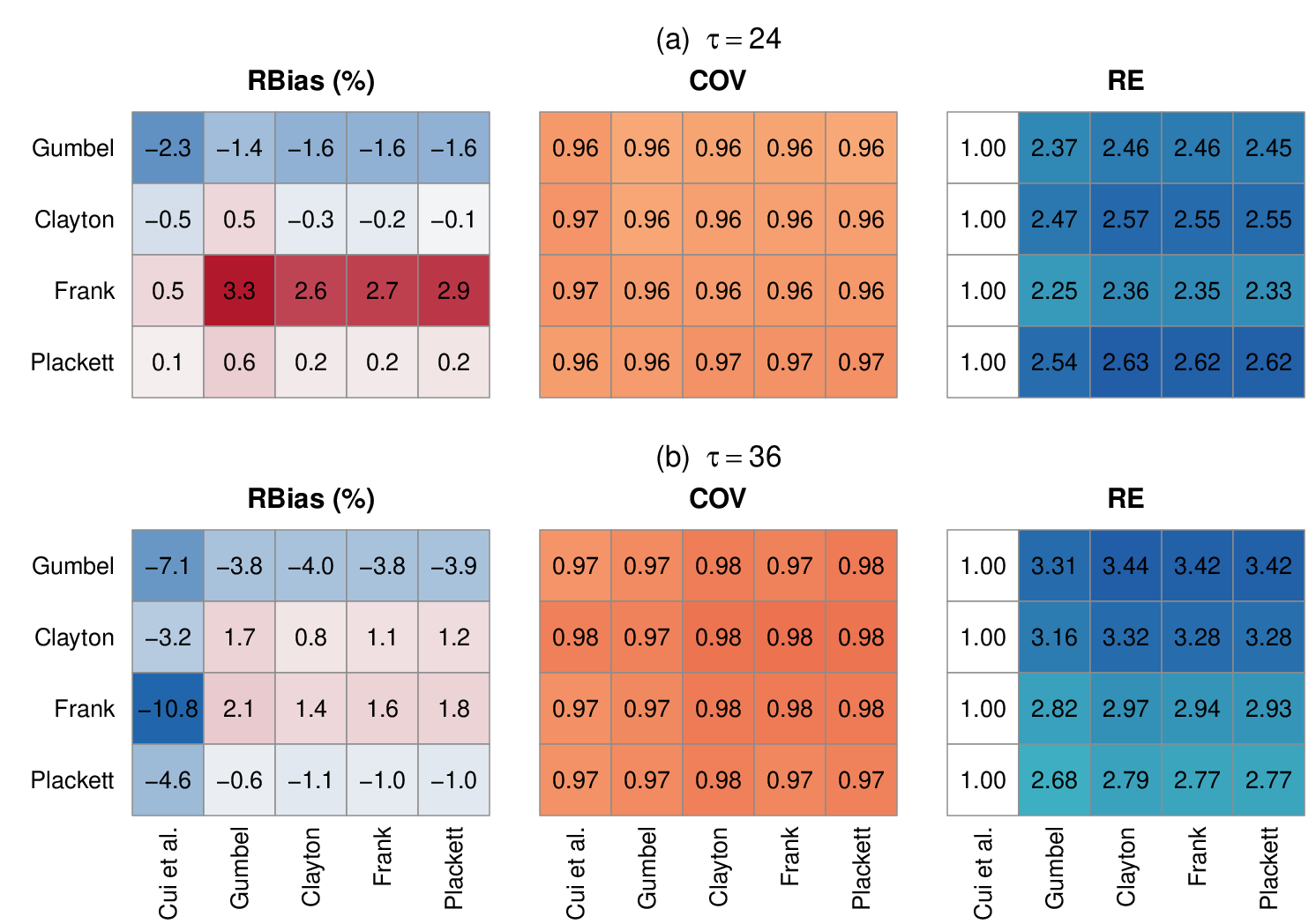}
\caption{Figure 2. Copula sensitivity of \(NB(\tau)\) under 80\% censoring. The two stacked blocks correspond to (a) \(\tau=24\) months and (b) \(\tau=36\) months. Within each block, rows indicate the data generating copula. Columns compare the IPCW estimator of \citet{cui2025ipcw} with the proposed estimator of \eqref{eq:ctw-win} fitted under each working copula. The panels report relative bias (RBias), empirical coverage of the nominal 95\% confidence interval (COV), and relative efficiency \(RE=\mathrm{MCSD}^2\{\widehat{NB}^{(\text{ipcw})}(\tau)\}/\mathrm{MCSD}^2\{\widehat{NB}^{(\text{ctw})}(\tau)\}\).}
\label{fig:copula-NB80}
\end{figure}

\section{Application to the HF-ACTION trial with a composite survival outcome} \label{sec:real}

We reanalyze the HF-ACTION trial using the estimand of the restricted-time win statistics defined in Section \ref{sec:method}. The design and primary clinical results of the trial have been reported previously in \citet{o2009efficacy}. The present analysis includes 2{,}130 patients after excluding patients with missing baseline covariates required for the censoring and event time models, with 1{,}060 assigned to aerobic exercise training added to usual care and 1{,}070 assigned to usual care alone. The hierarchical composite endpoint consists of all-cause death followed by time to first all-cause hospitalization. This ordering follows the clinical priority of mortality while allowing hospitalization to contribute when death does not distinguish a treated-control pair. The hospitalization burden in the analysis cohort was substantial, with a mean number of hospitalizations per patient of 2.01 in the usual-care arm and 2.00 in the exercise-training arm, supporting the relevance of the lower priority component. We vary the restriction time continuously over approximately 1 to 4 years of follow-up. At each restriction time, death is evaluated as the first priority component, and time to first all-cause hospitalization is evaluated as the second priority component only when the pair descends past death. Thus, a treated patient wins on hospitalization if the control patient's first hospitalization occurs earlier, provided the required genuine tie on death is established or, under the proposed estimator, partially recovered through conditional tie weighting. We estimate the restricted-time win and loss probabilities, together with the win ratio, net benefit, and win odds, using both the existing IPCW estimator and the proposed conditional tie weighting estimator. Within each arm, the censoring survival function and the marginal event time survival functions for death and hospitalization are estimated by Cox models adjusting for age, BMI, sex, systolic and diastolic blood pressure, history of angina, ischemic etiology, cardiopulmonary exercise test duration, atrial fibrillation or flutter, and diabetes. The fitted event time marginals are linked through a Gumbel copula with arm-specific dependence parameters estimated by maximum likelihood. Confidence intervals use the sandwich estimators from Section \ref{sec:asymptotics}, with win ratio and win odds intervals constructed on the log scale before exponentiation.

\begin{figure}[!ht]
\centering
\includegraphics[width=0.85\textwidth]{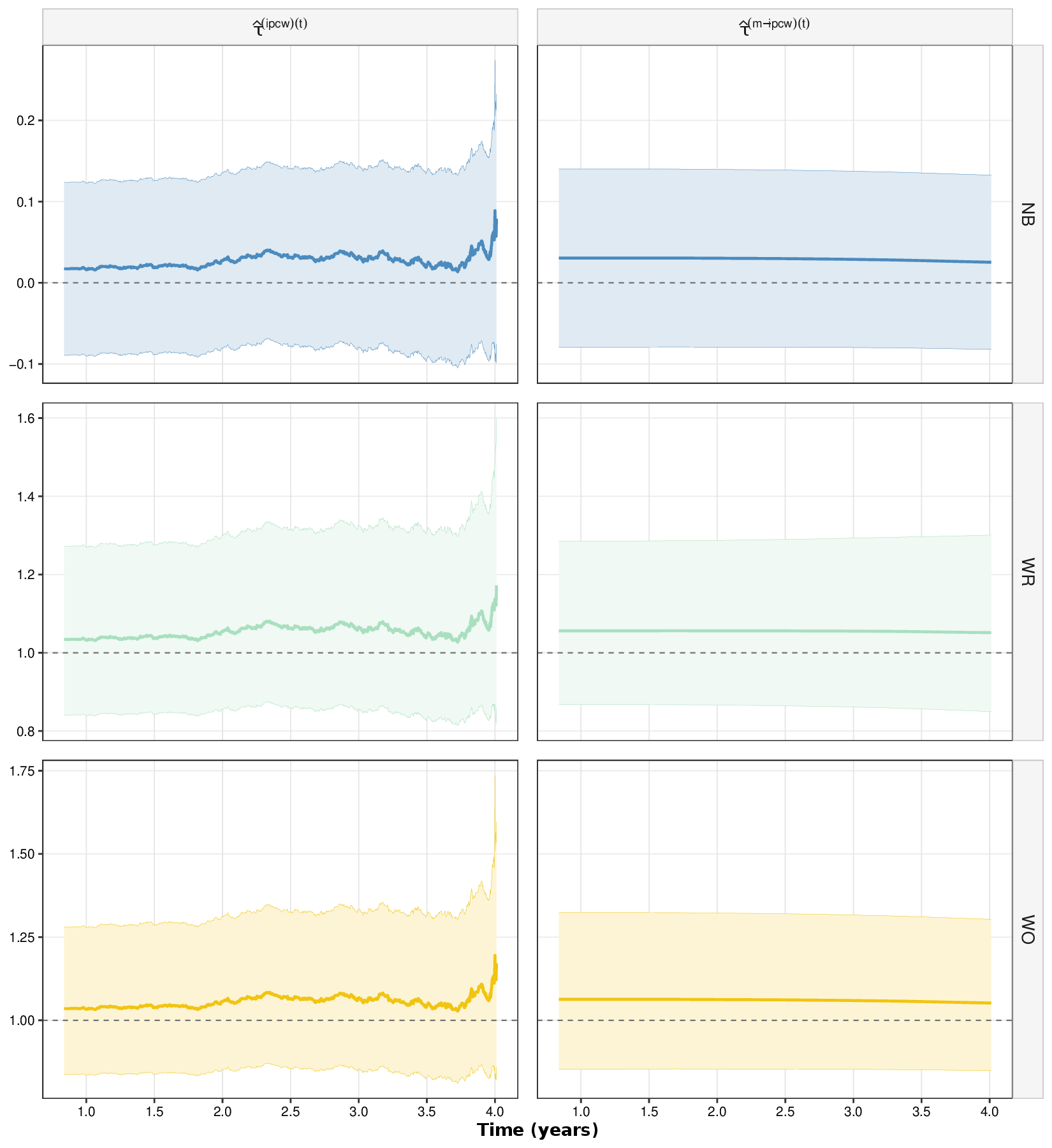}
\caption{Estimated net benefit (NB, top row), win ratio (WR, middle row), and win odds (WO, bottom row) as functions of the restriction time $\tau \in [1, 4]$ years for the HF-ACTION trial, comparing exercise training versus usual care on the
hierarchical composite of all-cause death and all-cause hospitalization. The left column shows the existing IPCW estimator $\widehat{\NB}^{(\text{ipcw})}(\tau)$ \citep{cui2025ipcw} and the right column shows the proposed estimator $\widehat{\NB}^{(\text{ctw})}(\tau)$. The shaded bands are pointwise 95\% confidence intervals.}
\label{fig:hfaction}
\end{figure}

Figure \ref{fig:hfaction} shows the estimated $\NB(\tau)$, $\WR(\tau)$, and $\WO(\tau)$ as functions of the restriction time $\tau \in [1, 4]$ years for both estimators. First, both estimators indicate a consistent, favorable treatment effect of exercise training across all restriction times, with the exercise-training arm accumulating more pairwise wins than the usual-care arm on death and on first hospitalization. Second, the proposed estimator $\widehat{\NB}^{(\text{ctw})}(\tau)$ produces notably narrower confidence bands than $\widehat{\NB}^{(\text{ipcw})}(\tau)$ throughout the entire range of $\tau$, reflecting the efficiency gain from incorporating partially observed higher priority comparisons through conditional tie weighting. This pattern is consistent
with the simulation findings in Section \ref{sec:simulation}, where efficiency gains were most pronounced under moderate-to-heavy censoring, as is the case here with an 83\% censoring rate. Third, the two estimators differ slightly in their estimated trajectories over $\tau$. The estimator $\widehat{\NB}^{(\text{ipcw})}(\tau)$ exhibits somewhat larger variability as $\tau$ increases, which is expected since censoring accumulates and an increasing proportion of lower priority hospitalization comparisons remain indeterminate. By contrast, $\widehat{\NB}^{(\text{ctw})}(\tau)$ is more stable across restriction times, recovering information from these partially resolved pairs through the conditional tie probability weights $\widehat{\mathcal{R}}_{2,a}^{>}$ and $\widehat{\mathcal{R}}_{2,a}^{=}$.

Table \ref{tab:hfaction} summarizes the estimated win statistics at selected restriction times $\tau \in \{1, 2, 3, 4\}$ years. At $\tau = 3$ years, the proposed estimator gives $\widehat{\NB}^{(\text{ctw})}(3) = 0.029$ (95\% CI: $-0.080$, $0.137$), $\widehat{\WR}^{(\text{ctw})}(3) = 1.055$ (95\% CI: $0.861$, $1.293$), and $\widehat{\WO}^{(\text{ctw})}(3) = 1.059$ (95\% CI: $0.853$, $1.316$). At the same time, the estimator of \citet{cui2025ipcw} gives $\widehat{\NB}^{(\text{ipcw})}(3) = 0.032$ (95\% CI: $-0.079$, $0.143$), $\widehat{\WR}^{(\text{ipcw})}(3) = 1.064$ (95\% CI: $0.856$, $1.324$), and $\widehat{\WO}^{(\text{ipcw})}(3) = 1.066$ (95\% CI: $0.853$, $1.332$). At the component level at $\tau = 3$ years, under $\widehat{\pi}^{(\text{ctw})}$, the estimated death-component treatment win probability is $\widehat{\pi}_{t1}(3) = 0.305$ versus control loss $\widehat{\pi}_{c1}(3) = 0.280$, and the hospitalization-component treatment win probability is $\widehat{\pi}_{t2}(3) = 0.246$ versus $\widehat{\pi}_{c2}(3) = 0.241$. Both components contribute to the overall win advantage of the exercise-training arm, with the death component contributing the larger share of the net benefit. While neither the net benefit nor the win ratio is statistically significant at the conventional 5\% level, the direction and magnitude of the estimates are consistent across all restriction times and both methods, suggesting a clinically meaningful but modest benefit of exercise training on the hierarchical composite of death and hospitalization in this patient population.

\begin{table}[htbp]
\centering
\caption{Estimated win statistics for the HF-ACTION data at selected restriction times $\tau$, comparing exercise training ($n_{1}=1{,}060$) versus usual care ($n_{0}=1{,}070$). For each $\tau$, results are shown for the IPCW estimator of \citet{cui2025ipcw} $\widehat{\NB}^{(\text{ipcw})}$ \citep{cui2025ipcw} (top row) and the proposed conditional tie-weighting estimator $\widehat{\NB}^{(\text{ctw})}$ (bottom row). 95\% confidence intervals are in parentheses. RE $=\widehat{\mathrm{Var}}(\widehat{\NB}^{(\text{ipcw})})/
\widehat{\mathrm{Var}}(\widehat{\NB}^{(\text{ctw})})$.}
\label{tab:hfaction}
\resizebox{0.75\textwidth}{!}{%
\begin{tabular}{clllll}
\toprule
$\tau$ (yr) & Estimator & NB (95\% CI) & WR (95\% CI) & WO (95\% CI) & RE \\
\midrule
\multirow{2}{*}{1}
  & $\widehat{\NB}^{(\text{ipcw})}(t)$ & $0.016$ ($-0.090$,\;$0.123$) & $1.032$ ($0.839$,\;$1.270$) & $1.033$ ($0.835$,\;$1.278$) & \multirow{2}{*}{0.94} \\
  & $\widehat{\NB}^{(\text{ctw})}(t)$ & $0.030$ ($-0.079$,\;$0.140$) & $1.056$ ($0.868$,\;$1.285$) & $1.063$ ($0.853$,\;$1.324$) & \\
\midrule
\multirow{2}{*}{2}
  & $\widehat{\NB}^{(\text{ipcw})}(t)$ & $0.028$ ($-0.080$,\;$0.136$) & $1.055$ ($0.856$,\;$1.301$) & $1.057$ ($0.852$,\;$1.312$) & \multirow{2}{*}{0.97} \\
  & $\widehat{\NB}^{(\text{ctw})}(t)$ & $0.030$ ($-0.079$,\;$0.140$) & $1.056$ ($0.866$,\;$1.287$) & $1.062$ ($0.853$,\;$1.322$) & \\
\midrule
\multirow{2}{*}{3}
  & $\widehat{\NB}^{(\text{ipcw})}(t)$ & $0.032$ ($-0.079$,\;$0.143$) & $1.064$ ($0.856$,\;$1.324$) & $1.066$ ($0.853$,\;$1.332$) & \multirow{2}{*}{1.05} \\
  & $\widehat{\NB}^{(\text{ctw})}(t)$ & $0.029$ ($-0.080$,\;$0.137$) & $1.055$ ($0.861$,\;$1.293$) & $1.059$ ($0.853$,\;$1.316$) & \\
\midrule
\multirow{2}{*}{4}
  & $\widehat{\NB}^{(\text{ipcw})}(t)$ & $0.065$ ($-0.074$,\;$0.204$) & $1.131$ ($0.868$,\;$1.474$) & $1.139$ ($0.861$,\;$1.507$) & \multirow{2}{*}{1.69} \\
  & $\widehat{\NB}^{(\text{ctw})}(t)$ & $0.025$ ($-0.082$,\;$0.133$) & $1.051$ ($0.850$,\;$1.301$) & $1.052$ ($0.849$,\;$1.304$) & \\
\bottomrule
\end{tabular}}
\end{table}

The relative efficiency pattern is informative and broadly consistent with the simulation results in Section \ref{sec:simulation}. At shorter restriction times $\tau = 1$ and $\tau = 2$ years, where censoring has had little time to accumulate and most higher priority death comparisons are directly ascertainable, the two estimators are nearly equivalent (RE $\approx 0.94$--$0.97$ for NB), confirming that the proposed method does not sacrifice efficiency when the information gain from conditional tie recovery is small. As $\tau$ increases toward 3--4 years and censoring renders more death comparisons indeterminate, the efficiency advantage of $\widehat{\NB}^{(\text{ctw})}(\tau)$ grows substantially, with $\mathrm{RE} = 1.05$ at $\tau = 3$ years and $\mathrm{RE} = 1.69$ at $\tau = 4$ years for NB, with similar gains for WR ($\mathrm{RE} = 1.55$) and WO ($\mathrm{RE} = 1.70$) at the longest horizon. This pattern arises because, at $\tau = 4$ years, a substantial fraction of treated-control pairs have at least one member censored before $\tau$ on the death component, leaving their first priority comparison indeterminate. Thus, the estimator of \citet{cui2025ipcw} discards these pairs entirely, while the proposed estimator recovers their
partial information through the conditional tie weights $\widehat{\mathcal{R}}_{2,a}^{>}$ and $\widehat{\mathcal{R}}_{2,a}^{=}$.

\section{Discussion}\label{sec:discussion}

This paper is motivated by a practical limitation of restricted win statistics in trials with hierarchical survival endpoints. Under right censoring, lower priority comparisons may be lost not because the lower priority event is unobserved, but because the higher priority genuine tie required for descent through the hierarchy is not confirmed. Existing IPCW estimators correct the bias induced by censoring, but they retain an all-or-nothing rule requiring higher priority genuine ties to be confirmed before lower priority information can be used. We proposed a conditional tie weighting IPCW estimator that replaces the unobserved genuine tie indicator by its conditional probability given the observed data, allowing partially observed pairs with censoring-induced ties to contribute fractionally to lower priority comparisons while preserving the original clinical hierarchy.
This softens the statistical rigidity of the standard genuine tie approach while preserving the original clinical hierarchy, and recovers information that standard IPCW discards. The efficiency gains are most pronounced when censoring is substantial, the restriction horizon is long, the higher priority event is relatively rare by $\tau$, and the lower priority component is clinically and statistically informative. This is precisely the settings (and in our HF-ACTION data example) where many pairs would, in principle, descend the hierarchy but censoring prevents that descent from being fully confirmed.

Our investigation also points out an important implication concerning how restricted win statistics should be interpreted and reported. We adopt the restricted horizon route to estimand construction, which separates the treatment effect from censoring by treating the comparison horizon as part of the clinical question rather than a byproduct of observed follow-up \citep{mao2024defining,wang2025restricted}. This approach is complementary to semiparametric models that impose temporal structure on win probabilities, such as proportional win-fractions regression \citep{mao2021class} and generalized win fraction regression model \citep{cao2026generalized}. In practice, the restriction time should therefore be chosen for clinical reasons and interpreted as part of the estimand itself, not merely as a tuning parameter for the analysis. In addition, it is also suggested that no single win summary should be reported, because the win ratio, net benefit, and win odds are all functions of the same pairwise win, loss, and tie probabilities \citep{dong2023stratified}, and they provide complementary views of the same underlying comparison. In particular, the net benefit has the advantage of an absolute scale, while the win ratio and win odds retain the relative effect interpretation often preferred in clinical reports. When possible, reporting component specific contributions to the overall win probability may further clarify whether an observed treatment advantage is driven by the highest priority outcome or by lower priority components.

The proposed efficiency gain also comes with a trade-off. Unlike the IPCW of \citet{cui2025ipcw}, our estimator requires an additional model for the joint distribution of the component event times in order to estimate the conditional tie probabilities. When this working model is well chosen, the resulting estimator can recover information that IPCW alone leaves unused. When it is misspecified, however, the gain may be attenuated, and the analysis may become more sensitive to modeling assumptions. Our simulations empirically show that even under event time model misspecification, the proposed estimator retains a meaningful efficiency advantage over the existing IPCW estimator, suggesting that a partially informative joint model still recovers useful lower priority information rather than none. Moreover, the efficiency gain and bias performance are found to be largely robust to misspecification of the working copula assumption. Nevertheless, the additional joint event time model remains an assumption, and we recommend reporting sensitivity analyses across alternative dependence structures as a matter of good practice, particularly when the fitted model is weakly supported by subject-matter knowledge or available data.

\section*{Acknowledgments}

This work is supported by the United States National Institutes of Health (NIH), National Heart, Lung, and Blood Institute (NHLBI, grant number 1R01HL178513). All statements in this report, including its findings and conclusions, are solely those of the authors and do not necessarily represent the views of the NIH. The data example in Section \ref{sec:real} was prepared using the HF-ACTION Research Materials obtained from the NHLBI Biologic Specimen and Data Repository Information Coordinating Center (BioLINCC) and does not necessarily reflect the opinions or views of HF-ACTION or NHLBI.


\section*{Supplementary Materials}

The Web Appendix contains proofs of consistency and asymptotic normality, derivations of the influence function corrections and sandwich variance estimators, details on the copula implementation of the conditional tie probabilities, additional simulation results, and a tutorial for the \texttt{winIPCW} R package at \url{https://github.com/fancy575/winIPCW}.

\clearpage
\newpage
\printbibliography

\newpage
\appendix
\setcounter{section}{0}

\renewcommand{\thesection}{Web Appendix A\arabic{section}}
\renewcommand{\thesubsection}{\thesection.\arabic{subsection}}
\renewcommand{\thesubsubsection}{\thesubsection.\arabic{subsubsection}}

\setcounter{table}{0}
\setcounter{figure}{0}
\renewcommand{\thetable}{\arabic{table}}
\renewcommand{\thefigure}{\arabic{figure}}
\renewcommand{\tablename}{Web Table}
\renewcommand{\figurename}{Web Figure}

\section{Proof of the consistency of the estimator of \citet{cui2025ipcw} in \eqref{eq:cuiA} of the main text} \label{supp:consistent_cui}

We first establish that the estimator $\widehat{\pi}_{tq}^{(\text{ipcw})}(\tau)$ in \eqref{eq:cuiA} is unbiased for $\pi_{tq}(\tau)$ at the oracle level, and then invoke the law of large numbers for two-sample U-statistics together with uniform consistency of the estimated nuisance functions to complete the consistency argument.

\begin{proof}[Proof of consistency of $\widehat{\pi}_{t}^{(\text{ipcw})}(\tau)$]
We work with the oracle version in which $G_a(\cdot\mid\bZ)$ is known. Consider first the component $q=1$. The oracle kernel is
\[
K_{ij,1}^{(\text{ipcw})} = \frac{\Ind{\widetilde{Y}_{1i}(\tau)>\widetilde{Y}_{1j}(\tau)}\,\delta_{1j}(\tau)}{G_1\!\bigl(\widetilde{Y}_{1j}(\tau)\mid\bZ_i\bigr)\,G_0\!\bigl(\widetilde{Y}_{1j}(\tau)\mid\bZ_j\bigr)}.
\]
We compute $\E\{K_{ij,1}^{(\text{ipcw})}\mid\bZ_i,\bZ_j\}$ by the law of iterated expectation, first over $(C_i,\text{ipcw}_j)$ given $(\bT_i,\bT_j,\bZ_i,\bZ_j)$, and then over $(\bT_i,\bT_j)$ given $(\bZ_i,\bZ_j)$. On the event $\{T_{1j}\le\tau\}$, we have $\Ind{\widetilde{Y}_{1i}(\tau)>\widetilde{Y}_{1j}(\tau)}\,\delta_{1j}(\tau) = \Ind{T_{1i}>T_{1j}}\,\Ind{T_{1j}\le\tau}\,\Ind{C_i>T_{1j}}\,\Ind{C_j>T_{1j}}$, so the iterated expectation gives

{\small
\begin{align*}
\E\{K_{ij,1}^{(\text{ipcw})}\mid\bZ_i,\bZ_j\} &= \E_{\bT}\!\left[\E_{C}\!\left[\frac{\Ind{T_{1i}>T_{1j}}\,\Ind{T_{1j}\le\tau}\,\Ind{C_i>T_{1j}}\,\Ind{C_j>T_{1j}}}{G_1(T_{1j}\mid\bZ_i)\,G_0(T_{1j}\mid\bZ_j)}\;\biggm|\;\bT_i,\bT_j,\bZ_i,\bZ_j\right]\;\biggm|\;\bZ_i,\bZ_j\right] \\[4pt]
&= \E_{\bT}\!\left[\Ind{T_{1i}>T_{1j}}\,\Ind{T_{1j}\le\tau} \cdot\frac{\E_{C}\!\left[\Ind{C_i>T_{1j}}\mid\bZ_i,T_{1j}\right]}{G_1(T_{1j}\mid\bZ_i)} \cdot\frac{\E_{C}\!\left[\Ind{C_j>T_{1j}}\mid\bZ_j,T_{1j}\right]}{G_0(T_{1j}\mid\bZ_j)}\;\biggm|\;\bZ_i,\bZ_j\right] \\[4pt]
&= \E_{\bT}\bigl[\Ind{T_{1i}>T_{1j}}\,\Ind{T_{1j}\le\tau}\mid\bZ_i,\bZ_j\bigr] \\[4pt]
&= \pi_{t1}(\tau).
\end{align*}
}The second equality follows by iterated conditioning. Conditional on the latent event times and baseline covariates, the indicator and weight terms are fixed, and the two censoring indicators will be separated because the data from different arms are independent (recall Regularity Condition \eqref{ass:A1}). The covariate-dependent censoring assumption $C_a\perp\bT_a\mid(A_a,\bZ_a)$ (\ref{ass:A2}) then makes each censoring factor depend only on its own covariates, and hence it cancels the weight in the denominator. After cancellation, the remaining term is the first-priority win probability, which equals $\pi_{t1}(\tau)$ after averaging over the covariates by \eqref{eq:pi-decomp}. This establishes unbiasedness of the oracle kernel at $q=1$.

Now fix $q\ge 2$. The oracle kernel under \eqref{eq:cuiA} is
\[
K_{ij,q}^{(\text{ipcw})} = \frac{\Bigl[\prod_{k<q}\Ind{\widetilde{Y}_{ki}(\tau)=\tau,\;\widetilde{Y}_{kj}(\tau)=\tau}\Bigr]\Ind{\widetilde{Y}_{qi}(\tau)>\widetilde{Y}_{qj}(\tau)}\,\delta_{qj}(\tau)}{G_1(\tau\mid\bZ_i)\,G_0(\tau\mid\bZ_j)}.
\]
Under the common censoring convention, $\Ind{\widetilde{Y}_{ki}(\tau)=\tau} = \Ind{T_{ki}>\tau}\,\Ind{C_i>\tau}$ for each $k<q$, and on the event $\{C_j>\tau\}$ the condition $T_{qj}\le\tau$ implies $C_j>T_{qj}$, so $\Ind{\widetilde{Y}_{qi}(\tau)>\widetilde{Y}_{qj}(\tau)}\,\delta_{qj}(\tau) = \Ind{T_{qi}>T_{qj}}\,\Ind{T_{qj}\le\tau}$. The law of iterated expectation gives

{\small
\begin{align*}
&\E\{K_{ij,q}^{(\text{ipcw})}\mid\bZ_i,\bZ_j\} \\[4pt]
&\quad= \E_{\bT}\!\left[\E_{C}\!\left[\frac{\prod_{k<q}\Ind{T_{ki}>\tau}\cdot\Ind{C_i>\tau}\cdot\prod_{k<q}\Ind{T_{kj}>\tau}\cdot\Ind{C_j>\tau}\cdot\Ind{T_{qi}>T_{qj}}\,\Ind{T_{qj}\le\tau}}{G_1(\tau\mid\bZ_i)\,G_0(\tau\mid\bZ_j)}\;\biggm|\;\bT_i,\bT_j,\bZ_i,\bZ_j\right]\;\biggm|\;\bZ_i,\bZ_j\right] \\[4pt]
&\quad= \E_{\bT}\!\left[\prod_{k<q}\Ind{T_{ki}>\tau} \cdot\frac{\E_{C}\!\left[\Ind{C_i>\tau}\mid\bZ_i\right]}{G_1(\tau\mid\bZ_i)} \cdot\prod_{k<q}\Ind{T_{kj}>\tau} \cdot\frac{\E_{C}\!\left[\Ind{C_j>\tau}\mid\bZ_j\right]}{G_0(\tau\mid\bZ_j)} \cdot\Ind{T_{qi}>T_{qj}}\,\Ind{T_{qj}\le\tau}\;\biggm|\;\bZ_i,\bZ_j\right] \\[4pt]
&\quad= \E_{\bT}\!\left[\prod_{k<q}\Ind{T_{ki}>\tau}\cdot\prod_{k<q}\Ind{T_{kj}>\tau}\cdot\Ind{T_{qi}>T_{qj}}\,\Ind{T_{qj}\le\tau}\;\biggm|\;\bZ_i,\bZ_j\right] \\[4pt]
&\quad= \pi_{tq}(\tau).
\end{align*}
}Here, the second and third equalities follow exactly as in the $q=1$ case, where the censoring indicators separate by arm independence (recall Regularity Condition \eqref{ass:A1}), reduce to their own arm's covariates under $C_a\perp\bT_a\mid(A_a,\bZ_a)$ (recall Regularity Condition \eqref{ass:A2}), and cancel the inverse weights, now evaluated at the horizon $\tau$. In the final line, continuity identifies a higher-priority tie with joint survival up to $\tau$, i.e. $\{Y_{ki}=Y_{kj}\}=\{T_{ki}>\tau,\,T_{kj}>\tau\}$ for each $k<q$, so the integrand is the component $q$ win kernel and its expectation over $(\bZ_i,\bZ_j)$ equals $\pi_{tq}(\tau)$. This establishes unbiasedness of the oracle kernel at every priority level $q=1,\dots,Q$; summing over $q$ then gives $\E\{\widehat{\pi}_{t}^{(\text{ipcw})}(\tau)\}=\pi_t(\tau)$ at the oracle level.

To establish consistency, note that under \eqref{ass:A2} the kernel $K_{ij,q}^{(\text{ipcw})}$ is bounded by $\epsilon_G^{-2}$ uniformly in $(i,j)$. Together with the i.i.d.\ structure in \eqref{ass:A1} and the finite second moment in \eqref{ass:A3}, the law of large numbers for two-sample $U$-statistics gives
\[
\frac{1}{n_1 n_0}\sum_{i:\,A_i=1}\sum_{j:\,A_j=0} K_{ij,q}^{(\text{ipcw})} \xrightarrow{p} \pi_{tq}(\tau) \quad\text{as }n\to\infty,
\]
for every $q=1,\dots,Q$. Replacing the oracle $G_a(\cdot\mid\bZ)$ by $\widehat{G}_a(\cdot\mid\bZ)$ introduces the kernel difference
\begin{align*}
\widehat{K}_{ij,q}^{(\text{ipcw})} - K_{ij,q}^{(\text{ipcw})} &= N_{ij,q}^{(\text{ipcw})} \left(\frac{1}{\widehat{G}_1(s_{ij,q}\mid\bZ_i)\,\widehat{G}_0(s_{ij,q}\mid\bZ_j)} -\frac{1}{G_1(s_{ij,q}\mid\bZ_i)\,G_0(s_{ij,q}\mid\bZ_j)}\right),
\end{align*}
where $N_{ij,q}^{(\text{ipcw})}$ collects the indicator terms in the numerator of $K_{ij,q}^{(\text{ipcw})}$, and $s_{ij,q}=\widetilde{Y}_{1j}(\tau)$ for $q=1$ and $s_{ij,q}=\tau$ for $q\ge 2$. Under Regularity Condition \eqref{ass:A4}, $\widehat{G}_a$ satisfies $\sup_{0\le t\le\tau}|\widehat{G}_a(t\mid\bZ)-G_a(t\mid\bZ)|=o_p(1)$ uniformly, so by the continuous mapping theorem and $G_a(\tau\mid\bZ)\ge\epsilon_G>0$ from \eqref{ass:A2}, the difference is $o_p(1)$ uniformly in $(i,j)$, which yields
\[
\frac{1}{n_1 n_0}\sum_{i:\,A_i=1}\sum_{j:\,A_j=0}\bigl(\widehat{K}_{ij,q}^{(\text{ipcw})} - K_{ij,q}^{(\text{ipcw})}\bigr) = o_p(1),
\]
and hence $\widehat{\pi}_{tq}^{(\text{ipcw})}(\tau)\xrightarrow{p}\pi_{tq}(\tau)$ for every $q=1,\dots,Q$. Summing over $q$ gives $\widehat{\pi}_{t}^{(\text{ipcw})}(\tau)\xrightarrow{p}\pi_t(\tau)$. The consistency of $\widehat{\pi}_{c}^{(\text{ipcw})}(\tau)$ follows by the identical argument applied to the loss kernels, and whenever $\pi_c(\tau)>0$ the plug-in estimators $\widehat{\WR}^{(\text{ipcw})}(\tau)$, $\widehat{\NB}^{(\text{ipcw})}(\tau)$, and $\widehat{\WO}^{(\text{ipcw})}(\tau)$ are consistent by applying the continuous mapping theorem.
\end{proof}

\section{Proof of the consistency of the proposed estimator in \eqref{eq:ctw-win} of the main text}
\label{supp:consistent_proposed}

The estimator \eqref{eq:ctw-win} in the main paper can also be motivated directly from the conditional expectation argument. We first write the structural component-$q$ treatment-win kernel as $W_{ij,q}(\tau)=\{\prod_{k<q}\Ind{T_{ki}>\tau,\;T_{kj}>\tau}\}\Ind{Y_{qi}>Y_{qj}}$. Given the observed state at the time the component $q$ comparison is resolved, the treated individual's contribution to the unresolved gating event is 
$$\E\left[\prod_{k<q}\Ind{T_{ki}>\tau}\mid T_{1i}>u_{iq},\dots,T_{(q-1)i}>u_{iq},\,T_{qi}>t,\,A_i=1,\,\bZ_i\right],$$ 
which is exactly $\mathcal{R}_{q,1}^{>}(u_{iq},t\mid\bZ_i)$. The analogous conditional expectation for the control individual is $\mathcal{R}_{q,0}^{=}(u_{jq},t\mid\bZ_j)$. Thus, our estimator replaces an unobserved binary gate by its conditional expectation, and any consistent estimator of those conditional expectations would lead to the same first-order target. We now establish consistency of the proposed conditional tie weighting estimator. The proof follows the same strategy as in \ref{supp:consistent_cui}, where we first show that the oracle kernel is unbiased for the component specific restricted win probability, and then invoke the law of large numbers for two-sample $U$-statistics together with uniform consistency of the estimated nuisance functions.

\begin{proof}[Proof of consistency of $\widehat{\pi}_{t}^{(\text{ctw})}(\tau)$]
We first work with the oracle version in which $G_a(\cdot\mid\bZ)$, $\mathcal{S}_{q,a}(\cdot,\cdot\mid\bZ)$, and $\mathcal{H}_{q,a}(\cdot,\cdot\mid\bZ)$ are known. For the first-priority component, the proposed estimator coincides exactly with the estimator in \eqref{eq:cuiA}, that is, \(\widehat{\pi}_{t1}^{(\text{ctw})}(\tau)=\widehat{\pi}_{t1}^{(\text{ipcw})}(\tau)\).
Hence, by \ref{supp:consistent_cui}, we already have \(\widehat{\pi}_{t1}^{(\text{ctw})}(\tau)\xrightarrow{p}\pi_{t1}(\tau)\). It therefore remains to consider a fixed component $q\ge 2$. The oracle component $q$ treated win kernel corresponding to \eqref{eq:ctw-win} is
\[
K_{ij,q}^{(\text{ctw})}=\frac{\overline{\delta}_{qi}(\tau)\,\overline{\delta}_{qj}(\tau)\,\mathcal{R}_{q,1}^{>}\!\bigl(u_{iq},\widetilde{Y}_{qj}(\tau)\mid\bZ_i\bigr)\,\mathcal{R}_{q,0}^{=}\!\bigl(u_{jq},\widetilde{Y}_{qj}(\tau)\mid\bZ_j\bigr)\,
\Ind{\widetilde{Y}_{qi}(\tau)>\widetilde{Y}_{qj}(\tau)}\,\delta_{qj}(\tau)}{G_1\!\bigl(\widetilde{Y}_{qj}(\tau)\mid\bZ_i\bigr)\,G_0\!\bigl(\widetilde{Y}_{qj}(\tau)\mid\bZ_j\bigr)}.
\]
We now show that $\E\{K_{ij,q}^{(\text{ctw})}\}=\pi_{tq}(\tau)$. For $0\le t\le\tau$, define
\[
A_{iq}(t)=\frac{\overline{\delta}_{qi}(\tau)\,\mathcal{R}_{q,1}^{>}(u_{iq},t\mid\bZ_i)\,\Ind{\widetilde{Y}_{qi}(\tau)>t}}{G_1(t\mid\bZ_i)}.
\]
On the event $\{\overline{\delta}_{qi}(\tau)=1,\ \widetilde{Y}_{qi}(\tau)>t\}$, the common censoring convention implies
\[
u_{iq}=C_i\wedge\tau,\qquad T_{1i}>u_{iq},\dots,T_{(q-1)i}>u_{iq},\qquad T_{qi}>t,\qquad C_i>t.
\]
Hence, using the definition of $\mathcal{S}_{q,1}$ in \eqref{eq:jointSq} and under Assumption \ref{ass:A2}, we have

{\footnotesize
\begin{align}
\E\!\left\{A_{iq}(t)\mid \bZ_i,A_i=1\right\}&=\E_C\!\left[\frac{\Ind{C_i>t}}{G_1(t\mid\bZ_i)}\,\mathcal{R}_{q,1}^{>}(C_i\wedge\tau,t\mid\bZ_i)\,\Pr\!\left(
T_{1i}>C_i\wedge\tau,\dots,T_{(q-1)i}>C_i\wedge\tau,\,
T_{qi}>t\mid C_i,\bZ_i,A_i=1\right)\Biggm| \bZ_i,A_i=1\right]\notag\\&=\E_C\!\left[\frac{\Ind{C_i>t}}{G_1(t\mid\bZ_i)}\,
\frac{\mathcal{S}_{q,1}(\tau,t\mid\bZ_i)}{\mathcal{S}_{q,1}(C_i\wedge\tau,t\mid\bZ_i)}\,\mathcal{S}_{q,1}(C_i\wedge\tau,t\mid\bZ_i)\Biggm| \bZ_i,A_i=1\right]\notag\\
&=\mathcal{S}_{q,1}(\tau,t\mid\bZ_i)\,\E_C\!\left[\frac{\Ind{C_i>t}}{G_1(t\mid\bZ_i)}\Biggm| \bZ_i,A_i=1\right]\notag\\
&=\mathcal{S}_{q,1}(\tau,t\mid\bZ_i),
\label{eq:ctw-proof-treated-final}
\end{align}
}
Thus $A_{iq}(t)$ is an unbiased estimator of the conditional survival term $\mathcal{S}_{q,1}(\tau,t\mid\bZ_i)$.

Next, for any bounded measurable function $\phi:[0,\tau]\to\mathbb{R}$, define
\[
B_{jq}(\phi)=\frac{\overline{\delta}_{qj}(\tau)\,\mathcal{R}_{q,0}^{=}\!\bigl(u_{jq},\widetilde{Y}_{qj}(\tau)\mid\bZ_j\bigr)\,\delta_{qj}(\tau)\,\phi\!\bigl(\widetilde{Y}_{qj}(\tau)\bigr)}{G_0\!\bigl(\widetilde{Y}_{qj}(\tau)\mid\bZ_j\bigr)}.
\]
Conditional on $(C_j,\bZ_j,A_j=0)$, the event
\[\left\{\overline{\delta}_{qj}(\tau)=1,\;\delta_{qj}(\tau)=1,\;\widetilde{Y}_{qj}(\tau)\in[t,t+\dee t)\right\}\]
is equivalent to
\[
\left\{T_{1j}>C_j\wedge\tau,\dots,T_{(q-1)j}>C_j\wedge\tau,\;T_{qj}\in[t,t+\dee t),\;C_j>t\right\}.
\]
Therefore, by the definition of $\mathcal{H}_{q,0}$ and of $\mathcal{R}_{q,0}^{=}$ in \eqref{eq:Rq},
{\small
\begin{align}
\E\!\left\{B_{jq}(\phi)\mid \bZ_j,A_j=0\right\}
&=\E_C\!\left[\int_0^\tau\frac{\Ind{C_j>t}}{G_0(t\mid\bZ_j)}\,\mathcal{R}_{q,0}^{=}(C_j\wedge\tau,t\mid\bZ_j)\,\phi(t)\,\mathcal{H}_{q,0}(C_j\wedge\tau,t\mid\bZ_j)\,\dee t\Biggm| \bZ_j,A_j=0\right]\notag\\
&=\E_C\!\left[\int_0^\tau\frac{\Ind{C_j>t}}{G_0(t\mid\bZ_j)}\,\phi(t)\,\mathcal{H}_{q,0}(\tau,t\mid\bZ_j)\,\dee t\Biggm|\bZ_j,A_j=0\right]\notag\\
&=\int_0^\tau\phi(t)\,\mathcal{H}_{q,0}(\tau,t\mid\bZ_j)\,\E_C\!\left[\frac{\Ind{C_j>t}}{G_0(t\mid\bZ_j)}\Biggm| \bZ_j,A_j=0\right]\dee t\notag\\
&=\int_0^\tau \phi(t)\,\mathcal{H}_{q,0}(\tau,t\mid\bZ_j)\,\dee t.
\label{eq:ctw-proof-control-final}
\end{align}
}Now condition on $(\bZ_i,\bZ_j)$. By cross arm independence from Regularity Condition (\ref{ass:A1}), together with \eqref{eq:ctw-proof-treated-final} and \eqref{eq:ctw-proof-control-final}, we can show that
\begin{align}
\E\!\left\{K_{ij,q}^{(\text{ctw})}\mid \bZ_i,\bZ_j\right\}
&=\E\!\left[A_{iq}\!\left(\widetilde{Y}_{qj}(\tau)\right)\,\frac{\overline{\delta}_{qj}(\tau)\,\mathcal{R}_{q,0}^{=}\!\bigl(u_{jq},\widetilde{Y}_{qj}(\tau)\mid\bZ_j\bigr)\,\delta_{qj}(\tau)}{
G_0\!\bigl(\widetilde{Y}_{qj}(\tau)\mid\bZ_j\bigr)}
\Biggm| \bZ_i,\bZ_j
\right]\notag\\
&=\int_0^\tau\mathcal{S}_{q,1}(\tau,t\mid\bZ_i)\,\mathcal{H}_{q,0}(\tau,t\mid\bZ_j)\,\dee t.
\label{eq:ctw-proof-integral}
\end{align}
The right-hand side is exactly
\[
\Pr\!\bigl(T_{1i}>\tau,\dots,T_{(q-1)i}>\tau,\;T_{1j}>\tau,\dots,T_{(q-1)j}>\tau,\;T_{qi}>T_{qj},\;T_{qj}\le\tau\mid \bZ_i,\bZ_j,\;A_i=1,\;A_j=0\bigr),
\]
because $\mathcal{S}_{q,1}(\tau,t\mid\bZ_i)$ is the conditional survival function for the treated subject beyond $t$ on component $q$ while remaining tied through $\tau$ on all higher priority components, and $\mathcal{H}_{q,0}(\tau,t\mid\bZ_j)\,\dee t$ is the corresponding conditional subdensity that the control subject remains tied through $\tau$ on all higher priority components and experiences the component $q$ event in $[t,t+\dee t)$. Therefore, we obtain
\begin{align*}
&\E\!\left\{K_{ij,q}^{(\text{ctw})}\mid \bZ_i,\bZ_j\right\}\\
=&\E\!\left[
\left\{\prod_{k<q}\Ind{T_{ki}>\tau,T_{kj}>\tau}\right\}\Ind{T_{qi}>T_{qj},\;T_{qj}\le\tau}\Biggm| \bZ_i,\bZ_j,\;A_i=1,\;A_j=0\right],
\end{align*}
and taking expectation over $(\bZ_i,\bZ_j)$ yields \(\E\!\left\{K_{ij,q}^{(\text{ctw})}\right\}=\pi_{tq}(\tau)\).
Thus the oracle component-$q$ kernel is unbiased for every $q\ge2$.

Under Regularity Condition \eqref{ass:A2}, the IPCW factor is bounded by $\epsilon_G^{-2}$. In addition, the conditional tie ratios lie in $[0,1]$ by construction, so \(|K_{ij,q}^{(\text{ctw})}|\le \epsilon_G^{-2}\) uniformly in $(i,j)$ for each fixed $q\ge2$. Together with Regularity Conditions \eqref{ass:A1} and \eqref{ass:A3}, the law of large numbers for two-sample $U$-statistics implies
\[
\frac{1}{n_1n_0}\sum_{i:\,A_i=1}\sum_{j:\,A_j=0}K_{ij,q}^{(\text{ctw})}\xrightarrow{p}\pi_{tq}(\tau),\qquad q=2,\dots,Q.
\]

Finally, it remains to replace the oracle nuisance functions by their estimators. Let $\widehat{K}_{ij,q}^{(\text{ctw})}(\cdot)$ denote the feasible kernel obtained by substituting $\widehat{G}_a(\cdot)$, $\widehat{\mathcal{S}}_{q,a}(\cdot)$, and $\widehat{\mathcal{H}}_{q,a}(\cdot)$ into $K_{ij,q}^{(\text{ctw})}(\cdot)$. Under Regularity Condition \eqref{ass:A4}, these nuisance estimators are consistent, and by Regularity Condition \eqref{ass:A2} the denominators are bounded away from zero on $0\le t\le u\le\tau$. Therefore, by continuity of the maps \((x,y)\mapsto \frac{x}{y}\), \((x,y)\mapsto \frac{1}{xy}\), we have
\[
\sup_{i,j}\left|\widehat{K}_{ij,q}^{(\text{ctw})}-K_{ij,q}^{(\text{ctw})}\right|=o_p(1),
\qquad q=2,\dots,Q.
\]
Hence,
\[
\frac{1}{n_1n_0}\sum_{i:\,A_i=1}\sum_{j:\,A_j=0}\left(\widehat{K}_{ij,q}^{(\text{ctw})}-K_{ij,q}^{(\text{ctw})}\right)=o_p(1),
\qquad q=2,\dots,Q,
\]
which gives \(\widehat{\pi}_{tq}^{(\text{ctw})}(\tau)\xrightarrow{p}\pi_{tq}(\tau)\) for \(q=2,\dots,Q\). Combining this with the already established consistency of the first priority term, \(\widehat{\pi}_{t1}^{(\text{ctw})}(\tau)\xrightarrow{p}\pi_{t1}(\tau)\),
we obtain
\[
\widehat{\pi}_{t}^{(\text{ctw})}(\tau)=\sum_{q=1}^Q\widehat{\pi}_{tq}^{(\text{ctw})}(\tau)\xrightarrow{p}\sum_{q=1}^Q\pi_{tq}(\tau)=\pi_t(\tau).
\]
The consistency of $\widehat{\pi}_{c}^{(\text{ctw})}(\tau)$ follows by the same argument with the roles of treatment and control interchanged. Finally, whenever $\pi_c(\tau)>0$, the continuous mapping theorem implies that
\[
\widehat{\WR}^{(\text{ctw})}(\tau)\xrightarrow{p}\WR(\tau),\qquad
\widehat{\NB}^{(\text{ctw})}(\tau)\xrightarrow{p}\NB(\tau),\qquad
\widehat{\WO}^{(\text{ctw})}(\tau)\xrightarrow{p}\WO(\tau).
\]
This completes the proof.
\end{proof}


\section{Copula implementation of the conditional tie probabilities} 
\label{supp:copula}

This section gives implementation details for the copula-based nuisance model used to estimate the conditional tie probabilities $\mathcal{R}_{q,a}^{>}(u,t\mid\bz)$ and $\mathcal{R}_{q,a}^{=}(u,t\mid\bz)$ in \eqref{eq:Rq} of the main text. As discussed in Section \ref{sec:ctw}, the proposed estimator requires a model for the joint distribution of the higher-priority event times and the current component event time, conditional on treatment arm and baseline covariates. A convenient approach is to estimate the component-specific marginal survival functions separately and then link them through a survival copula. This separates marginal event-risk modeling from within-individual dependence modeling among endpoint components \citep{nelsen2006introduction,joe1997multivariate}.

For each arm $a\in\{0,1\}$ and component $r\in\{1,\ldots,Q\}$, let
\[
S_{r,a}(s\mid\bz)=\Pr(T_r>s\mid A=a,\bZ=\bz)
\]
denote the marginal survival function and let
\[
f_{r,a}(s\mid\bz)=-\partial S_{r,a}(s\mid\bz)/\partial s
\]
denote the corresponding marginal density. For a current priority level $q\ge2$, let $C_{q,a}(\cdot;\theta_{q,a})$ be a $q$-dimensional survival copula, possibly depending on both treatment arm $a$ and priority level $q$. We model the joint survival function in \eqref{eq:jointSq} by
\[
\mathcal{S}_{q,a}(u,t\mid\bz)
=
C_{q,a}
\{S_{1,a}(u\mid\bz),\ldots,S_{q-1,a}(u\mid\bz),S_{q,a}(t\mid\bz);\theta_{q,a}\}.
\]
Let $\partial_q C_{q,a}$ denote the partial derivative of the copula with respect to its $q$th argument. Differentiating the joint survival function with respect to the current component time $t$ gives
\[
  \mathcal{H}_{q,a}(u,t\mid\bz)
  =
  \partial_q C_{q,a}
  \{S_{1,a}(u\mid\bz),\ldots,S_{q-1,a}(u\mid\bz),S_{q,a}(t\mid\bz);\theta_{q,a}\}
  f_{q,a}(t\mid\bz).
\]
Therefore, the two conditional tie probabilities in \eqref{eq:Rq} can be written as
\begin{align}
  \mathcal{R}_{q,a}^{>}(u,t\mid\bz)
  &=
  \frac{
  C_{q,a}\{S_{1,a}(\tau\mid\bz),\ldots,S_{q-1,a}(\tau\mid\bz),S_{q,a}(t\mid\bz);\theta_{q,a}\}
  }{
  C_{q,a}\{S_{1,a}(u\mid\bz),\ldots,S_{q-1,a}(u\mid\bz),S_{q,a}(t\mid\bz);\theta_{q,a}\}
  },
  \label{eq:supp-Rgt-copula}
  \\
  \mathcal{R}_{q,a}^{=}(u,t\mid\bz)
  &=
  \frac{
  \partial_q C_{q,a}\{S_{1,a}(\tau\mid\bz),\ldots,S_{q-1,a}(\tau\mid\bz),S_{q,a}(t\mid\bz);\theta_{q,a}\}
  }{
  \partial_q C_{q,a}\{S_{1,a}(u\mid\bz),\ldots,S_{q-1,a}(u\mid\bz),S_{q,a}(t\mid\bz);\theta_{q,a}\}
  }.
  \label{eq:supp-Req-copula}
\end{align}
The marginal density $f_{q,a}(t\mid\bz)$ cancels from $\mathcal{R}_{q,a}^{=}$, which is why the proposed estimator can be implemented with semiparametric Cox margins without estimating a smooth event-time density. Thus, the event-time nuisance model enters the conditional tie weights only through fitted marginal survival functions and copula derivatives.

For the common two-component setting, with death as the first-priority component and a nonfatal event as the second-priority component, the joint survival function reduces to
\[
\mathcal{S}_{2,a}(u,t\mid\bz)
=
C_a\{S_{1,a}(u\mid\bz),S_{2,a}(t\mid\bz);\theta_a\},
\]
where we suppress the priority subscript on $C$ and $\theta$. The corresponding conditional tie probabilities are
\begin{equation}
  \mathcal{R}_{2,a}^{>}(u,t\mid\bz)=
  \frac{
  C_a\{S_{1,a}(\tau\mid\bz),S_{2,a}(t\mid\bz);\theta_a\}
  }{
  C_a\{S_{1,a}(u\mid\bz),S_{2,a}(t\mid\bz);\theta_a\}
  },
  \qquad
  \mathcal{R}_{2,a}^{=}(u,t\mid\bz)=
  \frac{
  C_{a,2}\{S_{1,a}(\tau\mid\bz),S_{2,a}(t\mid\bz);\theta_a\}
  }{
  C_{a,2}\{S_{1,a}(u\mid\bz),S_{2,a}(t\mid\bz);\theta_a\}
  },
  \label{eq:supp-R2-copula}
\end{equation}
where $C_{a,2}(u,v;\theta_a)=\partial C_a(u,v;\theta_a)/\partial v$. Once the fitted marginal survival functions $\widehat S_{1,a}$ and $\widehat S_{2,a}$ and the fitted copula parameter $\widehat\theta_a$ are available, all pairwise conditional tie probabilities are obtained by direct substitution. As an analytic check, the independence copula $C(u,v)=uv$ gives
\[
\mathcal{R}_{2,a}^{>}(u,t\mid\bz)
=
\mathcal{R}_{2,a}^{=}(u,t\mid\bz)
=
\frac{S_{1,a}(\tau\mid\bz)}{S_{1,a}(u\mid\bz)},
\]
which is the marginal conditional probability of remaining event-free on the first-priority component from $u$ to $\tau$.

The main text reports four one-parameter copula families, Gumbel--Hougaard, Clayton, Frank, and Plackett, because these are the families used in the copula-sensitivity simulation study and they provide closed-form expressions for both $C(u,v)$ and $C_2(u,v)$. The proposed estimator is not restricted to those four choices. For completeness, Table \ref{tab:supp-copula} lists a broader bivariate copula library available under the same implementation framework, including the independence copula, four Archimedean families, two elliptical families, and the Plackett family \citep{gumbel1960bivariate,clayton1978model,frank1979simultaneous,joe1997multivariate,demarta2005tcopula,plackett1965class}. Thus, Table \ref{tab:copula-main} in the main text can be viewed as the simulation copula table, whereas Table \ref{tab:supp-copula} below introduces additional copula choices that can be substituted into \eqref{eq:supp-Rgt-copula}--\eqref{eq:supp-Req-copula}. The Archimedean and Plackett families have analytic derivatives, while the Gaussian and Student $t$ copulas are written in terms of the corresponding bivariate distribution functions. We summarize each family by its copula function $C(u,v)$ and the partial derivative $C_2(u,v)=\partial C(u,v)/\partial v$ needed to evaluate $\mathcal{R}_{2,a}^{=}(\cdot)$.

\begin{table}[htbp]
\centering
\caption{Supplementary bivariate copula families available in the proposed estimator \eqref{eq:ctw-win}. The four families used in the main copula-sensitivity simulation study are summarized in Table \ref{tab:copula-main}; this table additionally lists the independence, Joe, Gaussian, and Student $t$ copulas. The partial derivative $C_2(u,v)=\partial C(u,v)/\partial v$ is used to evaluate the conditional tie ratio $\mathcal{R}_{2,a}^{=}$ in \eqref{eq:supp-R2-copula}.}
\label{tab:supp-copula}
\resizebox{\textwidth}{!}{%
\begin{tabular}{ccc}
\toprule
Family
&
Copula $C(u,v)$
&
Partial derivative $C_2(u,v)$
\\
\midrule
Independence
&
$uv$
&
$u$
\\[2mm]
\midrule
Gumbel--Hougaard
&
\shortstack[c]{
$\exp\left\{-A^{1/\theta}\right\}$, $\theta\ge 1$\\
$A=(-\log u)^\theta+(-\log v)^\theta$
}
&
$C(u,v)A^{1/\theta-1}(-\log v)^{\theta-1}/v$
\\[3mm]
\midrule
Clayton
&
\shortstack[c]{
$B^{-1/\theta}$, $\theta>0$\\
$B=u^{-\theta}+v^{-\theta}-1$
}
&
$v^{-\theta-1}B^{-1/\theta-1}$
\\[3mm]
\midrule
Frank
&
\shortstack[c]{
$-\theta^{-1}\log(B)$, $\theta\ne 0$\\
$B=1+\dfrac{(e^{-\theta u}-1)(e^{-\theta v}-1)}{e^{-\theta}-1}$
}
&
$\dfrac{e^{-\theta v}(e^{-\theta u}-1)}{(e^{-\theta}-1)B}$
\\[3mm]
\midrule
Joe
&
\shortstack[c]{
$1-D^{1/\theta}$, $\theta\ge 1$\\
$D=(1-u)^\theta+(1-v)^\theta-(1-u)^\theta(1-v)^\theta$
}
&
$D^{1/\theta-1}(1-v)^{\theta-1}\{1-(1-u)^\theta\}$
\\[3mm]
\midrule
Gaussian
&
\shortstack[c]{
$\Phi_\rho(x,y)$, $-1<\rho<1$\\
$x=\Phi^{-1}(u)$, $y=\Phi^{-1}(v)$
}
&
$\Phi\left\{\dfrac{x-\rho y}{(1-\rho^2)^{1/2}}\right\}$
\\[3mm]
\midrule
Student $t$
&
\shortstack[c]{
$t_{\nu,\rho}(x,y)$, $-1<\rho<1$, $\nu>0$\\
$x=t_\nu^{-1}(u)$, $y=t_\nu^{-1}(v)$
}
&
$t_{\nu+1}\left[\left\{\dfrac{\nu+1}{\nu+y^2}\right\}^{1/2}\dfrac{x-\rho y}{(1-\rho^2)^{1/2}}\right]$
\\[3mm]
\midrule
Plackett
&
\shortstack[c]{
$\dfrac{B-D^{1/2}}{2(\psi-1)}$, $\psi>0$, $\psi\ne 1$\\
$B=1+(\psi-1)(u+v)$\\
$D=B^2-4\psi(\psi-1)uv$
}
&
$\dfrac{1}{2}\left\{1-\dfrac{B-2\psi u}{D^{1/2}}\right\}$
\\
\bottomrule
\end{tabular}%
}
\end{table}

Within each treatment arm and for each component $r\in\{1,\ldots,Q\}$, we fit a marginal survival model using the observed time and event indicator pair $\{\widetilde Y_{ri}(\tau),\delta_{ri}(\tau)\}$ together with the baseline covariates $\bZ_i$. In the semiparametric implementation used in the main analyses, the marginal model is the Cox proportional hazards model
\[
  \lambda_{r,a}(s\mid\bz)
  =
  \lambda_{0r,a}(s)\exp(\bm\beta_{r,a}^{\top}\bz),
  \qquad
  S_{r,a}(s\mid\bz)
  =
  \exp\{-\Lambda_{0r,a}(s)\exp(\bm\beta_{r,a}^{\top}\bz)\},
\]
where $\lambda_{0r,a}(s)$ is the unspecified baseline hazard, $\Lambda_{0r,a}(s)=\int_0^s\lambda_{0r,a}(v)\dee v$ is the baseline cumulative hazard, and $\bm\beta_{r,a}$ is the regression coefficient vector. The fitted survival function is
\[
\widehat S_{r,a}(s\mid\bz)
=
\exp\{-\widehat\Lambda_{0r,a}(s)\exp(\widehat{\bm\beta}_{r,a}^{\top}\bz)\},
\]
where $\widehat{\bm\beta}_{r,a}$ is the partial likelihood estimator and $\widehat\Lambda_{0r,a}$ is the Breslow estimator. Parametric alternatives such as Weibull, log-normal, or generalized gamma margins can be substituted without changing the form of the proposed estimator.

The copula parameter is estimated after fitting the marginal models. This two-stage procedure is often called inference functions for margins or pseudo-likelihood estimation \citep{shih1995inferences,genest1993statistical}. We describe the bivariate case, which is the default implementation for $Q=2$. For individual $i$ in arm $a$, define $x_{ri}=\widetilde Y_{ri}(\tau)$, $d_{ri}=\delta_{ri}(\tau)$, and the fitted copula uniform $\widehat U_{ri}=\widehat S_{r,a}(x_{ri}\mid\bZ_i)$ for $r=1,2$. Let $c_\theta(u,v)=\partial^2 C_\theta(u,v)/\partial u\,\partial v$ denote the copula density and write $C_{\theta,1}=\partial C_\theta/\partial u$ and $C_{\theta,2}=\partial C_\theta/\partial v$ for the first-order partial derivatives. With the fitted margins treated as fixed, the copula pseudo log-likelihood for arm $a$ is
\begin{align*}
  \ell_a(\theta)=
  \sum_{i:A_i=a}
  \bigl\{
  &d_{1i}d_{2i}\log c_\theta(\widehat U_{1i},\widehat U_{2i})
  +d_{1i}(1-d_{2i})\log C_{\theta,1}(\widehat U_{1i},\widehat U_{2i})\\
  &+(1-d_{1i})d_{2i}\log C_{\theta,2}(\widehat U_{1i},\widehat U_{2i})
  +(1-d_{1i})(1-d_{2i})\log C_\theta(\widehat U_{1i},\widehat U_{2i})
  \bigr\}.
\end{align*}
The marginal density terms drop out because they do not depend on $\theta$ in the second-stage fit. We then estimate $\theta_a$ by
\[
\widehat\theta_a
=
\operatorname*{arg\,max}_{\theta\in\Theta}\ell_a(\theta),
\]
where $\Theta$ is the parameter space of the chosen family. For one-parameter Archimedean families, an estimate based on Kendall's $\tau_K$ is useful as an initial value when it lies in the admissible range. The Gumbel and Clayton starting values can be obtained from $\theta=1/(1-\widehat\tau_K)$ and $\theta=2\widehat\tau_K/(1-\widehat\tau_K)$, respectively. The Frank parameter is obtained by solving
\[
\widehat\tau_K
=
1-\frac{4}{\theta}+\frac{4D_1(\theta)}{\theta},
\qquad
D_1(\theta)
=
\theta^{-1}\int_0^\theta \frac{x}{\exp(x)-1}\dee x,
\]
where $D_1(\theta)$ is the first Debye function.

For $q>2$, the same construction applies with a $q$-dimensional copula. Let $x_{ri}=\widetilde Y_{ri}(\tau)$, $d_{ri}=\delta_{ri}(\tau)$, and $\widehat U_{ri}=\widehat S_{r,a}(x_{ri}\mid\bZ_i)$ for $r=1,\ldots,q$. If $\mathcal{D}_i=\{r:d_{ri}=1\}$ is the set of observed event components for individual $i$, then the copula contribution to the likelihood is the mixed partial derivative
\[
\partial_{\mathcal{D}_i}
C_{q,a}(\widehat U_{1i},\ldots,\widehat U_{qi};\theta_{q,a}),
\]
taken with respect to the arguments indexed by $\mathcal{D}_i$ and with no derivative taken for censored components. When the hierarchy is short, a full $q$-dimensional Archimedean or elliptical copula is straightforward. When the hierarchy is longer and the dependence structure is not plausibly exchangeable, pairwise composite likelihoods, nested Archimedean copulas, or vine copulas can be used. The proposed estimator only requires evaluating $C_{q,a}$ and $\partial_q C_{q,a}$ at the fitted model.

A simple extension beyond two components is the exchangeable Archimedean copula
\[
  C_\theta(u_1,\ldots,u_q)
  =
  \varphi_\theta^{-1}
  \{\varphi_\theta(u_1)+\cdots+\varphi_\theta(u_q)\},
\]
where $\varphi_\theta:[0,1]\to[0,\infty)$ is a strictly decreasing convex generator with $\varphi_\theta(1)=0$ and $\varphi_\theta^{-1}$ is its inverse. For such copulas,
\[
  \partial_q C_\theta(u_1,\ldots,u_q)
  =
  \{\varphi_\theta^{-1}\}'
  \{\varphi_\theta(u_1)+\cdots+\varphi_\theta(u_q)\}
  \varphi_\theta'(u_q),
\]
where $\varphi_\theta'$ and $\{\varphi_\theta^{-1}\}'$ are the derivatives of the generator and its inverse. Hence $\mathcal{R}_{q,a}^{>}$ is obtained from \eqref{eq:supp-Rgt-copula} by replacing the first $q-1$ copula arguments with the higher-priority marginal survivals at $\tau$ or $u$, and $\mathcal{R}_{q,a}^{=}$ is obtained from \eqref{eq:supp-Req-copula} by taking the ratio of the corresponding partial derivatives. The Gumbel generator is $\varphi_\theta(u)=(-\log u)^\theta$ with $\theta\ge1$, the Clayton generator is $\varphi_\theta(u)=(u^{-\theta}-1)/\theta$ with $\theta>0$, and the Frank generator is
\[
\varphi_\theta(u)
=
-\log\left\{
\frac{e^{-\theta u}-1}{e^{-\theta}-1}
\right\},
\qquad \theta\ne0.
\]
If the dependence between adjacent components is expected to differ from the dependence between clinically distant components, a nested Archimedean copula or a vine construction provides additional flexibility. Such choices affect only the nuisance functions $\widehat{\mathcal{R}}_{q,a}^{>}$ and $\widehat{\mathcal{R}}_{q,a}^{=}$, and the IPCW structure of \eqref{eq:ctw-win} is unchanged.

In numerical implementation, the fitted marginal survival probabilities are truncated to a small interval $[\epsilon,1-\epsilon]$ with $\epsilon=10^{-6}$ before evaluating logarithms and copula derivatives. The ratios $\widehat{\mathcal{R}}_{q,a}^{>}$ and $\widehat{\mathcal{R}}_{q,a}^{=}$ are conditional probabilities and therefore lie in $[0,1]$ under a valid fitted joint distribution, and small numerical deviations are truncated back to $[0,1]$. When $u=\tau$, both ratios equal one by construction, which provides a useful diagnostic. When a higher-priority event is observed before $\tau$, the factor $\overline{\delta}_{qi}(\tau)$ in \eqref{eq:ctw-win} equals zero and the value assigned to $u_{iq}$ does not affect the estimator.

The copula model is modular. Any method that consistently estimates $\mathcal{S}_{q,a}$ and $\mathcal{H}_{q,a}$, or equivalently the two ratios $\mathcal{R}_{q,a}^{>}$ and $\mathcal{R}_{q,a}^{=}$, can replace the copula construction. Multistate models, frailty models, and semi-competing risks models are therefore compatible with the same proposed estimator. The copula implementation is attractive because it is transparent, computationally simple, and based on interpretable dependence families.

\section{Proof of Theorem \ref{thm:asymp} and Corollary \ref{cor:delta} in Section \ref{sec:asymptotics} of the main paper}
\label{supp:asymp}

We prove Theorem \ref{thm:asymp} in two steps. First, we derive the first-order expansion of the oracle two-sample $U$-statistic obtained by fixing the nuisance functions at their true values. Second, we add the first-order effect of estimating the nuisance functions.

\begin{proof}[Proof of Theorem \ref{thm:asymp}]
Fix $\nu\in\{\text{ipcw},\text{ctw}\}$. Let
\[
  \widehat{\bpi}^{(\nu),\mathrm{or}}(\tau)
  =
  \frac{1}{n_1n_0}
  \sum_{i=1}^{n_1}\sum_{j=1}^{n_0}
  \bm{H}^{(\nu)}(O_{1i},O_{0j};\eta_{\nu0})
\]
denote the oracle estimator. Here and below, $O_1$ and $O_0$ denote independent generic observations from the treatment and control arms, respectively. By the identification and consistency arguments in Sections \ref{supp:consistent_cui} and \ref{supp:consistent_proposed},
\[
  \E\{\bm{H}^{(\nu)}(O_1,O_0;\eta_{\nu0})\}=\bpi(\tau),
\]
so the oracle statistic is centered at the target win and loss probability vector.

Define the first-order Hoeffding projections
\[
  \bm{\xi}_{1,\nu}(o_1)=\E\{\bm{H}^{(\nu)}(o_1,O_0;\eta_{\nu0})\}-\bpi(\tau),
  \qquad
  \bm{\xi}_{0,\nu}(o_0)=
  \E\{\bm{H}^{(\nu)}(O_1,o_0;\eta_{\nu0})\}-\bpi(\tau),
\]
and define the degenerate remainder kernel
\[
  \widetilde{\bm{H}}^{(\nu)}(o_1,o_0)=\bm{H}^{(\nu)}(o_1,o_0;\eta_{\nu0})-\bpi(\tau)-\bm{\xi}_{1,\nu}(o_1)-\bm{\xi}_{0,\nu}(o_0).
\]
By construction, \(\E\{\widetilde{\bm{H}}^{(\nu)}(O_1,O_0)\mid O_1\}=\bm{0}\),  \(\E\{\widetilde{\bm{H}}^{(\nu)}(O_1,O_0)\mid O_0\}=\bm{0}\). The two-sample Hoeffding decomposition therefore, gives
\begin{equation*}
  \widehat{\bpi}^{(\nu),\mathrm{or}}(\tau)-\bpi(\tau)=
  \frac{1}{n_1}\sum_{i=1}^{n_1}\bm{\xi}_{1,\nu}(O_{1i})+
  \frac{1}{n_0}\sum_{j=1}^{n_0}\bm{\xi}_{0,\nu}(O_{0j})+\bm{R}_{n,\nu},
\end{equation*}
where
\[
  \bm{R}_{n,\nu}=\frac{1}{n_1n_0}\sum_{i=1}^{n_1}\sum_{j=1}^{n_0}\widetilde{\bm{H}}^{(\nu)}(O_{1i},O_{0j}).
\]
Under Assumptions \ref{ass:A1} and \ref{ass:A3}, the oracle kernel is square-integrable. Because the degenerate kernel has mean zero after conditioning on either arm, all covariance terms in $\E\|\bm{R}_{n,\nu}\|^2$ vanish unless both the treatment-arm and control-arm indices coincide. Hence, \(\E\|\bm{R}_{n,\nu}\|^2=O\{(n_1n_0)^{-1}\}=O(n^{-2})\), and therefore $\bm{R}_{n,\nu}=O_p\{(n_1n_0)^{-1/2}\}=o_p(n^{-1/2})$. Thus,
\begin{equation}\label{eq:oracle-linear}
  \widehat{\bpi}^{(\nu),\mathrm{or}}(\tau)-\bpi(\tau)
  =\frac{1}{n_1}\sum_{i=1}^{n_1}\bm{\xi}_{1,\nu}(O_{1i})+\frac{1}{n_0}\sum_{j=1}^{n_0}\bm{\xi}_{0,\nu}(O_{0j})+
  o_p(n^{-1/2}).
\end{equation}

We next account for nuisance estimation. Write
\[
  \Delta_{n,\nu}=\frac{1}{n_1n_0}\sum_{i=1}^{n_1}\sum_{j=1}^{n_0}\left[\bm{H}^{(\nu)}(O_{1i},O_{0j};\widehat{\eta}_{\nu})-\bm{H}^{(\nu)}(O_{1i},O_{0j};\eta_{\nu0})\right].
\]
The stochastic equicontinuity and plug-in remainder condition in Assumption \ref{ass:A5} imply
\begin{equation}\label{eq:plugin-step}
  \Delta_{n,\nu}=\Psi_\nu(\widehat{\eta}_{\nu})-\Psi_\nu(\eta_{\nu0})+o_p(n^{-1/2}),
\end{equation}
where $\Psi_\nu(\eta_\nu)=\E\{\bm{H}^{(\nu)}(O_1,O_0;\eta_\nu)\}$. Since $\Psi_\nu$ is Frechet differentiable at $\eta_{\nu0}$,
\[
  \Psi_\nu(\widehat{\eta}_{\nu})-\Psi_\nu(\eta_{\nu0})=\Gamma_{1,\nu}[\widehat{\eta}_{1,\nu}-\eta_{1,\nu0}]+\Gamma_{0,\nu}[\widehat{\eta}_{0,\nu}-\eta_{0,\nu0}]+o_p(n^{-1/2}),
\]
where $\Gamma_{a,\nu}$ is the derivative of $\Psi_\nu$ with respect to the arm-$a$ nuisance component. By Assumption \ref{ass:A4},
\[
  \widehat{\eta}_{a,\nu}-\eta_{a,\nu0}=\frac{1}{n_a}\sum_{\ell=1}^{n_a}\kappa_{a,\nu}(O_{a\ell})+o_p(n_a^{-1/2}),\qquad a\in\{0,1\}.
\]
Because $\Gamma_{a,\nu}$ is linear and continuous,
\[
  \Gamma_{a,\nu}[\widehat{\eta}_{a,\nu}-\eta_{a,\nu0}]=\frac{1}{n_a}\sum_{\ell=1}^{n_a}\Gamma_{a,\nu}[\kappa_{a,\nu}(O_{a\ell})]+o_p(n_a^{-1/2}), \qquad a\in\{0,1\}.
\]
Substituting this expansion into \eqref{eq:plugin-step} gives
\begin{equation}\label{eq:nuisance-linear}
  \widehat{\bpi}^{(\nu)}(\tau)-\widehat{\bpi}^{(\nu),\mathrm{or}}(\tau)=
  \frac{1}{n_1}\sum_{i=1}^{n_1}\Gamma_{1,\nu}[\kappa_{1,\nu}(O_{1i})]+\frac{1}{n_0}\sum_{j=1}^{n_0}\Gamma_{0,\nu}[\kappa_{0,\nu}(O_{0j})]+o_p(n^{-1/2}).
\end{equation}
Combining \eqref{eq:oracle-linear} and \eqref{eq:nuisance-linear},
\[
  \widehat{\bpi}^{(\nu)}(\tau)-\bpi(\tau)=\frac{1}{n_1}\sum_{i=1}^{n_1}\left\{\bm{\xi}_{1,\nu}(O_{1i})+\Gamma_{1,\nu}[\kappa_{1,\nu}(O_{1i})]\right\}+\frac{1}{n_0}\sum_{j=1}^{n_0}\left\{\bm{\xi}_{0,\nu}(O_{0j})+\Gamma_{0,\nu}[\kappa_{0,\nu}(O_{0j})]\right\}+o_p(n^{-1/2}).
\]
With \(\bm{\psi}_{a,\nu}(O_{ai})=\bm{\xi}_{a,\nu}(O_{ai})+\Gamma_{a,\nu}[\kappa_{a,\nu}(O_{ai})]\), \(a\in\{0,1\}\), this becomes
\[
  \widehat{\bpi}^{(\nu)}(\tau)-\bpi(\tau)=\frac{1}{n_1}\sum_{i=1}^{n_1}\bm{\psi}_{1,\nu}(O_{1i})
  +\frac{1}{n_0}\sum_{j=1}^{n_0}\bm{\psi}_{0,\nu}(O_{0j})+o_p(n^{-1/2}).
\]
Multiplying by $\sqrt{n}$ yields the asserted linear expansion,
\[
  \sqrt{n}\{\widehat{\bpi}^{(\nu)}(\tau)-\bpi(\tau)\}=\frac{\sqrt{n}}{n_1}\sum_{i=1}^{n_1}\bm{\psi}_{1,\nu}(O_{1i})+\frac{\sqrt{n}}{n_0}\sum_{j=1}^{n_0}\bm{\psi}_{0,\nu}(O_{0j})+o_p(1).
\]

Finally, $\E\{\bm{\xi}_{a,\nu}(O_a)\}=\bm{0}$ by definition and $\E\{\kappa_{a,\nu}(O_a)\}=\bm{0}$ by Assumption \ref{ass:A4}, so $\E\{\bm{\psi}_{a,\nu}(O_a)\}=\bm{0}$. Under Assumptions \ref{ass:A1}, \ref{ass:A3}, and \ref{ass:A4}, $\bm{\psi}_{a,\nu}(O_a)$ is square-integrable. The multivariate central limit theorem gives
\[
  \frac{1}{\sqrt{n_1}}\sum_{i=1}^{n_1}\bm{\psi}_{1,\nu}(O_{1i})
  \xrightarrow{d} \mathcal{N}\!\left(\bm{0},\Var\{\bm{\psi}_{1,\nu}(O_1)\}\right),
\]
and
\[
  \frac{1}{\sqrt{n_0}}\sum_{j=1}^{n_0}\bm{\psi}_{0,\nu}(O_{0j})\xrightarrow{d}\mathcal{N}\!\left(\bm{0},\Var\{\bm{\psi}_{0,\nu}(O_0)\}\right).
\]
The two limiting normal variables are independent because the two treatment arms are independent. Since $n_1/n\to\rho$ and $n_0/n\to1-\rho$, Slutsky's theorem gives
\[
  \sqrt{n}\{\widehat{\bpi}^{(\nu)}(\tau)-\bpi(\tau)\}\xrightarrow{d}\mathcal{N}(\bm{0},\bOmega_\nu),
\]
where \(\bOmega_\nu=\rho^{-1}\Var\{\bm{\psi}_{1,\nu}(O_1)\}+(1-\rho)^{-1}\Var\{\bm{\psi}_{0,\nu}(O_0)\}\).  This proves Theorem \ref{thm:asymp}.
\end{proof}

The corollary follows from the multivariate delta method applied to smooth functions of $(\pi_t,\pi_c)$.

\begin{proof}[Proof of Corollary \ref{cor:delta}]
Define \( g_{\WR}(x,y)=\log(x/y)\), \(g_{\NB}(x,y)=x-y\) and
\[
g_{\WO}(x,y)=\log\!\left[\frac{x+\frac12(1-x-y)}{y+\frac12(1-x-y)}\right]=\log\!\left(\frac{1+x-y}{1-x+y}\right).
\]
Under the stated conditions, these three maps are continuously differentiable at $(\pi_t(\tau),\pi_c(\tau))$. Their gradients are
\[
  \nabla g_{\WR}(x,y)=
  \begin{pmatrix}
    1/x\\[1mm]
    -1/y
  \end{pmatrix},
  \qquad
  \nabla g_{\NB}(x,y)=
  \begin{pmatrix}
    1\\[1mm]
    -1
  \end{pmatrix},
\]
and
\[
  \nabla g_{\WO}(x,y)
  =
  \frac{2}{1-(x-y)^2}
  \begin{pmatrix}
    1\\[1mm]
    -1
  \end{pmatrix}.
\]
Evaluating these gradients at $(\pi_t(\tau),\pi_c(\tau))$ gives
\[
  \bm{g}_{\WR}
  =
  \begin{pmatrix}
    1/\pi_t(\tau)\\[1mm]
    -1/\pi_c(\tau)
  \end{pmatrix},
  \qquad
  \bm{g}_{\NB}
  =
  \begin{pmatrix}
    1\\[1mm]
    -1
  \end{pmatrix},
\]
and
\[
  \bm{g}_{\WO}
  =
  \frac{2}{1-\{\pi_t(\tau)-\pi_c(\tau)\}^2}
  \begin{pmatrix}
    1\\[1mm]
    -1
  \end{pmatrix}.
\]
Applying the multivariate delta method to Theorem \ref{thm:asymp} yields the three stated limiting distributions.
\end{proof}

\section{Derivation of the influence-function corrections and sandwich variance estimators}
\label{supp:ifdetails}

This section gives the details behind the influence functions used for variance estimation. The main paper shows the asymptotic expansion in terms of the individual-level influence function
\[
  \bm{\psi}_{a,\nu}(O_{ai})=
  \bm{\xi}_{a,\nu}(O_{ai})+\bm{r}_{a,\nu}(O_{ai}),\qquad
  \bm{r}_{a,\nu}(O_{ai})=\Gamma_{a,\nu}[\kappa_{a,\nu}(O_{ai})],
\]
where $\bm{\xi}_{a,\nu}$ is the first-order Hoeffding projection and $\bm{r}_{a,\nu}$ is the nuisance estimation correction. We now make $\bm{r}_{a,\nu}$ explicit for the censoring model and, for the CTW estimator, for the event-time model used to estimate the conditional tie probabilities.

\subsection{General form of the nuisance corrections}

Recall that
\[
  \Psi_\nu(\eta_\nu)=\E\{\bm{H}^{(\nu)}(O_1,O_0;\eta_\nu)\},
  \qquad\eta_\nu=(\eta_{1,\nu},\eta_{0,\nu}).
\]
Let \(D_{a,\nu}\bm{H}^{(\nu)}(o_1,o_0;\eta_{\nu0})[h_a]\) denote the Frechet derivative of the pairwise kernel with respect to the arm-$a$ nuisance component in the direction \(h_a\). By linearity of expectation,
\[
  \Gamma_{a,\nu}[h_a] = \E\!\left\{D_{a,\nu}\bm{H}^{(\nu)}(O_1,O_0;\eta_{\nu0})[h_a]
  \right\},
  \qquad a\in\{0,1\}.
\]
When the arm-$a$ nuisance estimator has influence function $\kappa_{a,\nu}(O_{ai})$, the corresponding nuisance correction for a treatment-arm subject is obtained by holding $O_{1i}$ fixed and averaging over a generic control arm observation:
\[
  \bm{r}_{1i,\nu}=\E\!\left[D_{1,\nu}\bm{H}^{(\nu)}(O_{1i},O_0;\eta_{\nu0})[\kappa_{1,\nu}(O_{1i})]\Bigm| O_{1i}\right].
\]
Similarly, for a control-arm subject,
\[
  \bm{r}_{0j,\nu} =\E\!\left[D_{0,\nu}\bm{H}^{(\nu)}(O_1,O_{0j};\eta_{\nu0})[\kappa_{0,\nu}(O_{0j})]\Bigm| O_{0j}\right].
\]
These two forms are simply the arm-specific version of the derivative term $\Gamma_{a,\nu}[\kappa_{a,\nu}(O_{ai})]$ in the main theorem. The remainder of this section evaluates them for the nuisance functions used by the two estimators.

\subsection{Censoring model correction}

The correction from estimating the censoring survival functions \(G_1(\cdot)\) and \(G_0(\cdot)\) is common to both estimators. Write \(K_q^{(\nu)}\) and \(L_q^{(\nu)}\) for the component-$q$ contributions to the win and loss kernels, so that
\[
  K^{(\nu)}=\sum_{q=1}^Q K_q^{(\nu)},
  \qquad
  L^{(\nu)}=\sum_{q=1}^Q L_q^{(\nu)}.
\]
Let \(t_{ij,q}^W\) and \(t_{ij,q}^L\) denote the observed times at which the censoring survival functions are evaluated in the component $q$ win and loss kernels. The exact value of these times is determined by the corresponding kernel. For example, for the first-priority win kernel, \(t_{ij,1}^W=\widetilde{Y}_{1j}(\tau)\), for the IPCW estimator of \eqref{eq:cuiA} at a lower priority level, the censoring time argument is $\tau$, and for the proposed estimator at a lower priority level, the censoring time argument is the observed component $q$ comparison time. This notation lets us write the correction in a single form.

For the treatment-arm censoring survival \(G_1(\cdot)\), the dependence of \(K_q^{(\nu)}\) and \(L_q^{(\nu)}\) on \(G_1(\cdot)\) enters through the factors \(1/G_1(t_{ij,q}^W\mid\bZ_i)\) and \(1/G_1(t_{ij,q}^L\mid\bZ_i)\), respectively. Thus, for a perturbation \(G_1^\varepsilon=G_1+\varepsilon h_1\),
\begin{align}
  D_{1,G}K_q^{(\nu)}[h_1]&=-K_q^{(\nu)}(O_{1i},O_{0j};\eta_{\nu0})\frac{h_1(t_{ij,q}^W\mid\bZ_i)}
       {G_1(t_{ij,q}^W\mid\bZ_i)},
  \label{eq:supp-DKG}
  \\
  D_{1,G}L_q^{(\nu)}[h_1]
  &=-L_q^{(\nu)}(O_{1i},O_{0j};\eta_{\nu0})\frac{h_1(t_{ij,q}^L\mid\bZ_i)}
       {G_1(t_{ij,q}^L\mid\bZ_i)}.
  \label{eq:supp-DLG}
\end{align}
Substituting the influence function $\kappa_{1i,G}$ of $\widehat G_1$ into \eqref{eq:supp-DKG}--\eqref{eq:supp-DLG} gives the treatment-arm censoring correction
\begin{equation}\label{eq:supp-rG-trt}
  \bm{r}_{1i,\nu}^{G}
  =-\E\!\left[\sum_{q=1}^Q
  \begin{pmatrix}
    K_q^{(\nu)}(O_{1i},O_0;\eta_{\nu0})\,
    \kappa_{1i,G}(t_{ij,q}^W\mid\bZ_i)\big/
    G_1(t_{ij,q}^W\mid\bZ_i)
    \\[4pt]
    L_q^{(\nu)}(O_{1i},O_0;\eta_{\nu0})\,
    \kappa_{1i,G}(t_{ij,q}^L\mid\bZ_i)\big/
    G_1(t_{ij,q}^L\mid\bZ_i)
  \end{pmatrix}
  \Biggm| O_{1i}
  \right].
\end{equation}
The control-arm correction is analogous, but the conditional expectation is taken over a generic treatment arm observation:
\begin{equation}\label{eq:supp-rG-ctl}
  \bm{r}_{0j,\nu}^{G}
  =-\E\!\left[\sum_{q=1}^Q
  \begin{pmatrix}
    K_q^{(\nu)}(O_1,O_{0j};\eta_{\nu0})\,
    \kappa_{0j,G}(t_{ij,q}^W\mid\bZ_j)\big/
    G_0(t_{ij,q}^W\mid\bZ_j)
    \\[4pt]
    L_q^{(\nu)}(O_1,O_{0j};\eta_{\nu0})\,
    \kappa_{0j,G}(t_{ij,q}^L\mid\bZ_j)\big/
    G_0(t_{ij,q}^L\mid\bZ_j)
  \end{pmatrix}
  \Biggm| O_{0j}
  \right].
\end{equation}
Equations \eqref{eq:supp-rG-trt} and \eqref{eq:supp-rG-ctl} show that the censoring correction has the same form for both estimators, with only the kernels and the time arguments \(t_{ij,q}^W,t_{ij,q}^L\) changing.

\subsubsection{Kaplan--Meier censoring model}

Under completely independent censoring within arm \(a\), \(G_a(t\mid\bZ)=G_a(t)\). Let \(X_i^{(a)}=D_i^{(a)}\wedge C_i^{(a)}\) be the observed follow-up time, where \(D_i^{(a)}\) denotes the time at which the clinical observation process would end in the absence of censoring and \(C_i^{(a)}\) is the censoring time. Define the censoring counting process \(N_{ai}^C(t)=\Ind{X_i^{(a)}\le t,\;D_i^{(a)}>X_i^{(a)}}\), the censoring at-risk process \(Y_{ai}^C(t)=\Ind{X_i^{(a)}\ge t}\), the censoring cumulative hazard \(\Lambda_a^C(t)\), and the censoring martingale \(M_{ai}^C(t)=N_{ai}^C(t) -\int_0^t Y_{ai}^C(u)\,\dee\Lambda_a^C(u)\). Let \(y_a(t)=\E\{Y_{ai}^C(t)\}\). The Kaplan--Meier estimator of the censoring survival function has the standard martingale expansion
\[
  \sqrt{n_a}\{\widehat{G}_a(t)-G_a(t)\}=\frac{1}{\sqrt{n_a}}\sum_{i=1}^{n_a}\kappa_{ai,G}^{\mathrm{KM}}(t) + o_p(1),
\]
where
\begin{equation}\label{eq:supp-kappaKM}
  \kappa_{ai,G}^{\mathrm{KM}}(t)=-G_a(t)\int_0^t\frac{\dee M_{ai}^C(u)}{y_a(u)}.
\end{equation}
Substituting \(\kappa_{ai,G}^{\mathrm{KM}}\) into \eqref{eq:supp-rG-trt} or \eqref{eq:supp-rG-ctl} gives the censoring correction under completely independent censoring.

\subsubsection{Cox censoring model}

Under covariate-dependent censoring, suppose the censoring hazard in arm \(a\) follows the Cox model \(\lambda_a^C(t\mid\bZ)=\lambda_{0a}^C(t)\exp(\bm{\alpha}_a^\top\bZ)\).
Let \(\bm{s}_a^{(k)}(t)=\E\!\left[Y_{ai}^C(t)\exp(\bm{\alpha}_{a0}^\top\bZ_i)\bZ_i^{\otimes k}\right]\), for \(k=0,1,2,\)
and \(\overline{\bm z}_a(t)=\bm{s}_a^{(1)}(t)/s_a^{(0)}(t)\). With \(M_{ai}^C(t)\) now denoting the censoring martingale under the Cox model, define the individual censoring score
\[
  U_{ai}^C=\int_0^\tau\{\bZ_i-\overline{\bm z}_a(u)\}\,\dee M_{ai}^C(u),
\]
and the information matrix
\[
  A_a^C = \int_0^\tau
  \left\{
  \frac{s_a^{(2)}(u)}{s_a^{(0)}(u)}-\overline{\bm z}_a(u)^{\otimes 2}\right\}s_a^{(0)}(u)\,\dee\Lambda_{0a}^C(u).
\]
Then
\[\sqrt{n_a}(\widehat{\bm\alpha}_a-\bm{\alpha}_{a0}) = (A_a^C)^{-1}\frac{1}{\sqrt{n_a}}\sum_{i=1}^{n_a}U_{ai}^C+ o_p(1).
\]
Let \(B_a^C(t)=\int_0^t\overline{\bm z}_a(u)\,\dee\Lambda_{0a}^C(u)\). The Breslow estimator of the baseline cumulative censoring hazard satisfies
\[\sqrt{n_a}\{\widehat{\Lambda}_{0a}^C(t)-\Lambda_{0a}^C(t)\}=\frac{1}{\sqrt{n_a}}\sum_{i=1}^{n_a}\phi_{ai}^C(t)+o_p(1),
\]
where \(  \phi_{ai}^C(t)=\int_0^t\{s_a^{(0)}(u)\}^{-1}\dee M_{ai}^C(u)-B_a^C(t)^\top (A_a^C)^{-1}U_{ai}^C
\). Since \(G_a(t\mid\bz)=\exp\{-\Lambda_{0a}^C(t)\exp(\bm{\alpha}_{a0}^\top\bz)\}\),
the delta method gives the Cox-model influence function for the censoring survival:
\begin{equation}\label{eq:supp-kappaCox}
  \kappa_{ai,G}^{\mathrm{Cox}}(t,\bz)=-G_a(t\mid\bz)\exp(\bm{\alpha}_{a0}^\top\bz)
  \left[\phi_{ai}^C(t)+\Lambda_{0a}^C(t)\,\bz^\top (A_a^C)^{-1}U_{ai}^C\right].
\end{equation}
Substituting \(\kappa_{ai,G}^{\mathrm{Cox}}\) into \eqref{eq:supp-rG-trt} or \eqref{eq:supp-rG-ctl} gives the censoring correction under the Cox censoring model.

\subsection{Event-model correction for the proposed estimator}

Only the proposed estimator \((\nu=I)\) has an additional nuisance correction from estimating the joint event-time quantities \(\mathcal{S}_{q,a}\) and \(\mathcal{H}_{q,a}\). We first give the general form for any \(q\ge2\), and then specialize to the bivariate Gumbel copula implementation.

Fix \(q\ge2\). In the component-$q$ treated-win kernel, the treatment-arm event model enters through
\(\mathcal{R}_{q,1}^{>}(u_{iq},\widetilde{Y}_{qj}(\tau)\mid\bZ_i)\), because the treated subject must remain event-free on component \(q\) beyond the control subject's observed component-$q$ event time. Therefore, for a perturbation \(h_{1,S,q}\) of \(\mathcal{S}_{q,1}\),
\begin{equation}\label{eq:supp-DKevent}
  D_{1,E}K_q^{(\text{ctw})}[h_{1,S,q}]=K_q^{(\text{ctw})}(O_{1i},O_{0j};\eta_{I0})\,D\log \mathcal{R}_{q,1}^{>}(u_{iq},\widetilde{Y}_{qj}(\tau)\mid\bZ_i)[h_{1,S,q}].
\end{equation}
In the component $q$ treated-loss kernel, the treatment-arm event model enters through
\(\mathcal{R}_{q,1}^{=}(u_{iq},\widetilde{Y}_{qi}(\tau)\mid\bZ_i)\), because the treated subject is the one whose component-$q$ event time resolves the loss. Thus, for a perturbation \(h_{1,H,q}\) of \(\mathcal{H}_{q,1}\),
\begin{equation*}
  D_{1,E}L_q^{(\text{ctw})}[h_{1,H,q}]=L_q^{(\text{ctw})}(O_{1i},O_{0j};\eta_{I0})\,D\log \mathcal{R}_{q,1}^{=}(u_{iq},\widetilde{Y}_{qi}(\tau)\mid\bZ_i)[h_{1,H,q}].
\end{equation*}
Using \(\mathcal{R}_{q,a}^{>}(u,t\mid\bz)=\mathcal{S}_{q,a}(\tau,t\mid\bz)/\mathcal{S}_{q,a}(u,t\mid\bz)\) and
\(\mathcal{R}_{q,a}^{=}(u,t\mid\bz)=\mathcal{H}_{q,a}(\tau,t\mid\bz)/\mathcal{H}_{q,a}(u,t\mid\bz)\), the quotient rule gives
\begin{align}
  D\log \mathcal{R}_{q,a}^{>}(u,t\mid\bz)[h_{S,q}]
  &=\frac{h_{S,q}(\tau,t,\bz)}
       {\mathcal{S}_{q,a}(\tau,t\mid\bz)}-\frac{h_{S,q}(u,t,\bz)}
       {\mathcal{S}_{q,a}(u,t\mid\bz)},
  \label{eq:supp-DlogRgt}
  \\
  D\log \mathcal{R}_{q,a}^{=}(u,t\mid\bz)[h_{H,q}]
  &=\frac{h_{H,q}(\tau,t,\bz)}{\mathcal{H}_{q,a}(\tau,t\mid\bz)}-\frac{h_{H,q}(u,t,\bz)} {\mathcal{H}_{q,a}(u,t\mid\bz)}.
  \label{eq:supp-DlogReq}
\end{align}

Let \(\kappa_{ai,S,q}\) and \(\kappa_{ai,H,q}\) denote the influence functions of the estimators of \(\mathcal{S}_{q,a}\) and \(\mathcal{H}_{q,a}\), respectively. Substituting \eqref{eq:supp-DKevent}--\eqref{eq:supp-DlogReq} into the general derivative formula yields the treatment-arm event-model correction
\begin{equation}\label{eq:supp-rE-general-trt}
  \bm{r}_{1i,\text{ctw}}^{E}
  =\E\!\left[\sum_{q=2}^Q\begin{pmatrix}K_q^{(\text{ctw})}(O_{1i},O_0;\eta_{I0})\,
    D\log \mathcal{R}_{q,1}^{>}(u_{iq},\widetilde{Y}_{qj}(\tau)\mid\bZ_i)[\kappa_{1i,S,q}]
    \\[4pt]
    L_q^{(\text{ctw})}(O_{1i},O_0;\eta_{I0})\,D\log \mathcal{R}_{q,1}^{=}(u_{iq},\widetilde{Y}_{qi}(\tau)\mid\bZ_i)[\kappa_{1i,H,q}]
  \end{pmatrix}
  \Biggm| O_{1i}
  \right].
\end{equation}
The control-arm correction has the same logic, but the roles of \(>\) and \(=\) are interchanged between the win and loss kernels:
\begin{equation}\label{eq:supp-rE-general-ctl}
  \bm{r}_{0j,\text{ctw}}^{E}=\E\!\left[\sum_{q=2}^Q
  \begin{pmatrix}
    K_q^{(\text{ctw})}(O_1,O_{0j};\eta_{I0})\,
    D\log \mathcal{R}_{q,0}^{=}(u_{jq},\widetilde{Y}_{qj}(\tau)\mid\bZ_j)
    [\kappa_{0j,H,q}]
    \\[4pt]
    L_q^{(\text{ctw})}(O_1,O_{0j};\eta_{I0})\,
    D\log \mathcal{R}_{q,0}^{>}(u_{jq},\widetilde{Y}_{qi}(\tau)\mid\bZ_j)
    [\kappa_{0j,S,q}]
  \end{pmatrix}
  \Biggm| O_{0j}
  \right].
\end{equation}
Thus, for the proposed estimator, \(\bm{r}_{a,\text{ctw}}(O_{ai})=\bm{r}_{a,\text{ctw}}^{G}(O_{ai})+\bm{r}_{a,\text{ctw}}^{E}(O_{ai})\), for \(a\in\{0,1\}\) whereas for the estimator of \eqref{eq:cuiA}, \(\bm{r}_{a,\text{ipcw}}(O_{ai})=\bm{r}_{a,\text{ipcw}}^{G}(O_{ai})\).

\subsection{Finite-dimensional Gumbel copula model}
We now specialize the event-model correction to the common bivariate setting \(Q=2\) under the Gumbel copula. In arm \(a\), write
\[
  \mathcal{S}_{2,a}(u,t\mid\bz;\bm\vartheta_a)=C_{\theta_a}
  \{S_{1,a}(u\mid\bz;\bm\vartheta_a),S_{2,a}(t\mid\bz;\bm\vartheta_a)\},
\]
where \(\bm\vartheta_a\) collects all marginal event-time parameters and the copula parameter \(\theta_a\). The Gumbel copula is
\[
  C_\theta(u,v)=
  \exp\!\left\{-\left[(-\log u)^\theta+(-\log v)^\theta\right]^{1/\theta}\right\},\qquad \theta\ge1.
\]
Let \(A_\theta(u,v)=(-\log u)^\theta+(-\log v)^\theta\). To verify the copula derivatives used below, write \(B_\theta(u,v)=A_\theta(u,v)^{1/\theta}\) and  \(C_\theta(u,v)=\exp\{-B_\theta(u,v)\). Since \(\frac{\partial B_\theta(u,v)}{\partial v}=A_\theta(u,v)^{1/\theta-1}(-\log v)^{\theta-1}\left(-\frac{1}{v}\right)\).
we have
\[
\frac{\partial C_\theta(u,v)}{\partial v}=
-C_\theta(u,v)\frac{\partial B_\theta(u,v)}{\partial v}=C_\theta(u,v)A_\theta(u,v)^{1/\theta-1}\frac{(-\log v)^{\theta-1}}{v}.\]
The derivative with respect to \(u\) is obtained by the same calculation with \(u\) and \(v\) interchanged. Therefore, the first-order partial derivatives are
\begin{align*}
  \partial_1 C_\theta(u,v)
  &=C_\theta(u,v)A_\theta(u,v)^{1/\theta-1}
  \frac{(-\log u)^{\theta-1}}{u},
  \\
  \partial_2 C_\theta(u,v)
  &=C_\theta(u,v)A_\theta(u,v)^{1/\theta-1}
  \frac{(-\log v)^{\theta-1}}{v}.
\end{align*}
The chain rule gives
\begin{equation*}
  \nabla_{\bm\vartheta_a}\mathcal{S}_{2,a}(u,t\mid\bz) =\partial_1 C_{\theta_a}\,\nabla_{\bm\vartheta_a}S_{1,a}(u\mid\bz)
  +\partial_2 C_{\theta_a}\,\nabla_{\bm\vartheta_a}S_{2,a}(t\mid\bz)
  +\partial_\theta C_{\theta_a}\,\bm e_\theta,
\end{equation*}
where all copula derivatives are evaluated at \(\{S_{1,a}(u\mid\bz),S_{2,a}(t\mid\bz),\theta_a\}\), and \(\bm e_\theta\) selects the copula parameter from \(\bm\vartheta_a\). If the fitted event model is finite-dimensional and regular, then
\[
  \sqrt{n_a}(\widehat{\bm\vartheta}_a-\bm\vartheta_{a0})=I_a^{-1}\frac{1}{\sqrt{n_a}}\sum_{i=1}^{n_a}U_{ai}^{E}+
  o_p(1),
\]
where \(U_{ai}^{E}\) is the individual event model score and \(I_a\) is the Fisher information. Therefore, the influence function of the fitted joint survival is
\begin{equation}\label{eq:supp-kappaS-param}
  \kappa_{ai,S}(u,t,\bz)=\nabla_{\bm\vartheta_a}
  \mathcal{S}_{2,a}(u,t\mid\bz)^\top I_a^{-1}U_{ai}^{E}.
\end{equation}

The associated subdensity is
\[
  \mathcal{H}_{2,a}(u,t\mid\bz)=\partial_2 C_{\theta_a}\{S_{1,a}(u\mid\bz),S_{2,a}(t\mid\bz)\}f_{2,a}(t\mid\bz),
\]
where \(f_{2,a}(t\mid\bz)=-\partial S_{2,a}(t\mid\bz)/\partial t\). A further application of the chain rule gives
\begin{align*}
  \nabla_{\bm\vartheta_a}\mathcal{H}_{2,a}(u,t\mid\bz)
  &=
  \partial_{12}C_{\theta_a}\,
  f_{2,a}(t\mid\bz)\,
  \nabla_{\bm\vartheta_a}S_{1,a}(u\mid\bz)\\
  &\quad+
  \partial_{22}C_{\theta_a}\,
  f_{2,a}(t\mid\bz)\,
  \nabla_{\bm\vartheta_a}S_{2,a}(t\mid\bz)
  +
  \partial_{2\theta}C_{\theta_a}\,
  f_{2,a}(t\mid\bz)\,\bm e_\theta\\
  &\quad+
  \partial_2 C_{\theta_a}\,
  \nabla_{\bm\vartheta_a}f_{2,a}(t\mid\bz),
\end{align*}
with all copula derivatives evaluated at the same arguments. Hence,
\begin{equation}\label{eq:supp-kappaH-param}
  \kappa_{ai,H}(u,t,\bz)=\nabla_{\bm\vartheta_a}\mathcal{H}_{2,a}(u,t\mid\bz)^\top I_a^{-1}U_{ai}^{E}.
\end{equation}
Substituting \eqref{eq:supp-kappaS-param} and \eqref{eq:supp-kappaH-param} into the log-ratio derivatives \eqref{eq:supp-DlogRgt}--\eqref{eq:supp-DlogReq}, and then into \eqref{eq:supp-rE-general-trt}--\eqref{eq:supp-rE-general-ctl}, gives the event model correction under a finite-dimensional Gumbel copula event-time model.

\subsection{Semiparametric Cox margins under the Gumbel copula}
\label{supp:semi-copula}

We now derive the same event model correction when the two copula margins are modeled semi-parametrically by Cox proportional hazards models. This form is useful in practice because the conditional tie ratio \(\mathcal{R}_{2,a}^{=}\) can be evaluated without estimating a smooth marginal density for the second component. For component \(q\in\{1,2\}\) in arm \(a\), suppose
\[
  \lambda_{q,a}(t\mid\bz)=\lambda_{0,q,a}(t)\exp(\bm\beta_{q,a}^\top\bz),
  \qquad
  S_{q,a}(t\mid\bz)= \exp\{-\Lambda_{0,q,a}(t)\exp(\bm\beta_{q,a}^\top\bz)\}.
\]
Let \(N_{q,ai}(t)\) and \(Y_{q,ai}(t)\) denote the component-$m$ counting and at-risk processes, and define the event martingale
\(M_{q,ai}(t)=N_{q,ai}(t)-\int_0^t Y_{q,ai}(u)\exp(\bm\beta_{q,a0}^\top\bZ_i) \dee\Lambda_{0,q,a}(u)\).
Write \(s_{q,a}^{(k)}(t)=\E\!\left[Y_{q,ai}(t)\exp(\bm\beta_{q,a0}^\top\bZ_i)\bZ_i^{\otimes k}\right]\), and  \(\overline{\bm z}_{q,a}(t)=\frac{s_{q,a}^{(1)}(t)}{s_{q,a}^{(0)}(t)}\). The individual Cox score for component \(q\) is \(U_{m,ai}^{E}=\int_0^\tau\{\bZ_i-\overline{\bm z}_{m,a}(u)\}\,\dee M_{m,ai}(u)\),
and the information matrix is
\[
  A_{m,a}^{E}=\int_0^\tau\left[\frac{\bm s_{q,a}^{(2)}(u)}{s_{q,a}^{(0)}(u)}-\overline{\bm z}_{q,a}(u)^{\otimes 2}
  \right]s_{q,a}^{(0)}(u)\,\dee\Lambda_{0,q,a}(u).\]
Then
\[
  \sqrt{n_a}(\widehat{\bm\beta}_{q,a}-\bm\beta_{q,a0})=(A_{q,a}^{E})^{-1}
  \frac{1}{\sqrt{n_a}}\sum_{i=1}^{n_a}U_{q,ai}^{E}+o_p(1).
\]
Let \(B_{q,a}(t)=\int_0^t\overline{\bm z}_{q,a}(u)\,\dee\Lambda_{0,q,a}(u)\). The Breslow estimator of \(\Lambda_{0,q,a}\) has the expansion \(\sqrt{n_a}\{\widehat{\Lambda}_{0,q,a}(t)-\Lambda_{0,q,a}(t)\}=\frac{1}{\sqrt{n_a}}\sum_{i=1}^{n_a}\phi_{q,ai}(t)+o_p(1)\), where \( \phi_{q,ai}(t)=\int_0^t\{s_{q,a}^{(0)}(u)\}^{-1}\dee M_{q,ai}(u)-B_{q,a}(t)^\top (A_{q,a}^{E})^{-1}U_{q,ai}^{E}\). Applying the delta method to \(S_{q,a}(t\mid\bz)\) gives the marginal survival influence function
\begin{equation}\label{eq:supp-kappaSm-Cox}
  \kappa_{q,ai,S}^{\mathrm{Cox}}(t,\bz)=-S_{q,a}(t\mid\bz)\exp(\bm\beta_{q,a0}^\top\bz)
  \left[\phi_{q,ai}(t)+\Lambda_{0,q,a}(t)\,\bz^\top (A_{q,a}^{E})^{-1}U_{q,ai}^{E}\right].
\end{equation}
For the bivariate copula, write \(\mathcal{S}_{2,a}(u,t\mid\bz)=C_{\theta_a}\{S_{1,a}(u\mid\bz),S_{2,a}(t\mid\bz)\}\),
and define \(\mathcal{A}_{2,a}(u,t\mid\bz)=\partial_2 C_{\theta_a}\{S_{1,a}(u\mid\bz),S_{2,a}(t\mid\bz)\}\).
Then \(\mathcal{H}_{2,a}(u,t\mid\bz)=\mathcal{A}_{2,a}(u,t\mid\bz)f_{2,a}(t\mid\bz)\), and the marginal density cancels in the conditional tie ratio:
\[\mathcal{R}_{2,a}^{=}(u,t\mid\bz)=\frac{\mathcal{H}_{2,a}(\tau,t\mid\bz)}{\mathcal{H}_{2,a}(u,t\mid\bz)}= \frac{\mathcal{A}_{2,a}(\tau,t\mid\bz)}{\mathcal{A}_{2,a}(u,t\mid\bz)}.\]
Thus, under Cox margins, the event model correction can be computed using fitted marginal survival functions and copula derivatives only. Let \(\widehat{\theta}_a\) be a regular estimator of the copula parameter satisfying
\[
  \sqrt{n_a}(\widehat{\theta}_a-\theta_{a0})=\frac{1}{\sqrt{n_a}}\sum_{i=1}^{n_a}\kappa_{ai,\theta}+o_p(1),
\]
where \(\kappa_{ai,\theta}\) is mean zero and square integrable. Combining this expansion with \eqref{eq:supp-kappaSm-Cox} gives
\begin{align*}
  \kappa_{ai,\mathcal{S}}^{\mathrm{semi}}(u,t,\bz)&=\partial_1 C_{\theta_a}\,\kappa_{1,ai,S}^{\mathrm{Cox}}(u,\bz)+
  \partial_2 C_{\theta_a}\,\kappa_{2,ai,S}^{\mathrm{Cox}}(t,\bz)+\partial_\theta C_{\theta_a}\,\kappa_{ai,\theta},
  \\
  \kappa_{ai,\mathcal{A}}^{\mathrm{semi}}(u,t,\bz)&=\partial_{12} C_{\theta_a}\,\kappa_{1,ai,S}^{\mathrm{Cox}}(u,\bz)+\partial_{22} C_{\theta_a}\,\kappa_{2,ai,S}^{\mathrm{Cox}}(t,\bz)+\partial_{2\theta} C_{\theta_a}\,\kappa_{ai,\theta},
\end{align*}
where all copula derivatives are evaluated at
\(\{S_{1,a}(u\mid\bz),S_{2,a}(t\mid\bz),\theta_{a0}\}\). The corresponding log-ratio derivatives are
\begin{align}
  D\log\mathcal{R}_{2,a}^{>}(u,t\mid\bz)
  [\kappa_{ai,\mathcal{S}}^{\mathrm{semi}}]
  &=\frac{\kappa_{ai,\mathcal{S}}^{\mathrm{semi}}(\tau,t,\bz)}{\mathcal{S}_{2,a}(\tau,t\mid\bz)}-\frac{\kappa_{ai,\mathcal{S}}^{\mathrm{semi}}(u,t,\bz)}{\mathcal{S}_{2,a}(u,t\mid\bz)},
  \label{eq:supp-DlogRgt-semi}
  \\
  D\log\mathcal{R}_{2,a}^{=}(u,t\mid\bz)
  [\kappa_{ai,\mathcal{A}}^{\mathrm{semi}}]&=\frac{\kappa_{ai,\mathcal{A}}^{\mathrm{semi}}(\tau,t,\bz)}{\mathcal{A}_{2,a}(\tau,t\mid\bz)}-\frac{\kappa_{ai,\mathcal{A}}^{\mathrm{semi}}(u,t,\bz)}{\mathcal{A}_{2,a}(u,t\mid\bz)}.
  \label{eq:supp-DlogReq-semi}
\end{align}
Substituting \eqref{eq:supp-DlogRgt-semi}--\eqref{eq:supp-DlogReq-semi} into the general event-model correction gives, for a treatment-arm individual,
\begin{equation}\label{eq:supp-rE-semi}
  \bm{r}_{1i,\text{ctw}}^{E,\mathrm{semi}}
  =\E\!\left[
  \begin{pmatrix}
    K_2^{(\text{ctw})}(O_{1i},O_0;\eta_{I0})\,
    D\log\mathcal{R}_{2,1}^{>}(u_{i2},\widetilde{Y}_{2j}(\tau)\mid\bZ_i)
    [\kappa_{1i,\mathcal{S}}^{\mathrm{semi}}]
    \\[4pt]
    L_2^{(\text{ctw})}(O_{1i},O_0;\eta_{I0})\,
    D\log\mathcal{R}_{2,1}^{=}(u_{i2},\widetilde{Y}_{2i}(\tau)\mid\bZ_i)
    [\kappa_{1i,\mathcal{A}}^{\mathrm{semi}}]
  \end{pmatrix}
  \Biggm| O_{1i}
  \right],
\end{equation}
with the control arm correction defined analogously. This representation is algebraically the same as the finite-dimensional correction above, but it avoids the need to estimate \(f_{2,a}(t\mid\bz)\) and its influence function.

\subsection{Sandwich variance estimators}

The preceding displays lead directly to the sandwich variance estimators. Define the empirical Hoeffding projections
\begin{align*}
  \widehat{\bm\xi}_{1,\nu,i}&=\frac{1}{n_0}\sum_{j=1}^{n_0}
  \bm{H}^{(\nu)}(O_{1i},O_{0j};\widehat{\eta}_\nu)-\widehat{\bpi}^{(\nu)}(\tau),
  \\
  \widehat{\bm\xi}_{0,\nu,j}&=\frac{1}{n_1}\sum_{i=1}^{n_1}\bm{H}^{(\nu)}(O_{1i},O_{0j};\widehat{\eta}_\nu)-\widehat{\bpi}^{(\nu)}(\tau).
\end{align*}
Let \(\widehat{\bm r}_{ai,\nu}^{G}\) be the plug-in version of the censoring correction, using either the Kaplan--Meier influence function in \eqref{eq:supp-kappaKM} or the Cox influence function in \eqref{eq:supp-kappaCox}. Let \(\widehat{\bm r}_{ai,\text{ctw}}^{E}\) be the corresponding plug-in version of the event-model correction for the proposed estimator, using either the finite-dimensional form in \eqref{eq:supp-kappaS-param}--\eqref{eq:supp-kappaH-param} or the semiparametric Cox-margin form in \eqref{eq:supp-rE-semi}.

For the IPCW estimator \((\nu=C)\), the estimated influence function is \(\widehat{\bm\psi}_{a,\text{ipcw},i}=\widehat{\bm\xi}_{a,\text{ipcw},i}+\widehat{\bm r}_{ai,\text{ipcw}}^{G}\) for \(a\in\{0,1\}\), 
and the sandwich estimator is
\begin{equation}\label{eq:supp-OmegaC}
  \widehat{\bOmega}_C=\frac{n}{n_1^2}
  \sum_{i=1}^{n_1}
  \widehat{\bm\psi}_{1,\text{ipcw},i}\widehat{\bm\psi}_{1,\text{ipcw},i}^\top+\frac{n}{n_0^2}
  \sum_{j=1}^{n_0}\widehat{\bm\psi}_{0,\text{ipcw},j}\widehat{\bm\psi}_{0,\text{ipcw},j}^\top.
\end{equation}
For the proposed estimator \((\nu=I)\), the estimated influence function is \(\widehat{\bm\psi}_{a,\text{ctw},i}=\widehat{\bm\xi}_{a,\text{ctw},i}+\widehat{\bm r}_{ai,\text{ctw}}^{G}+\widehat{\bm r}_{ai,\text{ctw}}^{E}\) for \(a\in\{0,1\}\),  and the sandwich estimator is
\begin{equation}\label{eq:supp-OmegaI}
  \widehat{\bOmega}_I=\frac{n}{n_1^2}\sum_{i=1}^{n_1}\widehat{\bm\psi}_{1,\text{ctw},i}\widehat{\bm\psi}_{1,\text{ctw},i}^\top
  +\frac{n}{n_0^2}\sum_{j=1}^{n_0}\widehat{\bm\psi}_{0,\text{ctw},j}\widehat{\bm\psi}_{0,\text{ctw},j}^\top.
\end{equation}

To justify \eqref{eq:supp-OmegaC} and \eqref{eq:supp-OmegaI}, note that the proof of Theorem \ref{thm:asymp} implies \(  \widehat{\bm\psi}_{a,\nu,i} =\bm\psi_{a,\nu}(O_{ai})+o_p(1)\) in \(L_2\), uniformly within arm \(a\), provided the nuisance estimators and the sample analogues of the derivative terms are consistent. Therefore,
\[
  \frac{1}{n_a}\sum_{i=1}^{n_a}\widehat{\bm\psi}_{a,\nu,i}\widehat{\bm\psi}_{a,\nu,i}^\top\xrightarrow{p}\Var\{\bm\psi_{a,\nu}(O_a)\},\qquad a\in\{0,1\}.
\]
Since \(n_1/n\to\rho\) and \(n_0/n\to1-\rho\),
\[\widehat{\bOmega}_\nu\xrightarrow{p}\rho^{-1}\Var\{\bm\psi_{1,\nu}(O_1)\}+(1-\rho)^{-1}\Var\{\bm\psi_{0,\nu}(O_0)\}=\bOmega_\nu.
\]
Thus, the sandwich estimators consistently estimate the covariance matrix appearing in Theorem \ref{thm:asymp}.

For a smooth scalar summary \(h(\bpi)\) with gradient \(\bm g_h\), the estimated large-sample variance of \(h\{\widehat{\bpi}^{(\nu)}(\tau)\}\) is
\[
  \widehat{\Var}\!\left[
  h\{\widehat{\bpi}^{(\nu)}(\tau)\}
  \right]
  =
  \frac{1}{n}
  \widehat{\bm g}_h^\top
  \widehat{\bOmega}_\nu
  \widehat{\bm g}_h,
\]
where \(\widehat{\bm g}_h\) is the gradient evaluated at
\(\widehat{\bpi}^{(\nu)}(\tau)\). In particular,
\(\widehat{\Var}\{\widehat{\NB}^{(\nu)}(\tau)\}=\frac{1}{n}\widehat{\bm g}_{\NB}^\top\widehat{\bOmega}_\nu\widehat{\bm g}_{\NB}\), and
\(\widehat{\Var}\{\log\widehat{\WR}^{(\nu)}(\tau)\}=\frac{1}{n}\widehat{\bm g}_{\WR}^\top\widehat{\bOmega}_\nu\widehat{\bm g}_{\WR}\), \(\widehat{\Var}\{\log\widehat{\WO}^{(\nu)}(\tau)\}=\frac{1}{n}\widehat{\bm g}_{\WO}^\top\widehat{\bOmega}_\nu\widehat{\bm g}_{\WO}\). These are the variance estimators used for the simulation studies and the HF-ACTION data illustration.

\section{Additional simulation results}
\label{supp:sim_additional}

For the first priority event in the main paper, corresponding to death, we set $\rho_1=1.35$, $\lambda_{10}=0.0008$, $\beta_{11}=0.35$, $\beta_{12}=0.60$, $\beta_{13}=0.25$, and $\beta_{1a}=-0.05$. For the second priority event, corresponding to the serious nonfatal event, we set $\rho_2=0.95$, $\lambda_{20}=0.0200$, $\beta_{21}=0.30$, $\beta_{22}=0.70$, $\beta_{23}=0.20$, and $\beta_{2a}=-0.35$. Under this parameterisation, the first priority event is relatively rare and has a modest treatment effect, whereas the second priority event is more common and has a larger treatment effect, and thus lower priority comparisons contribute nontrivially to the overall win probabilities. 

This section provides additional simulation results that complement with Section \ref{sec:simulation} of the main text. The main text reports the primary results from two simulation studies. The first is the original nuisance-model misspecification study, in which data were generated from Weibull proportional hazards margins linked by a Gumbel--Hougaard copula and the working nuisance configurations $\mathcal{M}_1$--$\mathcal{M}_4$ were used to assess misspecification of the event-time model, the censoring model, and both nuisance components. The second is the copula-sensitivity study, in which the marginal event-time model was correctly specified through Cox proportional hazards models and the data-generating and working copula families were varied over Gumbel--Hougaard, Clayton, Frank, and Plackett.

In this section, we reports the remaining results including additional censoring levels for $\NB(\tau)$ in the original nuisance model misspecification study, the corresponding $\WR(\tau)$ and $\WO(\tau)$ results, the three-component prioritized endpoint simulations, and additional copula-sensitivity results for other estimands and censoring levels. For $\WR(\tau)$ and $\WO(\tau)$, MCSD, ASE, COV, and RE are reported on the log scale.

\subsection{Nuisance model misspecification study: additional net-benefit results}
\label{supp:sim_original_NB}

Tables \ref{tab:sim-NB20}--\ref{tab:sim-NB60} provide the corresponding results under 20\%, 40\%, and 60\% censoring. These tables use the same data-generating mechanism, restriction times, dependence strengths, and nuisance configurations as described in Section \ref{sec:simulation}. The lower-censoring results show the same qualitative pattern as the 80\% setting in the main text, but with smaller efficiency gains because fewer higher-priority ties are unresolved before the restriction time.

Under the correctly specified nuisance configuration $\mathcal{M}_1$, both estimators have small relative bias and empirical coverage close to the nominal 95\% level across censoring levels. The relative efficiency of the proposed estimator increases as censoring becomes heavier and as the restriction time lengthens, consistent with the fact that conditional tie weighting differs from IPCW estimator of \citet{cui2025ipcw} for lower-priority comparisons whose higher-priority ties are censoring-induced ties. Event-time model misspecification under $\mathcal{M}_2$ reduces the gain but generally does not eliminate it. Censoring-model misspecification under $\mathcal{M}_3$ and $\mathcal{M}_4$ has a more visible effect on finite-sample bias and coverage, because both estimators rely on the censoring model for inverse probability weighting.

\begin{table}[htbp]
\centering
\caption{Simulation results for $\NB(\tau)$ under 20\% censoring by $\tau=36$ for the two component prioritized endpoint. RB: relative bias (\%); MCSD: Monte Carlo standard deviation; ASE: average estimated standard error; Cov: empirical coverage of the 95\% Wald interval; RE $=\mathrm{MCSD}^2(C)/\mathrm{MCSD}^2(I)$. Based on 1{,}000 replications.}
\label{tab:sim-NB20}
\resizebox{\textwidth}{!}{%
\begin{tabular}{ccccrrrrrrrrr}
\toprule
 & & & & \multicolumn{4}{c}{$\widehat{\NB}^{(\text{ipcw})}(\tau)$} & \multicolumn{4}{c}{$\widehat{\NB}^{(\text{ctw})}(\tau)$} & \\
\cmidrule(lr){5-8}\cmidrule(lr){9-12}
$\theta$ & $\tau$ & $\mathcal{M}_k$ & True & RB\% & MCSD & ASE & Cov & RB\% & MCSD & ASE & Cov & RE \\
\midrule
\multirow{12}{*}{1.25}
& \multirow{4}{*}{12} & $\mathcal{M}_1$ & 0.078 & 0.7 & 0.0334 & 0.0340 & 0.957 & 1.1 & 0.0328 & 0.0333 & 0.955 & 1.04 \\
&  & $\mathcal{M}_2$ & 0.078 & 0.7 & 0.0334 & 0.0340 & 0.957 & 1.2 & 0.0329 & 0.0333 & 0.953 & 1.03 \\
&  & $\mathcal{M}_3$ & 0.078 & 0.1 & 0.0332 & 0.0338 & 0.953 & 0.9 & 0.0327 & 0.0332 & 0.959 & 1.03 \\
&  & $\mathcal{M}_4$ & 0.078 & 0.1 & 0.0332 & 0.0338 & 0.953 & 1.0 & 0.0328 & 0.0332 & 0.956 & 1.03 \\
\cmidrule(lr){2-13}
& \multirow{4}{*}{24} & $\mathcal{M}_1$ & 0.106 & $-$0.4 & 0.0414 & 0.0407 & 0.946 & $-$0.4 & 0.0397 & 0.0391 & 0.949 & 1.09 \\
&  & $\mathcal{M}_2$ & 0.106 & $-$0.4 & 0.0414 & 0.0407 & 0.946 & 0.0 & 0.0398 & 0.0393 & 0.947 & 1.09 \\
&  & $\mathcal{M}_3$ & 0.106 & $-$0.9 & 0.0412 & 0.0404 & 0.943 & $-$0.3 & 0.0396 & 0.0390 & 0.946 & 1.08 \\
&  & $\mathcal{M}_4$ & 0.106 & $-$0.9 & 0.0412 & 0.0404 & 0.943 & 0.1 & 0.0396 & 0.0391 & 0.946 & 1.08 \\
\cmidrule(lr){2-13}
& \multirow{4}{*}{36} & $\mathcal{M}_1$ & 0.111 & $-$0.7 & 0.0438 & 0.0438 & 0.944 & $-$0.8 & 0.0416 & 0.0415 & 0.947 & 1.11 \\
&  & $\mathcal{M}_2$ & 0.111 & $-$0.7 & 0.0438 & 0.0438 & 0.944 & $-$0.1 & 0.0418 & 0.0418 & 0.948 & 1.10 \\
&  & $\mathcal{M}_3$ & 0.111 & $-$0.4 & 0.0437 & 0.0435 & 0.946 & $-$0.2 & 0.0416 & 0.0414 & 0.949 & 1.10 \\
&  & $\mathcal{M}_4$ & 0.111 & $-$0.4 & 0.0437 & 0.0435 & 0.946 & 0.5 & 0.0418 & 0.0416 & 0.947 & 1.09 \\
\midrule
\multirow{12}{*}{4.00}
& \multirow{4}{*}{12} & $\mathcal{M}_1$ & 0.082 & 0.6 & 0.0335 & 0.0335 & 0.948 & $-$0.1 & 0.0329 & 0.0328 & 0.946 & 1.04 \\
&  & $\mathcal{M}_2$ & 0.082 & 0.6 & 0.0335 & 0.0335 & 0.948 & 0.2 & 0.0330 & 0.0329 & 0.946 & 1.03 \\
&  & $\mathcal{M}_3$ & 0.082 & $-$0.2 & 0.0333 & 0.0332 & 0.944 & $-$0.3 & 0.0328 & 0.0327 & 0.945 & 1.03 \\
&  & $\mathcal{M}_4$ & 0.082 & $-$0.2 & 0.0333 & 0.0332 & 0.944 & $-$0.1 & 0.0329 & 0.0328 & 0.948 & 1.02 \\
\cmidrule(lr){2-13}
& \multirow{4}{*}{24} & $\mathcal{M}_1$ & 0.117 & 0.3 & 0.0401 & 0.0402 & 0.954 & 0.2 & 0.0394 & 0.0387 & 0.944 & 1.04 \\
&  & $\mathcal{M}_2$ & 0.117 & 0.3 & 0.0401 & 0.0402 & 0.954 & 0.5 & 0.0398 & 0.0391 & 0.945 & 1.02 \\
&  & $\mathcal{M}_3$ & 0.117 & $-$0.6 & 0.0398 & 0.0399 & 0.944 & $-$0.1 & 0.0393 & 0.0386 & 0.944 & 1.03 \\
&  & $\mathcal{M}_4$ & 0.117 & $-$0.6 & 0.0398 & 0.0399 & 0.944 & 0.2 & 0.0396 & 0.0389 & 0.944 & 1.01 \\
\cmidrule(lr){2-13}
& \multirow{4}{*}{36} & $\mathcal{M}_1$ & 0.129 & 0.3 & 0.0438 & 0.0435 & 0.953 & 0.2 & 0.0421 & 0.0413 & 0.946 & 1.08 \\
&  & $\mathcal{M}_2$ & 0.129 & 0.3 & 0.0438 & 0.0435 & 0.953 & 0.6 & 0.0426 & 0.0419 & 0.945 & 1.05 \\
&  & $\mathcal{M}_3$ & 0.129 & 0.0 & 0.0434 & 0.0431 & 0.948 & 0.4 & 0.0421 & 0.0411 & 0.944 & 1.06 \\
&  & $\mathcal{M}_4$ & 0.129 & 0.0 & 0.0434 & 0.0431 & 0.948 & 0.7 & 0.0425 & 0.0417 & 0.945 & 1.04 \\
\bottomrule
\end{tabular}}
\end{table}

\begin{table}[htbp]
\centering
\caption{Simulation results for $\NB(\tau)$ under 40\% censoring by $\tau=36$ for the two component prioritized endpoint.}
\label{tab:sim-NB40}
\resizebox{\textwidth}{!}{%
\begin{tabular}{ccccrrrrrrrrr}
\toprule
 & & & & \multicolumn{4}{c}{$\widehat{\NB}^{(\text{ipcw})}(\tau)$} & \multicolumn{4}{c}{$\widehat{\NB}^{(\text{ctw})}(\tau)$} & \\
\cmidrule(lr){5-8}\cmidrule(lr){9-12}
$\theta$ & $\tau$ & $\mathcal{M}_k$ & True & RB\% & MCSD & ASE & Cov & RB\% & MCSD & ASE & Cov & RE \\
\midrule
\multirow{12}{*}{1.25}
& \multirow{4}{*}{12} & $\mathcal{M}_1$ & 0.078 & 0.4 & 0.0362 & 0.0362 & 0.956 & 0.7 & 0.0345 & 0.0343 & 0.950 & 1.11 \\
&  & $\mathcal{M}_2$ & 0.078 & 0.4 & 0.0362 & 0.0362 & 0.956 & 1.0 & 0.0346 & 0.0344 & 0.949 & 1.09 \\
&  & $\mathcal{M}_3$ & 0.078 & $-$1.2 & 0.0358 & 0.0354 & 0.955 & 0.1 & 0.0343 & 0.0340 & 0.955 & 1.09 \\
&  & $\mathcal{M}_4$ & 0.078 & $-$1.2 & 0.0358 & 0.0354 & 0.955 & 0.4 & 0.0344 & 0.0341 & 0.954 & 1.08 \\
\cmidrule(lr){2-13}
& \multirow{4}{*}{24} & $\mathcal{M}_1$ & 0.106 & 1.2 & 0.0456 & 0.0458 & 0.942 & 0.7 & 0.0408 & 0.0413 & 0.954 & 1.25 \\
&  & $\mathcal{M}_2$ & 0.106 & 1.2 & 0.0456 & 0.0458 & 0.942 & 1.6 & 0.0411 & 0.0417 & 0.951 & 1.23 \\
&  & $\mathcal{M}_3$ & 0.106 & 0.3 & 0.0443 & 0.0442 & 0.949 & 0.8 & 0.0405 & 0.0407 & 0.954 & 1.20 \\
&  & $\mathcal{M}_4$ & 0.106 & 0.3 & 0.0443 & 0.0442 & 0.949 & 1.7 & 0.0408 & 0.0411 & 0.953 & 1.18 \\
\cmidrule(lr){2-13}
& \multirow{4}{*}{36} & $\mathcal{M}_1$ & 0.111 & 1.2 & 0.0515 & 0.0519 & 0.958 & 1.2 & 0.0445 & 0.0450 & 0.954 & 1.34 \\
&  & $\mathcal{M}_2$ & 0.111 & 1.2 & 0.0515 & 0.0519 & 0.958 & 3.0 & 0.0453 & 0.0457 & 0.952 & 1.30 \\
&  & $\mathcal{M}_3$ & 0.111 & 2.2 & 0.0495 & 0.0494 & 0.958 & 2.4 & 0.0438 & 0.0441 & 0.952 & 1.28 \\
&  & $\mathcal{M}_4$ & 0.111 & 2.2 & 0.0495 & 0.0494 & 0.958 & 4.2 & 0.0445 & 0.0447 & 0.952 & 1.24 \\
\midrule
\multirow{12}{*}{4.00}
& \multirow{4}{*}{12} & $\mathcal{M}_1$ & 0.082 & 0.9 & 0.0357 & 0.0356 & 0.955 & 0.6 & 0.0335 & 0.0339 & 0.946 & 1.14 \\
&  & $\mathcal{M}_2$ & 0.082 & 0.9 & 0.0357 & 0.0356 & 0.955 & 0.9 & 0.0339 & 0.0341 & 0.940 & 1.11 \\
&  & $\mathcal{M}_3$ & 0.082 & $-$0.4 & 0.0349 & 0.0348 & 0.950 & 0.1 & 0.0331 & 0.0335 & 0.946 & 1.11 \\
&  & $\mathcal{M}_4$ & 0.082 & $-$0.4 & 0.0349 & 0.0348 & 0.950 & 0.3 & 0.0335 & 0.0338 & 0.945 & 1.08 \\
\cmidrule(lr){2-13}
& \multirow{4}{*}{24} & $\mathcal{M}_1$ & 0.117 & 0.4 & 0.0445 & 0.0453 & 0.951 & 0.0 & 0.0412 & 0.0410 & 0.948 & 1.17 \\
&  & $\mathcal{M}_2$ & 0.117 & 0.4 & 0.0445 & 0.0453 & 0.951 & 0.4 & 0.0423 & 0.0419 & 0.944 & 1.10 \\
&  & $\mathcal{M}_3$ & 0.117 & $-$1.4 & 0.0429 & 0.0436 & 0.951 & $-$0.4 & 0.0404 & 0.0404 & 0.945 & 1.13 \\
&  & $\mathcal{M}_4$ & 0.117 & $-$1.4 & 0.0429 & 0.0436 & 0.951 & 0.0 & 0.0414 & 0.0412 & 0.939 & 1.07 \\
\cmidrule(lr){2-13}
& \multirow{4}{*}{36} & $\mathcal{M}_1$ & 0.129 & 0.3 & 0.0501 & 0.0516 & 0.965 & 0.3 & 0.0445 & 0.0448 & 0.950 & 1.27 \\
&  & $\mathcal{M}_2$ & 0.129 & 0.3 & 0.0501 & 0.0516 & 0.965 & 0.9 & 0.0468 & 0.0465 & 0.944 & 1.15 \\
&  & $\mathcal{M}_3$ & 0.129 & $-$0.6 & 0.0484 & 0.0489 & 0.954 & 0.5 & 0.0439 & 0.0439 & 0.951 & 1.22 \\
&  & $\mathcal{M}_4$ & 0.129 & $-$0.6 & 0.0484 & 0.0489 & 0.954 & 1.3 & 0.0458 & 0.0454 & 0.949 & 1.12 \\
\bottomrule
\end{tabular}}
\end{table}

\begin{table}[htbp]
\centering
\caption{Simulation results for $\NB(\tau)$ under 60\% censoring by $\tau=36$ for the two component prioritized endpoint.}
\label{tab:sim-NB60}
\resizebox{\textwidth}{!}{%
\begin{tabular}{ccccrrrrrrrrr}
\toprule
 & & & & \multicolumn{4}{c}{$\widehat{\NB}^{(\text{ipcw})}(\tau)$} & \multicolumn{4}{c}{$\widehat{\NB}^{(\text{ctw})}(\tau)$} & \\
\cmidrule(lr){5-8}\cmidrule(lr){9-12}
$\theta$ & $\tau$ & $\mathcal{M}_k$ & True & RB\% & MCSD & ASE & Cov & RB\% & MCSD & ASE & Cov & RE \\
\midrule
\multirow{12}{*}{1.25}
& \multirow{4}{*}{12} & $\mathcal{M}_1$ & 0.078 & $-$0.8 & 0.0401 & 0.0401 & 0.939 & 0.6 & 0.0364 & 0.0360 & 0.937 & 1.22 \\
&  & $\mathcal{M}_2$ & 0.078 & $-$0.8 & 0.0401 & 0.0401 & 0.939 & 1.2 & 0.0367 & 0.0362 & 0.939 & 1.20 \\
&  & $\mathcal{M}_3$ & 0.078 & $-$3.2 & 0.0383 & 0.0380 & 0.943 & $-$0.7 & 0.0357 & 0.0352 & 0.938 & 1.15 \\
&  & $\mathcal{M}_4$ & 0.078 & $-$3.2 & 0.0383 & 0.0380 & 0.943 & $-$0.2 & 0.0359 & 0.0354 & 0.939 & 1.14 \\
\cmidrule(lr){2-13}
& \multirow{4}{*}{24} & $\mathcal{M}_1$ & 0.106 & 2.2 & 0.0567 & 0.0564 & 0.946 & 0.8 & 0.0472 & 0.0456 & 0.947 & 1.44 \\
&  & $\mathcal{M}_2$ & 0.106 & 2.2 & 0.0567 & 0.0564 & 0.946 & 2.3 & 0.0478 & 0.0463 & 0.947 & 1.40 \\
&  & $\mathcal{M}_3$ & 0.106 & $-$0.1 & 0.0517 & 0.0504 & 0.941 & 0.5 & 0.0453 & 0.0436 & 0.939 & 1.31 \\
&  & $\mathcal{M}_4$ & 0.106 & $-$0.1 & 0.0517 & 0.0504 & 0.941 & 2.0 & 0.0458 & 0.0442 & 0.937 & 1.28 \\
\cmidrule(lr){2-13}
& \multirow{4}{*}{36} & $\mathcal{M}_1$ & 0.111 & 1.9 & 0.0715 & 0.0714 & 0.969 & 0.5 & 0.0542 & 0.0530 & 0.950 & 1.74 \\
&  & $\mathcal{M}_2$ & 0.111 & 1.9 & 0.0715 & 0.0714 & 0.969 & 3.8 & 0.0554 & 0.0544 & 0.950 & 1.66 \\
&  & $\mathcal{M}_3$ & 0.111 & 1.6 & 0.0582 & 0.0591 & 0.959 & 2.2 & 0.0494 & 0.0487 & 0.948 & 1.39 \\
&  & $\mathcal{M}_4$ & 0.111 & 1.6 & 0.0582 & 0.0591 & 0.959 & 5.4 & 0.0504 & 0.0498 & 0.946 & 1.33 \\
\midrule
\multirow{12}{*}{4.00}
& \multirow{4}{*}{12} & $\mathcal{M}_1$ & 0.082 & 0.4 & 0.0387 & 0.0393 & 0.957 & 0.5 & 0.0344 & 0.0356 & 0.955 & 1.26 \\
&  & $\mathcal{M}_2$ & 0.082 & 0.4 & 0.0387 & 0.0393 & 0.957 & 0.9 & 0.0351 & 0.0361 & 0.957 & 1.22 \\
&  & $\mathcal{M}_3$ & 0.082 & $-$3.1 & 0.0369 & 0.0373 & 0.948 & $-$0.9 & 0.0339 & 0.0348 & 0.956 & 1.19 \\
&  & $\mathcal{M}_4$ & 0.082 & $-$3.1 & 0.0369 & 0.0373 & 0.948 & $-$0.4 & 0.0344 & 0.0352 & 0.953 & 1.15 \\
\cmidrule(lr){2-13}
& \multirow{4}{*}{24} & $\mathcal{M}_1$ & 0.117 & $-$0.3 & 0.0540 & 0.0558 & 0.955 & 0.7 & 0.0434 & 0.0454 & 0.961 & 1.55 \\
&  & $\mathcal{M}_2$ & 0.117 & $-$0.3 & 0.0540 & 0.0558 & 0.955 & 1.7 & 0.0454 & 0.0471 & 0.952 & 1.41 \\
&  & $\mathcal{M}_3$ & 0.117 & $-$3.6 & 0.0483 & 0.0496 & 0.958 & $-$0.4 & 0.0414 & 0.0433 & 0.960 & 1.36 \\
&  & $\mathcal{M}_4$ & 0.117 & $-$3.6 & 0.0483 & 0.0496 & 0.958 & 0.8 & 0.0430 & 0.0448 & 0.960 & 1.26 \\
\cmidrule(lr){2-13}
& \multirow{4}{*}{36} & $\mathcal{M}_1$ & 0.129 & $-$1.2 & 0.0686 & 0.0712 & 0.959 & 0.3 & 0.0499 & 0.0529 & 0.971 & 1.89 \\
&  & $\mathcal{M}_2$ & 0.129 & $-$1.2 & 0.0686 & 0.0712 & 0.959 & 1.9 & 0.0534 & 0.0562 & 0.958 & 1.65 \\
&  & $\mathcal{M}_3$ & 0.129 & $-$2.6 & 0.0557 & 0.0586 & 0.966 & 0.3 & 0.0457 & 0.0486 & 0.970 & 1.49 \\
&  & $\mathcal{M}_4$ & 0.129 & $-$2.6 & 0.0557 & 0.0586 & 0.966 & 2.4 & 0.0485 & 0.0513 & 0.968 & 1.32 \\
\bottomrule
\end{tabular}}
\end{table}

\subsection{Nuisance model misspecification study: win ratio and win odds}
\label{supp:sim_original_WRWO}

Tables \ref{tab:sim-WR20}--\ref{tab:sim-WR80} and Tables \ref{tab:sim-WO20}--\ref{tab:sim-WO80} report the corresponding results for $\WR(\tau)$ and $\WO(\tau)$ in the original two-component nuisance-model misspecification study. These analyses use the same simulated data sets as the net-benefit analysis, with death as the first-priority component and the serious nonfatal event as the second-priority component. Because inference for $\WR(\tau)$ and $\WO(\tau)$ is performed on the log scale, MCSD, ASE, COV, and RE are summarized on the log scale.

The results for $\WR(\tau)$ and $\WO(\tau)$ agree with the net-benefit findings. Under light censoring, both estimators have negligible bias, ASE tracks MCSD well, and empirical coverage is close to the nominal level. As censoring increases, the proposed estimator becomes increasingly more efficient than the IPCW estimator of \citet{cui2025ipcw}, especially at longer restriction horizons. This pattern is expected because heavier censoring and longer follow-up create more partially observed lower-priority comparisons, which are discarded by IPCW of \citet{cui2025ipcw} but used fractionally by conditional tie weighting. The same robustness pattern is also observed. The event-time model misspecification attenuates the efficiency gain, whereas censoring-model misspecification has a larger effect on bias and coverage because the inverse probability weights are affected for both estimators.

\begin{table}[htbp]
\centering
\caption{Simulation results for $\WR(\tau)$ under 20\% censoring by $\tau=36$ for the two-component prioritized endpoint. RB: relative bias (\%) for $\widehat{\WR}(\tau)$ on the natural scale; MCSD: Monte Carlo standard deviation of $\log\widehat{\WR}(\tau)$; ASE: average estimated standard error for $\log\widehat{\WR}(\tau)$; Cov: empirical coverage of the 95\% Wald interval based on log-scale inference; RE $=\mathrm{MCSD}^2(C)/\mathrm{MCSD}^2(I)$ computed on the log scale.}
\label{tab:sim-WR20}
\resizebox{\textwidth}{!}{%
\begin{tabular}{ccccrrrrrrrrr}
\toprule
 & & & & \multicolumn{4}{c}{$\widehat{\WR}^{(\text{ipcw})}(\tau)$} & \multicolumn{4}{c}{$\widehat{\WR}^{(\text{ctw})}(\tau)$} & \\
\cmidrule(lr){5-8}\cmidrule(lr){9-12}
$\theta$ & $\tau$ & $\mathcal{M}_k$ & True & RB\% & MCSD & ASE & Cov & RB\% & MCSD & ASE & Cov & RE \\
\midrule
\multirow{12}{*}{1.25}& \multirow{4}{*}{12} & $\mathcal{M}_1$ & 1.370 & 1.3 & 0.1359 & 0.1384 & 0.961 & 1.4 & 0.1335 & 0.1355 & 0.956 & 1.04 \\
&  & $\mathcal{M}_2$ & 1.370 & 1.3 & 0.1359 & 0.1384 & 0.961 & 1.4 & 0.1334 & 0.1354 & 0.956 & 1.04 \\
&  & $\mathcal{M}_3$ & 1.370 & 1.4 & 0.1360 & 0.1383 & 0.963 & 1.5 & 0.1334 & 0.1355 & 0.959 & 1.04 \\
&  & $\mathcal{M}_4$ & 1.370 & 1.4 & 0.1360 & 0.1383 & 0.963 & 1.5 & 0.1334 & 0.1354 & 0.958 & 1.04 \\
\cmidrule(lr){2-13}
& \multirow{4}{*}{24} & $\mathcal{M}_1$ & 1.331 & 0.7 & 0.1136 & 0.1114 & 0.945 & 0.6 & 0.1088 & 0.1072 & 0.946 & 1.09 \\
&  & $\mathcal{M}_2$ & 1.331 & 0.7 & 0.1136 & 0.1114 & 0.945 & 0.6 & 0.1085 & 0.1070 & 0.946 & 1.10 \\
&  & $\mathcal{M}_3$ & 1.331 & 0.7 & 0.1137 & 0.1113 & 0.944 & 0.7 & 0.1089 & 0.1072 & 0.946 & 1.09 \\
&  & $\mathcal{M}_4$ & 1.331 & 0.7 & 0.1137 & 0.1113 & 0.944 & 0.7 & 0.1086 & 0.1070 & 0.946 & 1.10 \\
\cmidrule(lr){2-13}
& \multirow{4}{*}{36} & $\mathcal{M}_1$ & 1.293 & 0.4 & 0.1030 & 0.1024 & 0.947 & 0.3 & 0.0978 & 0.0974 & 0.949 & 1.11 \\
&  & $\mathcal{M}_2$ & 1.293 & 0.4 & 0.1030 & 0.1024 & 0.947 & 0.3 & 0.0975 & 0.0971 & 0.949 & 1.12 \\
&  & $\mathcal{M}_3$ & 1.293 & 0.6 & 0.1031 & 0.1021 & 0.949 & 0.6 & 0.0980 & 0.0973 & 0.946 & 1.11 \\
&  & $\mathcal{M}_4$ & 1.293 & 0.6 & 0.1031 & 0.1021 & 0.949 & 0.5 & 0.0976 & 0.0970 & 0.945 & 1.12 \\
\midrule
\multirow{12}{*}{4.00}& \multirow{4}{*}{12} & $\mathcal{M}_1$ & 1.408 & 1.4 & 0.1422 & 0.1417 & 0.948 & 1.1 & 0.1393 & 0.1389 & 0.945 & 1.04 \\
&  & $\mathcal{M}_2$ & 1.408 & 1.4 & 0.1422 & 0.1417 & 0.948 & 1.0 & 0.1391 & 0.1387 & 0.947 & 1.04 \\
&  & $\mathcal{M}_3$ & 1.408 & 1.4 & 0.1425 & 0.1417 & 0.947 & 1.1 & 0.1395 & 0.1390 & 0.946 & 1.04 \\
&  & $\mathcal{M}_4$ & 1.408 & 1.4 & 0.1425 & 0.1417 & 0.947 & 1.0 & 0.1393 & 0.1387 & 0.949 & 1.05 \\
\cmidrule(lr){2-13}
& \multirow{4}{*}{24} & $\mathcal{M}_1$ & 1.389 & 0.8 & 0.1143 & 0.1144 & 0.956 & 0.7 & 0.1122 & 0.1103 & 0.945 & 1.04 \\
&  & $\mathcal{M}_2$ & 1.389 & 0.8 & 0.1143 & 0.1144 & 0.956 & 0.4 & 0.1117 & 0.1099 & 0.943 & 1.05 \\
&  & $\mathcal{M}_3$ & 1.389 & 0.8 & 0.1144 & 0.1143 & 0.951 & 0.8 & 0.1124 & 0.1104 & 0.946 & 1.04 \\
&  & $\mathcal{M}_4$ & 1.389 & 0.8 & 0.1144 & 0.1143 & 0.951 & 0.5 & 0.1117 & 0.1099 & 0.942 & 1.05 \\
\cmidrule(lr){2-13}
& \multirow{4}{*}{36} & $\mathcal{M}_1$ & 1.363 & 0.7 & 0.1063 & 0.1051 & 0.951 & 0.6 & 0.1023 & 0.1001 & 0.945 & 1.08 \\
&  & $\mathcal{M}_2$ & 1.363 & 0.7 & 0.1063 & 0.1051 & 0.951 & 0.1 & 0.1014 & 0.0995 & 0.945 & 1.10 \\
&  & $\mathcal{M}_3$ & 1.363 & 0.8 & 0.1062 & 0.1047 & 0.947 & 0.8 & 0.1026 & 0.1001 & 0.943 & 1.07 \\
&  & $\mathcal{M}_4$ & 1.363 & 0.8 & 0.1062 & 0.1047 & 0.947 & 0.2 & 0.1015 & 0.0994 & 0.945 & 1.09 \\
\bottomrule
\end{tabular}}
\end{table}

\begin{table}[htbp]
\centering
\caption{Simulation results for $\WR(\tau)$ under 40\% censoring by $\tau=36$ for the two-component prioritized endpoint. RB: relative bias (\%) for $\widehat{\WR}(\tau)$ on the natural scale; MCSD: Monte Carlo standard deviation of $\log\widehat{\WR}(\tau)$; ASE: average estimated standard error for $\log\widehat{\WR}(\tau)$; Cov: empirical coverage of the 95\% Wald interval based on log-scale inference; RE $=\mathrm{MCSD}^2(C)/\mathrm{MCSD}^2(I)$ computed on the log scale.}
\label{tab:sim-WR40}
\resizebox{\textwidth}{!}{%
\begin{tabular}{ccccrrrrrrrrr}
\toprule
 & & & & \multicolumn{4}{c}{$\widehat{\WR}^{(\text{ipcw})}(\tau)$} & \multicolumn{4}{c}{$\widehat{\WR}^{(\text{ctw})}(\tau)$} & \\
\cmidrule(lr){5-8}\cmidrule(lr){9-12}
$\theta$ & $\tau$ & $\mathcal{M}_k$ & True & RB\% & MCSD & ASE & Cov & RB\% & MCSD & ASE & Cov & RE \\
\midrule
\multirow{12}{*}{1.25}& \multirow{4}{*}{12} & $\mathcal{M}_1$ & 1.370 & 1.4 & 0.1481 & 0.1467 & 0.956 & 1.4 & 0.1410 & 0.1393 & 0.958 & 1.10 \\
&  & $\mathcal{M}_2$ & 1.370 & 1.4 & 0.1481 & 0.1467 & 0.956 & 1.4 & 0.1410 & 0.1392 & 0.957 & 1.10 \\
&  & $\mathcal{M}_3$ & 1.370 & 1.5 & 0.1486 & 0.1460 & 0.955 & 1.5 & 0.1413 & 0.1391 & 0.961 & 1.11 \\
&  & $\mathcal{M}_4$ & 1.370 & 1.5 & 0.1486 & 0.1460 & 0.955 & 1.5 & 0.1413 & 0.1390 & 0.961 & 1.11 \\
\cmidrule(lr){2-13}
& \multirow{4}{*}{24} & $\mathcal{M}_1$ & 1.331 & 1.3 & 0.1253 & 0.1248 & 0.939 & 0.9 & 0.1122 & 0.1130 & 0.954 & 1.25 \\
&  & $\mathcal{M}_2$ & 1.331 & 1.3 & 0.1253 & 0.1248 & 0.939 & 0.9 & 0.1118 & 0.1127 & 0.954 & 1.26 \\
&  & $\mathcal{M}_3$ & 1.331 & 1.5 & 0.1241 & 0.1226 & 0.949 & 1.2 & 0.1121 & 0.1124 & 0.950 & 1.23 \\
&  & $\mathcal{M}_4$ & 1.331 & 1.5 & 0.1241 & 0.1226 & 0.949 & 1.1 & 0.1117 & 0.1120 & 0.946 & 1.24 \\
\cmidrule(lr){2-13}
& \multirow{4}{*}{36} & $\mathcal{M}_1$ & 1.293 & 1.2 & 0.1213 & 0.1206 & 0.955 & 1.0 & 0.1048 & 0.1055 & 0.952 & 1.34 \\
&  & $\mathcal{M}_2$ & 1.293 & 1.2 & 0.1213 & 0.1206 & 0.955 & 0.9 & 0.1045 & 0.1050 & 0.952 & 1.35 \\
&  & $\mathcal{M}_3$ & 1.293 & 1.7 & 0.1180 & 0.1163 & 0.957 & 1.4 & 0.1038 & 0.1040 & 0.952 & 1.29 \\
&  & $\mathcal{M}_4$ & 1.293 & 1.7 & 0.1180 & 0.1163 & 0.957 & 1.4 & 0.1034 & 0.1033 & 0.954 & 1.30 \\
\midrule
\multirow{12}{*}{4.00}& \multirow{4}{*}{12} & $\mathcal{M}_1$ & 1.408 & 1.6 & 0.1511 & 0.1503 & 0.958 & 1.3 & 0.1414 & 0.1430 & 0.947 & 1.14 \\
&  & $\mathcal{M}_2$ & 1.408 & 1.6 & 0.1511 & 0.1503 & 0.958 & 1.0 & 0.1418 & 0.1425 & 0.948 & 1.14 \\
&  & $\mathcal{M}_3$ & 1.408 & 1.7 & 0.1501 & 0.1497 & 0.954 & 1.4 & 0.1410 & 0.1429 & 0.949 & 1.13 \\
&  & $\mathcal{M}_4$ & 1.408 & 1.7 & 0.1501 & 0.1497 & 0.954 & 1.1 & 0.1412 & 0.1423 & 0.946 & 1.13 \\
\cmidrule(lr){2-13}
& \multirow{4}{*}{24} & $\mathcal{M}_1$ & 1.389 & 1.1 & 0.1265 & 0.1281 & 0.953 & 0.7 & 0.1171 & 0.1165 & 0.944 & 1.17 \\
&  & $\mathcal{M}_2$ & 1.389 & 1.1 & 0.1265 & 0.1281 & 0.953 & $-$0.1 & 0.1169 & 0.1156 & 0.945 & 1.17 \\
&  & $\mathcal{M}_3$ & 1.389 & 1.1 & 0.1244 & 0.1259 & 0.949 & 0.9 & 0.1161 & 0.1160 & 0.942 & 1.15 \\
&  & $\mathcal{M}_4$ & 1.389 & 1.1 & 0.1244 & 0.1259 & 0.949 & 0.1 & 0.1156 & 0.1150 & 0.940 & 1.16 \\
\cmidrule(lr){2-13}
& \multirow{4}{*}{36} & $\mathcal{M}_1$ & 1.363 & 1.0 & 0.1219 & 0.1238 & 0.955 & 0.7 & 0.1080 & 0.1084 & 0.951 & 1.27 \\
&  & $\mathcal{M}_2$ & 1.363 & 1.0 & 0.1219 & 0.1238 & 0.955 & $-$0.5 & 0.1085 & 0.1074 & 0.952 & 1.26 \\
&  & $\mathcal{M}_3$ & 1.363 & 1.1 & 0.1194 & 0.1193 & 0.948 & 1.0 & 0.1073 & 0.1071 & 0.949 & 1.24 \\
&  & $\mathcal{M}_4$ & 1.363 & 1.1 & 0.1194 & 0.1193 & 0.948 & $-$0.1 & 0.1071 & 0.1058 & 0.940 & 1.24 \\
\bottomrule
\end{tabular}}
\end{table}

\begin{table}[htbp]
\centering
\caption{Simulation results for $\WR(\tau)$ under 60\% censoring by $\tau=36$ for the two-component prioritized endpoint. RB: relative bias (\%) for $\widehat{\WR}(\tau)$ on the natural scale; MCSD: Monte Carlo standard deviation of $\log\widehat{\WR}(\tau)$; ASE: average estimated standard error for $\log\widehat{\WR}(\tau)$; Cov: empirical coverage of the 95\% Wald interval based on log-scale inference; RE $=\mathrm{MCSD}^2(C)/\mathrm{MCSD}^2(I)$ computed on the log scale.}
\label{tab:sim-WR60}
\resizebox{\textwidth}{!}{%
\begin{tabular}{ccccrrrrrrrrr}
\toprule
 & & & & \multicolumn{4}{c}{$\widehat{\WR}^{(\text{ipcw})}(\tau)$} & \multicolumn{4}{c}{$\widehat{\WR}^{(\text{ctw})}(\tau)$} & \\
\cmidrule(lr){5-8}\cmidrule(lr){9-12}
$\theta$ & $\tau$ & $\mathcal{M}_k$ & True & RB\% & MCSD & ASE & Cov & RB\% & MCSD & ASE & Cov & RE \\
\midrule
\multirow{12}{*}{1.25}& \multirow{4}{*}{12} & $\mathcal{M}_1$ & 1.370 & 1.3 & 0.1643 & 0.1617 & 0.940 & 1.4 & 0.1480 & 0.1457 & 0.940 & 1.23 \\
&  & $\mathcal{M}_2$ & 1.370 & 1.3 & 0.1643 & 0.1617 & 0.940 & 1.3 & 0.1479 & 0.1454 & 0.943 & 1.23 \\
&  & $\mathcal{M}_3$ & 1.370 & 1.4 & 0.1614 & 0.1583 & 0.943 & 1.5 & 0.1473 & 0.1447 & 0.936 & 1.20 \\
&  & $\mathcal{M}_4$ & 1.370 & 1.4 & 0.1614 & 0.1583 & 0.943 & 1.4 & 0.1470 & 0.1444 & 0.937 & 1.21 \\
\cmidrule(lr){2-13}
& \multirow{4}{*}{24} & $\mathcal{M}_1$ & 1.331 & 2.1 & 0.1556 & 0.1528 & 0.940 & 1.2 & 0.1295 & 0.1245 & 0.945 & 1.44 \\
&  & $\mathcal{M}_2$ & 1.331 & 2.1 & 0.1556 & 0.1528 & 0.940 & 1.1 & 0.1285 & 0.1240 & 0.945 & 1.46 \\
&  & $\mathcal{M}_3$ & 1.331 & 2.2 & 0.1463 & 0.1413 & 0.940 & 1.5 & 0.1259 & 0.1210 & 0.937 & 1.35 \\
&  & $\mathcal{M}_4$ & 1.331 & 2.2 & 0.1463 & 0.1413 & 0.940 & 1.4 & 0.1249 & 0.1203 & 0.936 & 1.37 \\
\cmidrule(lr){2-13}
& \multirow{4}{*}{36} & $\mathcal{M}_1$ & 1.293 & 2.3 & 0.1667 & 0.1635 & 0.951 & 1.2 & 0.1265 & 0.1237 & 0.945 & 1.74 \\
&  & $\mathcal{M}_2$ & 1.293 & 2.3 & 0.1667 & 0.1635 & 0.951 & 1.1 & 0.1252 & 0.1226 & 0.949 & 1.77 \\
&  & $\mathcal{M}_3$ & 1.293 & 2.2 & 0.1400 & 0.1401 & 0.956 & 1.7 & 0.1176 & 0.1153 & 0.950 & 1.42 \\
&  & $\mathcal{M}_4$ & 1.293 & 2.2 & 0.1400 & 0.1401 & 0.956 & 1.6 & 0.1162 & 0.1142 & 0.945 & 1.45 \\
\midrule
\multirow{12}{*}{4.00}& \multirow{4}{*}{12} & $\mathcal{M}_1$ & 1.408 & 2.0 & 0.1670 & 0.1664 & 0.950 & 1.5 & 0.1470 & 0.1504 & 0.962 & 1.29 \\
&  & $\mathcal{M}_2$ & 1.408 & 2.0 & 0.1670 & 0.1664 & 0.950 & 0.9 & 0.1471 & 0.1495 & 0.959 & 1.29 \\
&  & $\mathcal{M}_3$ & 1.408 & 1.8 & 0.1649 & 0.1629 & 0.947 & 1.6 & 0.1475 & 0.1494 & 0.960 & 1.25 \\
&  & $\mathcal{M}_4$ & 1.408 & 1.8 & 0.1649 & 0.1629 & 0.947 & 1.0 & 0.1472 & 0.1484 & 0.957 & 1.25 \\
\cmidrule(lr){2-13}
& \multirow{4}{*}{24} & $\mathcal{M}_1$ & 1.389 & 1.5 & 0.1554 & 0.1570 & 0.945 & 1.1 & 0.1237 & 0.1286 & 0.957 & 1.58 \\
&  & $\mathcal{M}_2$ & 1.389 & 1.5 & 0.1554 & 0.1570 & 0.945 & $-$0.2 & 0.1238 & 0.1273 & 0.953 & 1.58 \\
&  & $\mathcal{M}_3$ & 1.389 & 1.3 & 0.1447 & 0.1454 & 0.953 & 1.3 & 0.1207 & 0.1253 & 0.959 & 1.44 \\
&  & $\mathcal{M}_4$ & 1.389 & 1.3 & 0.1447 & 0.1454 & 0.953 & 0.1 & 0.1198 & 0.1236 & 0.955 & 1.46 \\
\cmidrule(lr){2-13}
& \multirow{4}{*}{36} & $\mathcal{M}_1$ & 1.363 & 1.4 & 0.1664 & 0.1682 & 0.950 & 0.9 & 0.1203 & 0.1271 & 0.971 & 1.91 \\
&  & $\mathcal{M}_2$ & 1.363 & 1.4 & 0.1664 & 0.1682 & 0.950 & $-$0.9 & 0.1200 & 0.1255 & 0.964 & 1.92 \\
&  & $\mathcal{M}_3$ & 1.363 & 1.2 & 0.1400 & 0.1439 & 0.966 & 1.2 & 0.1122 & 0.1188 & 0.971 & 1.56 \\
&  & $\mathcal{M}_4$ & 1.363 & 1.2 & 0.1400 & 0.1439 & 0.966 & $-$0.4 & 0.1109 & 0.1168 & 0.959 & 1.59 \\
\bottomrule
\end{tabular}}
\end{table}

\begin{table}[htbp]
\centering
\caption{Simulation results for $\WR(\tau)$ under 80\% censoring by $\tau=36$ for the two-component prioritized endpoint. RB: relative bias (\%) for $\widehat{\WR}(\tau)$ on the natural scale; MCSD: Monte Carlo standard deviation of $\log\widehat{\WR}(\tau)$; ASE: average estimated standard error for $\log\widehat{\WR}(\tau)$; Cov: empirical coverage of the 95\% Wald interval based on log-scale inference; RE $=\mathrm{MCSD}^2(C)/\mathrm{MCSD}^2(I)$ computed on the log scale.}
\label{tab:sim-WR80}
\resizebox{\textwidth}{!}{%
\begin{tabular}{ccccrrrrrrrrr}
\toprule
 & & & & \multicolumn{4}{c}{$\widehat{\WR}^{(\text{ipcw})}(\tau)$} & \multicolumn{4}{c}{$\widehat{\WR}^{(\text{ctw})}(\tau)$} & \\
\cmidrule(lr){5-8}\cmidrule(lr){9-12}
$\theta$ & $\tau$ & $\mathcal{M}_k$ & True & RB\% & MCSD & ASE & Cov & RB\% & MCSD & ASE & Cov & RE \\
\midrule
\multirow{12}{*}{1.25}& \multirow{4}{*}{12} & $\mathcal{M}_1$ & 1.370 & 2.6 & 0.2135 & 0.2030 & 0.935 & 2.0 & 0.1673 & 0.1631 & 0.946 & 1.63 \\
&  & $\mathcal{M}_2$ & 1.370 & 2.6 & 0.2135 & 0.2030 & 0.935 & 1.8 & 0.1669 & 0.1626 & 0.943 & 1.64 \\
&  & $\mathcal{M}_3$ & 1.370 & 2.0 & 0.1927 & 0.1853 & 0.942 & 1.9 & 0.1631 & 0.1575 & 0.941 & 1.40 \\
&  & $\mathcal{M}_4$ & 1.370 & 2.0 & 0.1927 & 0.1853 & 0.942 & 1.7 & 0.1623 & 0.1568 & 0.939 & 1.41 \\
\cmidrule(lr){2-13}
& \multirow{4}{*}{24} & $\mathcal{M}_1$ & 1.331 & 4.4 & 0.2611 & 0.2416 & 0.944 & 2.7 & 0.1687 & 0.1606 & 0.946 & 2.39 \\
&  & $\mathcal{M}_2$ & 1.331 & 4.4 & 0.2611 & 0.2416 & 0.944 & 2.5 & 0.1669 & 0.1591 & 0.949 & 2.45 \\
&  & $\mathcal{M}_3$ & 1.331 & 3.0 & 0.1873 & 0.1828 & 0.945 & 2.6 & 0.1447 & 0.1400 & 0.938 & 1.68 \\
&  & $\mathcal{M}_4$ & 1.331 & 3.0 & 0.1873 & 0.1828 & 0.945 & 2.4 & 0.1430 & 0.1386 & 0.946 & 1.72 \\
\cmidrule(lr){2-13}
& \multirow{4}{*}{36} & $\mathcal{M}_1$ & 1.293 & 5.5 & 0.3279 & 0.2920 & 0.949 & 3.3 & 0.1998 & 0.1826 & 0.943 & 2.69 \\
&  & $\mathcal{M}_2$ & 1.293 & 5.5 & 0.3279 & 0.2920 & 0.949 & 3.2 & 0.1947 & 0.1789 & 0.943 & 2.84 \\
&  & $\mathcal{M}_3$ & 1.293 & 3.6 & 0.1988 & 0.1962 & 0.946 & 2.6 & 0.1440 & 0.1414 & 0.942 & 1.91 \\
&  & $\mathcal{M}_4$ & 1.293 & 3.6 & 0.1988 & 0.1962 & 0.946 & 2.5 & 0.1409 & 0.1389 & 0.948 & 1.99 \\
\midrule
\multirow{12}{*}{4.00}& \multirow{4}{*}{12} & $\mathcal{M}_1$ & 1.408 & 2.3 & 0.2031 & 0.2089 & 0.962 & 0.9 & 0.1612 & 0.1685 & 0.962 & 1.59 \\
&  & $\mathcal{M}_2$ & 1.408 & 2.3 & 0.2031 & 0.2089 & 0.962 & $-$0.0 & 0.1605 & 0.1668 & 0.956 & 1.60 \\
&  & $\mathcal{M}_3$ & 1.408 & 2.1 & 0.1863 & 0.1905 & 0.959 & 1.1 & 0.1565 & 0.1627 & 0.958 & 1.42 \\
&  & $\mathcal{M}_4$ & 1.408 & 2.1 & 0.1863 & 0.1905 & 0.959 & 0.2 & 0.1553 & 0.1606 & 0.955 & 1.44 \\
\cmidrule(lr){2-13}
& \multirow{4}{*}{24} & $\mathcal{M}_1$ & 1.389 & 2.6 & 0.2587 & 0.2476 & 0.945 & 0.1 & 0.1656 & 0.1673 & 0.953 & 2.44 \\
&  & $\mathcal{M}_2$ & 1.389 & 2.6 & 0.2587 & 0.2476 & 0.945 & $-$1.6 & 0.1626 & 0.1633 & 0.952 & 2.53 \\
&  & $\mathcal{M}_3$ & 1.389 & 1.9 & 0.1853 & 0.1879 & 0.951 & 0.7 & 0.1426 & 0.1456 & 0.950 & 1.69 \\
&  & $\mathcal{M}_4$ & 1.389 & 1.9 & 0.1853 & 0.1879 & 0.951 & $-$0.9 & 0.1399 & 0.1420 & 0.946 & 1.75 \\
\cmidrule(lr){2-13}
& \multirow{4}{*}{36} & $\mathcal{M}_1$ & 1.363 & 4.5 & 0.3295 & 0.2900 & 0.937 & 0.6 & 0.1969 & 0.1888 & 0.966 & 2.80 \\
&  & $\mathcal{M}_2$ & 1.363 & 4.5 & 0.3295 & 0.2900 & 0.937 & $-$1.7 & 0.1893 & 0.1805 & 0.954 & 3.03 \\
&  & $\mathcal{M}_3$ & 1.363 & 2.6 & 0.2020 & 0.2013 & 0.957 & 0.8 & 0.1428 & 0.1464 & 0.954 & 2.00 \\
&  & $\mathcal{M}_4$ & 1.363 & 2.6 & 0.2020 & 0.2013 & 0.957 & $-$1.2 & 0.1383 & 0.1416 & 0.952 & 2.13 \\
\bottomrule
\end{tabular}}
\end{table}

\begin{table}[htbp]
\centering
\caption{Simulation results for $\WO(\tau)$ under 20\% censoring by $\tau=36$ for the two-component prioritized endpoint. RB: relative bias (\%) for $\widehat{\WO}(\tau)$ on the natural scale; MCSD: Monte Carlo standard deviation of $\log\widehat{\WO}(\tau)$; ASE: average estimated standard error for $\log\widehat{\WO}(\tau)$; Cov: empirical coverage of the 95\% Wald interval based on log-scale inference; RE $=\mathrm{MCSD}^2(C)/\mathrm{MCSD}^2(I)$ computed on the log scale.}
\label{tab:sim-WO20}
\resizebox{\textwidth}{!}{%
\begin{tabular}{ccccrrrrrrrrr}
\toprule
 & & & & \multicolumn{4}{c}{$\widehat{\WO}^{(\text{ipcw})}(\tau)$} & \multicolumn{4}{c}{$\widehat{\WO}^{(\text{ctw})}(\tau)$} & \\
\cmidrule(lr){5-8}\cmidrule(lr){9-12}
$\theta$ & $\tau$ & $\mathcal{M}_k$ & True & RB\% & MCSD & ASE & Cov & RB\% & MCSD & ASE & Cov & RE \\
\midrule
\multirow{12}{*}{1.25}& \multirow{4}{*}{12} & $\mathcal{M}_1$ & 1.170 & 0.4 & 0.0673 & 0.0686 & 0.957 & 0.4 & 0.0661 & 0.0671 & 0.956 & 1.04 \\
&  & $\mathcal{M}_2$ & 1.170 & 0.4 & 0.0673 & 0.0686 & 0.957 & 0.4 & 0.0662 & 0.0672 & 0.954 & 1.03 \\
&  & $\mathcal{M}_3$ & 1.170 & 0.3 & 0.0669 & 0.0681 & 0.958 & 0.4 & 0.0659 & 0.0669 & 0.959 & 1.03 \\
&  & $\mathcal{M}_4$ & 1.170 & 0.3 & 0.0669 & 0.0681 & 0.958 & 0.4 & 0.0660 & 0.0670 & 0.958 & 1.03 \\
\cmidrule(lr){2-13}
& \multirow{4}{*}{24} & $\mathcal{M}_1$ & 1.237 & 0.3 & 0.0840 & 0.0825 & 0.946 & 0.3 & 0.0804 & 0.0793 & 0.949 & 1.09 \\
&  & $\mathcal{M}_2$ & 1.237 & 0.3 & 0.0840 & 0.0825 & 0.946 & 0.4 & 0.0806 & 0.0795 & 0.950 & 1.09 \\
&  & $\mathcal{M}_3$ & 1.237 & 0.2 & 0.0835 & 0.0818 & 0.944 & 0.3 & 0.0802 & 0.0790 & 0.948 & 1.08 \\
&  & $\mathcal{M}_4$ & 1.237 & 0.2 & 0.0835 & 0.0818 & 0.944 & 0.4 & 0.0804 & 0.0793 & 0.948 & 1.08 \\
\cmidrule(lr){2-13}
& \multirow{4}{*}{36} & $\mathcal{M}_1$ & 1.249 & 0.3 & 0.0889 & 0.0889 & 0.943 & 0.2 & 0.0844 & 0.0842 & 0.947 & 1.11 \\
&  & $\mathcal{M}_2$ & 1.249 & 0.3 & 0.0889 & 0.0889 & 0.943 & 0.4 & 0.0849 & 0.0847 & 0.947 & 1.10 \\
&  & $\mathcal{M}_3$ & 1.249 & 0.4 & 0.0887 & 0.0882 & 0.947 & 0.3 & 0.0845 & 0.0839 & 0.948 & 1.10 \\
&  & $\mathcal{M}_4$ & 1.249 & 0.4 & 0.0887 & 0.0882 & 0.947 & 0.5 & 0.0849 & 0.0844 & 0.949 & 1.09 \\
\midrule
\multirow{12}{*}{4.00}& \multirow{4}{*}{12} & $\mathcal{M}_1$ & 1.179 & 0.3 & 0.0676 & 0.0675 & 0.950 & 0.2 & 0.0663 & 0.0661 & 0.947 & 1.04 \\
&  & $\mathcal{M}_2$ & 1.179 & 0.3 & 0.0676 & 0.0675 & 0.950 & 0.3 & 0.0666 & 0.0664 & 0.946 & 1.03 \\
&  & $\mathcal{M}_3$ & 1.179 & 0.2 & 0.0672 & 0.0670 & 0.946 & 0.2 & 0.0661 & 0.0659 & 0.947 & 1.03 \\
&  & $\mathcal{M}_4$ & 1.179 & 0.2 & 0.0672 & 0.0670 & 0.946 & 0.2 & 0.0664 & 0.0661 & 0.949 & 1.02 \\
\cmidrule(lr){2-13}
& \multirow{4}{*}{24} & $\mathcal{M}_1$ & 1.264 & 0.4 & 0.0815 & 0.0817 & 0.953 & 0.4 & 0.0801 & 0.0787 & 0.945 & 1.04 \\
&  & $\mathcal{M}_2$ & 1.264 & 0.4 & 0.0815 & 0.0817 & 0.953 & 0.5 & 0.0807 & 0.0794 & 0.945 & 1.02 \\
&  & $\mathcal{M}_3$ & 1.264 & 0.2 & 0.0809 & 0.0809 & 0.943 & 0.3 & 0.0798 & 0.0783 & 0.945 & 1.03 \\
&  & $\mathcal{M}_4$ & 1.264 & 0.2 & 0.0809 & 0.0809 & 0.943 & 0.4 & 0.0804 & 0.0790 & 0.945 & 1.01 \\
\cmidrule(lr){2-13}
& \multirow{4}{*}{36} & $\mathcal{M}_1$ & 1.297 & 0.5 & 0.0891 & 0.0886 & 0.955 & 0.5 & 0.0859 & 0.0841 & 0.946 & 1.08 \\
&  & $\mathcal{M}_2$ & 1.297 & 0.5 & 0.0891 & 0.0886 & 0.955 & 0.6 & 0.0869 & 0.0854 & 0.943 & 1.05 \\
&  & $\mathcal{M}_3$ & 1.297 & 0.4 & 0.0885 & 0.0878 & 0.948 & 0.5 & 0.0858 & 0.0838 & 0.943 & 1.06 \\
&  & $\mathcal{M}_4$ & 1.297 & 0.4 & 0.0885 & 0.0878 & 0.948 & 0.6 & 0.0867 & 0.0850 & 0.944 & 1.04 \\
\bottomrule
\end{tabular}}
\end{table}

\begin{table}[htbp]
\centering
\caption{Simulation results for $\WO(\tau)$ under 40\% censoring by $\tau=36$ for the two-component prioritized endpoint. RB: relative bias (\%) for $\widehat{\WO}(\tau)$ on the natural scale; MCSD: Monte Carlo standard deviation of $\log\widehat{\WO}(\tau)$; ASE: average estimated standard error for $\log\widehat{\WO}(\tau)$; Cov: empirical coverage of the 95\% Wald interval based on log-scale inference; RE $=\mathrm{MCSD}^2(C)/\mathrm{MCSD}^2(I)$ computed on the log scale.}
\label{tab:sim-WO40}
\resizebox{\textwidth}{!}{%
\begin{tabular}{ccccrrrrrrrrr}
\toprule
 & & & & \multicolumn{4}{c}{$\widehat{\WO}^{(\text{ipcw})}(\tau)$} & \multicolumn{4}{c}{$\widehat{\WO}^{(\text{ctw})}(\tau)$} & \\
\cmidrule(lr){5-8}\cmidrule(lr){9-12}
$\theta$ & $\tau$ & $\mathcal{M}_k$ & True & RB\% & MCSD & ASE & Cov & RB\% & MCSD & ASE & Cov & RE \\
\midrule
\multirow{12}{*}{1.25}& \multirow{4}{*}{12} & $\mathcal{M}_1$ & 1.170 & 0.4 & 0.0731 & 0.0729 & 0.957 & 0.4 & 0.0695 & 0.0691 & 0.955 & 1.11 \\
&  & $\mathcal{M}_2$ & 1.170 & 0.4 & 0.0731 & 0.0729 & 0.957 & 0.4 & 0.0698 & 0.0693 & 0.952 & 1.09 \\
&  & $\mathcal{M}_3$ & 1.170 & 0.1 & 0.0721 & 0.0714 & 0.956 & 0.3 & 0.0690 & 0.0684 & 0.957 & 1.09 \\
&  & $\mathcal{M}_4$ & 1.170 & 0.1 & 0.0721 & 0.0714 & 0.956 & 0.3 & 0.0694 & 0.0687 & 0.955 & 1.08 \\
\cmidrule(lr){2-13}
& \multirow{4}{*}{24} & $\mathcal{M}_1$ & 1.237 & 0.7 & 0.0924 & 0.0929 & 0.945 & 0.5 & 0.0826 & 0.0837 & 0.954 & 1.25 \\
&  & $\mathcal{M}_2$ & 1.237 & 0.7 & 0.0924 & 0.0929 & 0.945 & 0.7 & 0.0833 & 0.0844 & 0.953 & 1.23 \\
&  & $\mathcal{M}_3$ & 1.237 & 0.5 & 0.0898 & 0.0896 & 0.952 & 0.5 & 0.0820 & 0.0826 & 0.955 & 1.20 \\
&  & $\mathcal{M}_4$ & 1.237 & 0.5 & 0.0898 & 0.0896 & 0.952 & 0.7 & 0.0826 & 0.0833 & 0.954 & 1.18 \\
\cmidrule(lr){2-13}
& \multirow{4}{*}{36} & $\mathcal{M}_1$ & 1.249 & 0.9 & 0.1047 & 0.1054 & 0.959 & 0.7 & 0.0903 & 0.0914 & 0.958 & 1.34 \\
&  & $\mathcal{M}_2$ & 1.249 & 0.9 & 0.1047 & 0.1054 & 0.959 & 1.2 & 0.0919 & 0.0927 & 0.957 & 1.30 \\
&  & $\mathcal{M}_3$ & 1.249 & 1.1 & 0.1006 & 0.1003 & 0.959 & 1.0 & 0.0889 & 0.0895 & 0.957 & 1.28 \\
&  & $\mathcal{M}_4$ & 1.249 & 1.1 & 0.1006 & 0.1003 & 0.959 & 1.4 & 0.0903 & 0.0907 & 0.954 & 1.24 \\
\midrule
\multirow{12}{*}{4.00}& \multirow{4}{*}{12} & $\mathcal{M}_1$ & 1.179 & 0.4 & 0.0719 & 0.0718 & 0.954 & 0.3 & 0.0674 & 0.0683 & 0.946 & 1.14 \\
&  & $\mathcal{M}_2$ & 1.179 & 0.4 & 0.0719 & 0.0718 & 0.954 & 0.4 & 0.0684 & 0.0688 & 0.941 & 1.11 \\
&  & $\mathcal{M}_3$ & 1.179 & 0.2 & 0.0703 & 0.0702 & 0.952 & 0.3 & 0.0667 & 0.0676 & 0.947 & 1.11 \\
&  & $\mathcal{M}_4$ & 1.179 & 0.2 & 0.0703 & 0.0702 & 0.952 & 0.3 & 0.0675 & 0.0681 & 0.946 & 1.08 \\
\cmidrule(lr){2-13}
& \multirow{4}{*}{24} & $\mathcal{M}_1$ & 1.264 & 0.5 & 0.0903 & 0.0920 & 0.954 & 0.4 & 0.0836 & 0.0833 & 0.949 & 1.17 \\
&  & $\mathcal{M}_2$ & 1.264 & 0.5 & 0.0903 & 0.0920 & 0.954 & 0.5 & 0.0859 & 0.0851 & 0.945 & 1.11 \\
&  & $\mathcal{M}_3$ & 1.264 & 0.1 & 0.0871 & 0.0886 & 0.950 & 0.3 & 0.0821 & 0.0821 & 0.946 & 1.13 \\
&  & $\mathcal{M}_4$ & 1.264 & 0.1 & 0.0871 & 0.0886 & 0.950 & 0.4 & 0.0840 & 0.0838 & 0.941 & 1.07 \\
\cmidrule(lr){2-13}
& \multirow{4}{*}{36} & $\mathcal{M}_1$ & 1.297 & 0.7 & 0.1021 & 0.1052 & 0.966 & 0.5 & 0.0907 & 0.0914 & 0.952 & 1.27 \\
&  & $\mathcal{M}_2$ & 1.297 & 0.7 & 0.1021 & 0.1052 & 0.966 & 0.8 & 0.0953 & 0.0948 & 0.946 & 1.15 \\
&  & $\mathcal{M}_3$ & 1.297 & 0.4 & 0.0987 & 0.0997 & 0.957 & 0.6 & 0.0894 & 0.0895 & 0.951 & 1.22 \\
&  & $\mathcal{M}_4$ & 1.297 & 0.4 & 0.0987 & 0.0997 & 0.957 & 0.8 & 0.0934 & 0.0926 & 0.948 & 1.12 \\
\bottomrule
\end{tabular}}
\end{table}

\begin{table}[htbp]
\centering
\caption{Simulation results for $\WO(\tau)$ under 60\% censoring by $\tau=36$ for the two-component prioritized endpoint. RB: relative bias (\%) for $\widehat{\WO}(\tau)$ on the natural scale; MCSD: Monte Carlo standard deviation of $\log\widehat{\WO}(\tau)$; ASE: average estimated standard error for $\log\widehat{\WO}(\tau)$; Cov: empirical coverage of the 95\% Wald interval based on log-scale inference; RE $=\mathrm{MCSD}^2(C)/\mathrm{MCSD}^2(I)$ computed on the log scale.}
\label{tab:sim-WO60}
\resizebox{\textwidth}{!}{%
\begin{tabular}{ccccrrrrrrrrr}
\toprule
 & & & & \multicolumn{4}{c}{$\widehat{\WO}^{(\text{ipcw})}(\tau)$} & \multicolumn{4}{c}{$\widehat{\WO}^{(\text{ctw})}(\tau)$} & \\
\cmidrule(lr){5-8}\cmidrule(lr){9-12}
$\theta$ & $\tau$ & $\mathcal{M}_k$ & True & RB\% & MCSD & ASE & Cov & RB\% & MCSD & ASE & Cov & RE \\
\midrule
\multirow{12}{*}{1.25}& \multirow{4}{*}{12} & $\mathcal{M}_1$ & 1.170 & 0.2 & 0.0809 & 0.0808 & 0.942 & 0.4 & 0.0733 & 0.0725 & 0.937 & 1.22 \\
&  & $\mathcal{M}_2$ & 1.170 & 0.2 & 0.0809 & 0.0808 & 0.942 & 0.5 & 0.0739 & 0.0730 & 0.940 & 1.20 \\
&  & $\mathcal{M}_3$ & 1.170 & $-$0.2 & 0.0771 & 0.0766 & 0.945 & 0.2 & 0.0719 & 0.0709 & 0.939 & 1.15 \\
&  & $\mathcal{M}_4$ & 1.170 & $-$0.2 & 0.0771 & 0.0766 & 0.945 & 0.3 & 0.0724 & 0.0714 & 0.940 & 1.14 \\
\cmidrule(lr){2-13}
& \multirow{4}{*}{24} & $\mathcal{M}_1$ & 1.237 & 1.2 & 0.1151 & 0.1146 & 0.945 & 0.7 & 0.0957 & 0.0924 & 0.947 & 1.45 \\
&  & $\mathcal{M}_2$ & 1.237 & 1.2 & 0.1151 & 0.1146 & 0.945 & 1.0 & 0.0970 & 0.0939 & 0.951 & 1.41 \\
&  & $\mathcal{M}_3$ & 1.237 & 0.6 & 0.1049 & 0.1021 & 0.944 & 0.6 & 0.0918 & 0.0883 & 0.940 & 1.31 \\
&  & $\mathcal{M}_4$ & 1.237 & 0.6 & 0.1049 & 0.1021 & 0.944 & 0.9 & 0.0929 & 0.0897 & 0.940 & 1.27 \\
\cmidrule(lr){2-13}
& \multirow{4}{*}{36} & $\mathcal{M}_1$ & 1.249 & 1.6 & 0.1458 & 0.1457 & 0.971 & 0.8 & 0.1101 & 0.1077 & 0.950 & 1.75 \\
&  & $\mathcal{M}_2$ & 1.249 & 1.6 & 0.1458 & 0.1457 & 0.971 & 1.6 & 0.1127 & 0.1106 & 0.953 & 1.67 \\
&  & $\mathcal{M}_3$ & 1.249 & 1.1 & 0.1181 & 0.1201 & 0.960 & 1.1 & 0.1004 & 0.0990 & 0.951 & 1.39 \\
&  & $\mathcal{M}_4$ & 1.249 & 1.1 & 0.1181 & 0.1201 & 0.960 & 1.8 & 0.1025 & 0.1013 & 0.947 & 1.33 \\
\midrule
\multirow{12}{*}{4.00}& \multirow{4}{*}{12} & $\mathcal{M}_1$ & 1.179 & 0.4 & 0.0780 & 0.0793 & 0.958 & 0.3 & 0.0694 & 0.0717 & 0.955 & 1.26 \\
&  & $\mathcal{M}_2$ & 1.179 & 0.4 & 0.0780 & 0.0793 & 0.958 & 0.4 & 0.0707 & 0.0728 & 0.958 & 1.22 \\
&  & $\mathcal{M}_3$ & 1.179 & $-$0.2 & 0.0743 & 0.0752 & 0.951 & 0.1 & 0.0682 & 0.0701 & 0.956 & 1.19 \\
&  & $\mathcal{M}_4$ & 1.179 & $-$0.2 & 0.0743 & 0.0752 & 0.951 & 0.2 & 0.0694 & 0.0711 & 0.952 & 1.15 \\
\cmidrule(lr){2-13}
& \multirow{4}{*}{24} & $\mathcal{M}_1$ & 1.264 & 0.6 & 0.1100 & 0.1135 & 0.956 & 0.6 & 0.0882 & 0.0922 & 0.962 & 1.56 \\
&  & $\mathcal{M}_2$ & 1.264 & 0.6 & 0.1100 & 0.1135 & 0.956 & 0.9 & 0.0923 & 0.0958 & 0.953 & 1.42 \\
&  & $\mathcal{M}_3$ & 1.264 & $-$0.3 & 0.0981 & 0.1008 & 0.959 & 0.3 & 0.0840 & 0.0880 & 0.960 & 1.36 \\
&  & $\mathcal{M}_4$ & 1.264 & $-$0.3 & 0.0981 & 0.1008 & 0.959 & 0.6 & 0.0873 & 0.0911 & 0.961 & 1.26 \\
\cmidrule(lr){2-13}
& \multirow{4}{*}{36} & $\mathcal{M}_1$ & 1.297 & 0.8 & 0.1401 & 0.1456 & 0.961 & 0.7 & 0.1017 & 0.1080 & 0.974 & 1.90 \\
&  & $\mathcal{M}_2$ & 1.297 & 0.8 & 0.1401 & 0.1456 & 0.961 & 1.2 & 0.1091 & 0.1147 & 0.964 & 1.65 \\
&  & $\mathcal{M}_3$ & 1.297 & 0.1 & 0.1136 & 0.1194 & 0.965 & 0.6 & 0.0931 & 0.0990 & 0.970 & 1.49 \\
&  & $\mathcal{M}_4$ & 1.297 & 0.1 & 0.1136 & 0.1194 & 0.965 & 1.2 & 0.0990 & 0.1047 & 0.968 & 1.32 \\
\bottomrule
\end{tabular}}
\end{table}

\begin{table}[htbp]
\centering
\caption{Simulation results for $\WO(\tau)$ under 80\% censoring by $\tau=36$ for the two-component prioritized endpoint. RB: relative bias (\%) for $\widehat{\WO}(\tau)$ on the natural scale; MCSD: Monte Carlo standard deviation of $\log\widehat{\WO}(\tau)$; ASE: average estimated standard error for $\log\widehat{\WO}(\tau)$; Cov: empirical coverage of the 95\% Wald interval based on log-scale inference; RE $=\mathrm{MCSD}^2(C)/\mathrm{MCSD}^2(I)$ computed on the log scale.}
\label{tab:sim-WO80}
\resizebox{\textwidth}{!}{%
\begin{tabular}{ccccrrrrrrrrr}
\toprule
 & & & & \multicolumn{4}{c}{$\widehat{\WO}^{(\text{ipcw})}(\tau)$} & \multicolumn{4}{c}{$\widehat{\WO}^{(\text{ctw})}(\tau)$} & \\
\cmidrule(lr){5-8}\cmidrule(lr){9-12}
$\theta$ & $\tau$ & $\mathcal{M}_k$ & True & RB\% & MCSD & ASE & Cov & RB\% & MCSD & ASE & Cov & RE \\
\midrule
\multirow{12}{*}{1.25}& \multirow{4}{*}{12} & $\mathcal{M}_1$ & 1.170 & 0.5 & 0.1056 & 0.1018 & 0.946 & 0.5 & 0.0830 & 0.0811 & 0.951 & 1.62 \\
&  & $\mathcal{M}_2$ & 1.170 & 0.5 & 0.1056 & 0.1018 & 0.946 & 0.7 & 0.0841 & 0.0821 & 0.951 & 1.58 \\
&  & $\mathcal{M}_3$ & 1.170 & $-$0.5 & 0.0900 & 0.0872 & 0.944 & 0.0 & 0.0786 & 0.0760 & 0.935 & 1.31 \\
&  & $\mathcal{M}_4$ & 1.170 & $-$0.5 & 0.0900 & 0.0872 & 0.944 & 0.2 & 0.0795 & 0.0769 & 0.938 & 1.28 \\
\cmidrule(lr){2-13}
& \multirow{4}{*}{24} & $\mathcal{M}_1$ & 1.237 & 2.1 & 0.2052 & 0.1930 & 0.969 & 1.6 & 0.1273 & 0.1219 & 0.955 & 2.60 \\
&  & $\mathcal{M}_2$ & 1.237 & 2.1 & 0.2052 & 0.1930 & 0.969 & 2.3 & 0.1305 & 0.1250 & 0.957 & 2.47 \\
&  & $\mathcal{M}_3$ & 1.237 & 0.3 & 0.1298 & 0.1292 & 0.949 & 0.9 & 0.1041 & 0.1013 & 0.942 & 1.56 \\
&  & $\mathcal{M}_4$ & 1.237 & 0.3 & 0.1298 & 0.1292 & 0.949 & 1.6 & 0.1064 & 0.1037 & 0.949 & 1.49 \\
\cmidrule(lr){2-13}
& \multirow{4}{*}{36} & $\mathcal{M}_1$ & 1.249 & 5.4 & 0.3193 & 0.3143 & 0.969 & 3.7 & 0.1997 & 0.1757 & 0.962 & 2.56 \\
&  & $\mathcal{M}_2$ & 1.249 & 5.4 & 0.3193 & 0.3143 & 0.969 & 5.5 & 0.2068 & 0.1833 & 0.964 & 2.38 \\
&  & $\mathcal{M}_3$ & 1.249 & 1.6 & 0.1634 & 0.1656 & 0.953 & 1.5 & 0.1214 & 0.1205 & 0.950 & 1.81 \\
&  & $\mathcal{M}_4$ & 1.249 & 1.6 & 0.1634 & 0.1656 & 0.953 & 2.8 & 0.1253 & 0.1249 & 0.952 & 1.70 \\
\midrule
\multirow{12}{*}{4.00}& \multirow{4}{*}{12} & $\mathcal{M}_1$ & 1.179 & 0.4 & 0.0962 & 0.1001 & 0.969 & 0.1 & 0.0777 & 0.0808 & 0.963 & 1.53 \\
&  & $\mathcal{M}_2$ & 1.179 & 0.4 & 0.0962 & 0.1001 & 0.969 & 0.2 & 0.0800 & 0.0829 & 0.962 & 1.45 \\
&  & $\mathcal{M}_3$ & 1.179 & $-$0.6 & 0.0837 & 0.0858 & 0.954 & $-$0.3 & 0.0732 & 0.0755 & 0.955 & 1.31 \\
&  & $\mathcal{M}_4$ & 1.179 & $-$0.6 & 0.0837 & 0.0858 & 0.954 & $-$0.2 & 0.0751 & 0.0772 & 0.956 & 1.24 \\
\cmidrule(lr){2-13}
& \multirow{4}{*}{24} & $\mathcal{M}_1$ & 1.264 & 1.6 & 0.2052 & 0.1926 & 0.968 & $-$0.1 & 0.1213 & 0.1227 & 0.960 & 2.86 \\
&  & $\mathcal{M}_2$ & 1.264 & 1.6 & 0.2052 & 0.1926 & 0.968 & 0.4 & 0.1283 & 0.1291 & 0.965 & 2.56 \\
&  & $\mathcal{M}_3$ & 1.264 & $-$0.7 & 0.1233 & 0.1277 & 0.951 & $-$0.4 & 0.0993 & 0.1014 & 0.946 & 1.54 \\
&  & $\mathcal{M}_4$ & 1.264 & $-$0.7 & 0.1233 & 0.1277 & 0.951 & 0.2 & 0.1049 & 0.1067 & 0.953 & 1.38 \\
\cmidrule(lr){2-13}
& \multirow{4}{*}{36} & $\mathcal{M}_1$ & 1.297 & 7.0 & 0.3419 & 0.3083 & 0.960 & 0.1 & 0.1802 & 0.1703 & 0.973 & 3.60 \\
&  & $\mathcal{M}_2$ & 1.297 & 7.0 & 0.3419 & 0.3083 & 0.960 & 1.2 & 0.1912 & 0.1808 & 0.975 & 3.20 \\
&  & $\mathcal{M}_3$ & 1.297 & 0.3 & 0.1612 & 0.1647 & 0.963 & $-$0.2 & 0.1174 & 0.1213 & 0.958 & 1.88 \\
&  & $\mathcal{M}_4$ & 1.297 & 0.3 & 0.1612 & 0.1647 & 0.963 & 1.0 & 0.1267 & 0.1305 & 0.961 & 1.62 \\
\bottomrule
\end{tabular}}
\end{table}

\subsection{Nuisance model misspecification study: three-component prioritized endpoint}
\label{supp:sim3}

We also evaluated the original nuisance-model misspecification study under a three-component prioritized endpoint. In this setting, a moderate event was added as the third-priority component after death and the serious nonfatal event. The treatment-arm sample size, covariate distribution, censoring calibration, restriction times, dependence strengths, and nuisance configurations are otherwise the same as in the two-component study described in the main text. The purpose of this analysis is to assess whether the efficiency advantage of conditional tie weighting persists when the hierarchy contains an additional lower-priority component.

The three-component results reinforce the conclusions from the two-component setting. The proposed estimator remains more efficient than IPCW estimator of \citet{cui2025ipcw} across censoring levels and restriction times, with the largest gains under heavier censoring and longer restriction horizons. The gain can be larger than in the two-component setting because the third-priority component creates another opportunity to recover partially observed information when higher-priority ties are unresolved. As before, the correctly specified configuration $\mathcal{M}_1$ gives the most stable performance. Event-time model misspecification reduces the efficiency gain, and censoring-model misspecification has the greatest impact on finite-sample bias and coverage.

\begin{table}[htbp]
\centering
\caption{Simulation results for $\NB(\tau)$ under 20\% censoring by $\tau=36$ for the three-component prioritized endpoint. Same layout as the main-text simulation tables. RB: relative bias (\%); MCSD: Monte Carlo standard deviation; ASE: average estimated standard error; Cov: empirical coverage of the 95\% Wald interval; RE $=\mathrm{MCSD}^2(C)/\mathrm{MCSD}^2(I)$.}
\label{tab:sim3-NB20}
\resizebox{\textwidth}{!}{%
\begin{tabular}{ccccrrrrrrrrr}
\toprule
 & & & & \multicolumn{4}{c}{$\widehat{\NB}^{(\text{ipcw})}(\tau)$} & \multicolumn{4}{c}{$\widehat{\NB}^{(\text{ctw})}(\tau)$} & \\
\cmidrule(lr){5-8}\cmidrule(lr){9-12}
$\theta$ & $\tau$ & $\mathcal{M}_k$ & True & RB\% & MCSD & ASE & Cov & RB\% & MCSD & ASE & Cov & RE \\
\midrule
\multirow{12}{*}{1.25}& \multirow{4}{*}{12} & $\mathcal{M}_1$ & 0.149 & $-$0.0 & 0.0398 & 0.0394 & 0.951 & 0.0 & 0.0391 & 0.0384 & 0.940 & 1.04 \\
&  & $\mathcal{M}_2$ & 0.149 & $-$0.0 & 0.0398 & 0.0394 & 0.951 & 0.2 & 0.0392 & 0.0385 & 0.941 & 1.03 \\
&  & $\mathcal{M}_3$ & 0.149 & $-$0.2 & 0.0398 & 0.0392 & 0.952 & $-$0.0 & 0.0391 & 0.0384 & 0.942 & 1.04 \\
&  & $\mathcal{M}_4$ & 0.149 & $-$0.2 & 0.0398 & 0.0392 & 0.952 & 0.2 & 0.0391 & 0.0384 & 0.942 & 1.03 \\
\cmidrule(lr){2-13}
& \multirow{4}{*}{24} & $\mathcal{M}_1$ & 0.153 & $-$0.5 & 0.0435 & 0.0430 & 0.944 & $-$0.5 & 0.0416 & 0.0411 & 0.945 & 1.10 \\
&  & $\mathcal{M}_2$ & 0.153 & $-$0.5 & 0.0435 & 0.0430 & 0.944 & $-$0.2 & 0.0417 & 0.0413 & 0.948 & 1.09 \\
&  & $\mathcal{M}_3$ & 0.153 & $-$0.1 & 0.0437 & 0.0428 & 0.943 & $-$0.1 & 0.0416 & 0.0411 & 0.943 & 1.10 \\
&  & $\mathcal{M}_4$ & 0.153 & $-$0.1 & 0.0437 & 0.0428 & 0.943 & 0.1 & 0.0418 & 0.0412 & 0.947 & 1.10 \\
\cmidrule(lr){2-13}
& \multirow{4}{*}{36} & $\mathcal{M}_1$ & 0.139 & $-$0.3 & 0.0444 & 0.0447 & 0.956 & 0.1 & 0.0425 & 0.0421 & 0.951 & 1.09 \\
&  & $\mathcal{M}_2$ & 0.139 & $-$0.3 & 0.0444 & 0.0447 & 0.956 & 0.6 & 0.0429 & 0.0424 & 0.951 & 1.07 \\
&  & $\mathcal{M}_3$ & 0.139 & 1.1 & 0.0450 & 0.0445 & 0.946 & 1.0 & 0.0427 & 0.0421 & 0.951 & 1.11 \\
&  & $\mathcal{M}_4$ & 0.139 & 1.1 & 0.0450 & 0.0445 & 0.946 & 1.5 & 0.0430 & 0.0424 & 0.949 & 1.10 \\
\midrule
\multirow{12}{*}{4.00}& \multirow{4}{*}{12} & $\mathcal{M}_1$ & 0.139 & 0.1 & 0.0357 & 0.0369 & 0.951 & 0.1 & 0.0348 & 0.0362 & 0.954 & 1.05 \\
&  & $\mathcal{M}_2$ & 0.139 & 0.1 & 0.0357 & 0.0369 & 0.951 & 0.8 & 0.0351 & 0.0363 & 0.953 & 1.03 \\
&  & $\mathcal{M}_3$ & 0.139 & $-$0.2 & 0.0357 & 0.0367 & 0.952 & $-$0.1 & 0.0348 & 0.0361 & 0.953 & 1.05 \\
&  & $\mathcal{M}_4$ & 0.139 & $-$0.2 & 0.0357 & 0.0367 & 0.952 & 0.7 & 0.0351 & 0.0362 & 0.950 & 1.04 \\
\cmidrule(lr){2-13}
& \multirow{4}{*}{24} & $\mathcal{M}_1$ & 0.158 & 0.3 & 0.0419 & 0.0417 & 0.950 & 0.6 & 0.0401 & 0.0402 & 0.949 & 1.09 \\
&  & $\mathcal{M}_2$ & 0.158 & 0.3 & 0.0419 & 0.0417 & 0.950 & 1.7 & 0.0406 & 0.0405 & 0.950 & 1.06 \\
&  & $\mathcal{M}_3$ & 0.158 & 0.3 & 0.0418 & 0.0415 & 0.948 & 0.8 & 0.0400 & 0.0401 & 0.947 & 1.09 \\
&  & $\mathcal{M}_4$ & 0.158 & 0.3 & 0.0418 & 0.0415 & 0.948 & 1.9 & 0.0405 & 0.0404 & 0.951 & 1.07 \\
\cmidrule(lr){2-13}
& \multirow{4}{*}{36} & $\mathcal{M}_1$ & 0.156 & 0.5 & 0.0442 & 0.0441 & 0.947 & 0.8 & 0.0423 & 0.0419 & 0.948 & 1.09 \\
&  & $\mathcal{M}_2$ & 0.156 & 0.5 & 0.0442 & 0.0441 & 0.947 & 1.8 & 0.0429 & 0.0425 & 0.941 & 1.06 \\
&  & $\mathcal{M}_3$ & 0.156 & 1.1 & 0.0439 & 0.0438 & 0.948 & 1.4 & 0.0421 & 0.0418 & 0.945 & 1.09 \\
&  & $\mathcal{M}_4$ & 0.156 & 1.1 & 0.0439 & 0.0438 & 0.948 & 2.5 & 0.0426 & 0.0424 & 0.949 & 1.06 \\
\bottomrule
\end{tabular}}
\end{table}

\begin{table}[htbp]
\centering
\caption{Simulation results for $\NB(\tau)$ under 40\% censoring by $\tau=36$ for the three-component prioritized endpoint. Same layout as the main-text simulation tables. RB: relative bias (\%); MCSD: Monte Carlo standard deviation; ASE: average estimated standard error; Cov: empirical coverage of the 95\% Wald interval; RE $=\mathrm{MCSD}^2(C)/\mathrm{MCSD}^2(I)$.}
\label{tab:sim3-NB40}
\resizebox{\textwidth}{!}{%
\begin{tabular}{ccccrrrrrrrrr}
\toprule
 & & & & \multicolumn{4}{c}{$\widehat{\NB}^{(\text{ipcw})}(\tau)$} & \multicolumn{4}{c}{$\widehat{\NB}^{(\text{ctw})}(\tau)$} & \\
\cmidrule(lr){5-8}\cmidrule(lr){9-12}
$\theta$ & $\tau$ & $\mathcal{M}_k$ & True & RB\% & MCSD & ASE & Cov & RB\% & MCSD & ASE & Cov & RE \\
\midrule
\multirow{12}{*}{1.25}& \multirow{4}{*}{12} & $\mathcal{M}_1$ & 0.149 & $-$1.2 & 0.0414 & 0.0418 & 0.948 & $-$1.0 & 0.0393 & 0.0394 & 0.950 & 1.11 \\
&  & $\mathcal{M}_2$ & 0.149 & $-$1.2 & 0.0414 & 0.0418 & 0.948 & $-$0.6 & 0.0395 & 0.0395 & 0.946 & 1.10 \\
&  & $\mathcal{M}_3$ & 0.149 & $-$1.8 & 0.0410 & 0.0413 & 0.944 & $-$1.1 & 0.0391 & 0.0392 & 0.946 & 1.10 \\
&  & $\mathcal{M}_4$ & 0.149 & $-$1.8 & 0.0410 & 0.0413 & 0.944 & $-$0.7 & 0.0392 & 0.0393 & 0.947 & 1.09 \\
\cmidrule(lr){2-13}
& \multirow{4}{*}{24} & $\mathcal{M}_1$ & 0.153 & $-$0.9 & 0.0473 & 0.0482 & 0.954 & $-$0.3 & 0.0439 & 0.0431 & 0.946 & 1.16 \\
&  & $\mathcal{M}_2$ & 0.153 & $-$0.9 & 0.0473 & 0.0482 & 0.954 & 0.3 & 0.0443 & 0.0435 & 0.946 & 1.14 \\
&  & $\mathcal{M}_3$ & 0.153 & $-$0.4 & 0.0468 & 0.0471 & 0.950 & 0.0 & 0.0434 & 0.0428 & 0.946 & 1.16 \\
&  & $\mathcal{M}_4$ & 0.153 & $-$0.4 & 0.0468 & 0.0471 & 0.950 & 0.6 & 0.0437 & 0.0432 & 0.946 & 1.15 \\
\cmidrule(lr){2-13}
& \multirow{4}{*}{36} & $\mathcal{M}_1$ & 0.139 & $-$1.3 & 0.0522 & 0.0528 & 0.943 & $-$0.5 & 0.0465 & 0.0453 & 0.949 & 1.26 \\
&  & $\mathcal{M}_2$ & 0.139 & $-$1.3 & 0.0522 & 0.0528 & 0.943 & 0.6 & 0.0473 & 0.0462 & 0.945 & 1.22 \\
&  & $\mathcal{M}_3$ & 0.139 & 1.0 & 0.0508 & 0.0508 & 0.941 & 0.8 & 0.0454 & 0.0447 & 0.946 & 1.25 \\
&  & $\mathcal{M}_4$ & 0.139 & 1.0 & 0.0508 & 0.0508 & 0.941 & 1.9 & 0.0460 & 0.0455 & 0.945 & 1.22 \\
\midrule
\multirow{12}{*}{4.00}& \multirow{4}{*}{12} & $\mathcal{M}_1$ & 0.139 & 0.1 & 0.0394 & 0.0393 & 0.948 & 0.1 & 0.0369 & 0.0373 & 0.950 & 1.14 \\
&  & $\mathcal{M}_2$ & 0.139 & 0.1 & 0.0394 & 0.0393 & 0.948 & 2.0 & 0.0375 & 0.0378 & 0.944 & 1.10 \\
&  & $\mathcal{M}_3$ & 0.139 & $-$0.8 & 0.0386 & 0.0386 & 0.947 & $-$0.2 & 0.0365 & 0.0371 & 0.953 & 1.12 \\
&  & $\mathcal{M}_4$ & 0.139 & $-$0.8 & 0.0386 & 0.0386 & 0.947 & 1.6 & 0.0371 & 0.0375 & 0.949 & 1.08 \\
\cmidrule(lr){2-13}
& \multirow{4}{*}{24} & $\mathcal{M}_1$ & 0.158 & 0.1 & 0.0469 & 0.0471 & 0.950 & 0.1 & 0.0416 & 0.0426 & 0.949 & 1.27 \\
&  & $\mathcal{M}_2$ & 0.158 & 0.1 & 0.0469 & 0.0471 & 0.950 & 2.5 & 0.0430 & 0.0435 & 0.945 & 1.19 \\
&  & $\mathcal{M}_3$ & 0.158 & $-$0.5 & 0.0456 & 0.0456 & 0.950 & 0.2 & 0.0410 & 0.0421 & 0.947 & 1.23 \\
&  & $\mathcal{M}_4$ & 0.158 & $-$0.5 & 0.0456 & 0.0456 & 0.950 & 2.6 & 0.0423 & 0.0430 & 0.942 & 1.16 \\
\cmidrule(lr){2-13}
& \multirow{4}{*}{36} & $\mathcal{M}_1$ & 0.156 & $-$0.0 & 0.0518 & 0.0524 & 0.958 & 0.1 & 0.0453 & 0.0455 & 0.954 & 1.31 \\
&  & $\mathcal{M}_2$ & 0.156 & $-$0.0 & 0.0518 & 0.0524 & 0.958 & 2.4 & 0.0477 & 0.0472 & 0.943 & 1.18 \\
&  & $\mathcal{M}_3$ & 0.156 & 0.6 & 0.0500 & 0.0500 & 0.952 & 1.2 & 0.0447 & 0.0447 & 0.950 & 1.25 \\
&  & $\mathcal{M}_4$ & 0.156 & 0.6 & 0.0500 & 0.0500 & 0.952 & 3.8 & 0.0467 & 0.0463 & 0.949 & 1.14 \\
\bottomrule
\end{tabular}}
\end{table}

\begin{table}[htbp]
\centering
\caption{Simulation results for $\NB(\tau)$ under 60\% censoring by $\tau=36$ for the three-component prioritized endpoint. Same layout as the main-text simulation tables. RB: relative bias (\%); MCSD: Monte Carlo standard deviation; ASE: average estimated standard error; Cov: empirical coverage of the 95\% Wald interval; RE $=\mathrm{MCSD}^2(C)/\mathrm{MCSD}^2(I)$.}
\label{tab:sim3-NB60}
\resizebox{\textwidth}{!}{%
\begin{tabular}{ccccrrrrrrrrr}
\toprule
 & & & & \multicolumn{4}{c}{$\widehat{\NB}^{(\text{ipcw})}(\tau)$} & \multicolumn{4}{c}{$\widehat{\NB}^{(\text{ctw})}(\tau)$} & \\
\cmidrule(lr){5-8}\cmidrule(lr){9-12}
$\theta$ & $\tau$ & $\mathcal{M}_k$ & True & RB\% & MCSD & ASE & Cov & RB\% & MCSD & ASE & Cov & RE \\
\midrule
\multirow{12}{*}{1.25}& \multirow{4}{*}{12} & $\mathcal{M}_1$ & 0.149 & $-$0.5 & 0.0458 & 0.0464 & 0.950 & $-$0.1 & 0.0411 & 0.0412 & 0.947 & 1.24 \\
&  & $\mathcal{M}_2$ & 0.149 & $-$0.5 & 0.0458 & 0.0464 & 0.950 & 0.6 & 0.0414 & 0.0415 & 0.947 & 1.22 \\
&  & $\mathcal{M}_3$ & 0.149 & $-$1.4 & 0.0453 & 0.0447 & 0.943 & $-$0.2 & 0.0409 & 0.0406 & 0.948 & 1.22 \\
&  & $\mathcal{M}_4$ & 0.149 & $-$1.4 & 0.0453 & 0.0447 & 0.943 & 0.5 & 0.0412 & 0.0409 & 0.947 & 1.21 \\
\cmidrule(lr){2-13}
& \multirow{4}{*}{24} & $\mathcal{M}_1$ & 0.153 & $-$0.5 & 0.0548 & 0.0593 & 0.962 & $-$0.4 & 0.0457 & 0.0473 & 0.960 & 1.44 \\
&  & $\mathcal{M}_2$ & 0.153 & $-$0.5 & 0.0548 & 0.0593 & 0.962 & 0.7 & 0.0466 & 0.0482 & 0.958 & 1.38 \\
&  & $\mathcal{M}_3$ & 0.153 & 0.6 & 0.0530 & 0.0543 & 0.946 & 0.8 & 0.0453 & 0.0458 & 0.949 & 1.37 \\
&  & $\mathcal{M}_4$ & 0.153 & 0.6 & 0.0530 & 0.0543 & 0.946 & 1.9 & 0.0460 & 0.0466 & 0.947 & 1.33 \\
\cmidrule(lr){2-13}
& \multirow{4}{*}{36} & $\mathcal{M}_1$ & 0.139 & $-$1.2 & 0.0678 & 0.0723 & 0.970 & 1.0 & 0.0519 & 0.0530 & 0.955 & 1.71 \\
&  & $\mathcal{M}_2$ & 0.139 & $-$1.2 & 0.0678 & 0.0723 & 0.970 & 3.3 & 0.0538 & 0.0547 & 0.953 & 1.58 \\
&  & $\mathcal{M}_3$ & 0.139 & 4.4 & 0.0591 & 0.0614 & 0.953 & 3.7 & 0.0483 & 0.0494 & 0.951 & 1.50 \\
&  & $\mathcal{M}_4$ & 0.139 & 4.4 & 0.0591 & 0.0614 & 0.953 & 5.9 & 0.0500 & 0.0508 & 0.949 & 1.40 \\
\midrule
\multirow{12}{*}{4.00}& \multirow{4}{*}{12} & $\mathcal{M}_1$ & 0.139 & $-$1.0 & 0.0433 & 0.0436 & 0.955 & $-$0.1 & 0.0396 & 0.0394 & 0.954 & 1.20 \\
&  & $\mathcal{M}_2$ & 0.139 & $-$1.0 & 0.0433 & 0.0436 & 0.955 & 3.5 & 0.0409 & 0.0402 & 0.943 & 1.12 \\
&  & $\mathcal{M}_3$ & 0.139 & $-$2.6 & 0.0413 & 0.0416 & 0.955 & $-$0.9 & 0.0389 & 0.0386 & 0.948 & 1.13 \\
&  & $\mathcal{M}_4$ & 0.139 & $-$2.6 & 0.0413 & 0.0416 & 0.955 & 2.6 & 0.0401 & 0.0394 & 0.939 & 1.06 \\
\cmidrule(lr){2-13}
& \multirow{4}{*}{24} & $\mathcal{M}_1$ & 0.158 & $-$0.6 & 0.0551 & 0.0581 & 0.970 & 0.1 & 0.0455 & 0.0473 & 0.971 & 1.47 \\
&  & $\mathcal{M}_2$ & 0.158 & $-$0.6 & 0.0551 & 0.0581 & 0.970 & 4.2 & 0.0481 & 0.0491 & 0.964 & 1.31 \\
&  & $\mathcal{M}_3$ & 0.158 & $-$1.3 & 0.0501 & 0.0523 & 0.962 & 0.5 & 0.0437 & 0.0454 & 0.962 & 1.31 \\
&  & $\mathcal{M}_4$ & 0.158 & $-$1.3 & 0.0501 & 0.0523 & 0.962 & 4.7 & 0.0459 & 0.0471 & 0.961 & 1.19 \\
\cmidrule(lr){2-13}
& \multirow{4}{*}{36} & $\mathcal{M}_1$ & 0.156 & $-$1.5 & 0.0716 & 0.0726 & 0.965 & $-$0.3 & 0.0524 & 0.0537 & 0.962 & 1.86 \\
&  & $\mathcal{M}_2$ & 0.156 & $-$1.5 & 0.0716 & 0.0726 & 0.965 & 3.5 & 0.0569 & 0.0569 & 0.956 & 1.58 \\
&  & $\mathcal{M}_3$ & 0.156 & 0.3 & 0.0589 & 0.0603 & 0.954 & 2.0 & 0.0485 & 0.0497 & 0.960 & 1.48 \\
&  & $\mathcal{M}_4$ & 0.156 & 0.3 & 0.0589 & 0.0603 & 0.954 & 6.3 & 0.0520 & 0.0525 & 0.953 & 1.28 \\
\bottomrule
\end{tabular}}
\end{table}

\begin{table}[htbp]
\centering
\caption{Simulation results for $\NB(\tau)$ under 80\% censoring by $\tau=36$ for the three-component prioritized endpoint. Same layout as the main-text simulation tables. RB: relative bias (\%); MCSD: Monte Carlo standard deviation; ASE: average estimated standard error; Cov: empirical coverage of the 95\% Wald interval; RE $=\mathrm{MCSD}^2(C)/\mathrm{MCSD}^2(I)$.}
\label{tab:sim3-NB80}
\resizebox{\textwidth}{!}{%
\begin{tabular}{ccccrrrrrrrrr}
\toprule
 & & & & \multicolumn{4}{c}{$\widehat{\NB}^{(\text{ipcw})}(\tau)$} & \multicolumn{4}{c}{$\widehat{\NB}^{(\text{ctw})}(\tau)$} & \\
\cmidrule(lr){5-8}\cmidrule(lr){9-12}
$\theta$ & $\tau$ & $\mathcal{M}_k$ & True & RB\% & MCSD & ASE & Cov & RB\% & MCSD & ASE & Cov & RE \\
\midrule
\multirow{12}{*}{1.25}& \multirow{4}{*}{12} & $\mathcal{M}_1$ & 0.149 & $-$0.0 & 0.0569 & 0.0585 & 0.961 & 1.2 & 0.0457 & 0.0456 & 0.957 & 1.55 \\
&  & $\mathcal{M}_2$ & 0.149 & $-$0.0 & 0.0569 & 0.0585 & 0.961 & 2.5 & 0.0465 & 0.0462 & 0.957 & 1.50 \\
&  & $\mathcal{M}_3$ & 0.149 & $-$1.4 & 0.0513 & 0.0516 & 0.953 & 0.4 & 0.0440 & 0.0436 & 0.955 & 1.36 \\
&  & $\mathcal{M}_4$ & 0.149 & $-$1.4 & 0.0513 & 0.0516 & 0.953 & 1.7 & 0.0445 & 0.0440 & 0.954 & 1.33 \\
\cmidrule(lr){2-13}
& \multirow{4}{*}{24} & $\mathcal{M}_1$ & 0.153 & $-$1.1 & 0.0976 & 0.0974 & 0.958 & $-$0.1 & 0.0616 & 0.0608 & 0.959 & 2.51 \\
&  & $\mathcal{M}_2$ & 0.153 & $-$1.1 & 0.0976 & 0.0974 & 0.958 & 2.0 & 0.0636 & 0.0626 & 0.958 & 2.35 \\
&  & $\mathcal{M}_3$ & 0.153 & 2.3 & 0.0676 & 0.0697 & 0.955 & 2.2 & 0.0521 & 0.0525 & 0.959 & 1.68 \\
&  & $\mathcal{M}_4$ & 0.153 & 2.3 & 0.0676 & 0.0697 & 0.955 & 4.2 & 0.0535 & 0.0539 & 0.957 & 1.59 \\
\cmidrule(lr){2-13}
& \multirow{4}{*}{36} & $\mathcal{M}_1$ & 0.139 & $-$7.0 & 0.1584 & 0.1404 & 0.963 & $-$2.9 & 0.0938 & 0.0808 & 0.962 & 2.85 \\
&  & $\mathcal{M}_2$ & 0.139 & $-$7.0 & 0.1584 & 0.1404 & 0.963 & 1.4 & 0.0967 & 0.0840 & 0.964 & 2.68 \\
&  & $\mathcal{M}_3$ & 0.139 & 6.4 & 0.0813 & 0.0858 & 0.962 & 5.2 & 0.0582 & 0.0601 & 0.957 & 1.95 \\
&  & $\mathcal{M}_4$ & 0.139 & 6.4 & 0.0813 & 0.0858 & 0.962 & 9.1 & 0.0609 & 0.0628 & 0.956 & 1.78 \\
\midrule
\multirow{12}{*}{4.00}& \multirow{4}{*}{12} & $\mathcal{M}_1$ & 0.139 & 0.5 & 0.0564 & 0.0552 & 0.959 & 0.7 & 0.0447 & 0.0444 & 0.962 & 1.59 \\
&  & $\mathcal{M}_2$ & 0.139 & 0.5 & 0.0564 & 0.0552 & 0.959 & 7.0 & 0.0470 & 0.0459 & 0.953 & 1.43 \\
&  & $\mathcal{M}_3$ & 0.139 & $-$2.7 & 0.0489 & 0.0479 & 0.949 & $-$0.7 & 0.0419 & 0.0418 & 0.956 & 1.36 \\
&  & $\mathcal{M}_4$ & 0.139 & $-$2.7 & 0.0489 & 0.0479 & 0.949 & 5.5 & 0.0438 & 0.0432 & 0.954 & 1.25 \\
\cmidrule(lr){2-13}
& \multirow{4}{*}{24} & $\mathcal{M}_1$ & 0.158 & $-$2.1 & 0.1009 & 0.0973 & 0.954 & 0.6 & 0.0625 & 0.0627 & 0.956 & 2.61 \\
&  & $\mathcal{M}_2$ & 0.158 & $-$2.1 & 0.1009 & 0.0973 & 0.954 & 7.9 & 0.0676 & 0.0662 & 0.943 & 2.23 \\
&  & $\mathcal{M}_3$ & 0.158 & $-$1.6 & 0.0668 & 0.0669 & 0.947 & 1.4 & 0.0523 & 0.0527 & 0.949 & 1.63 \\
&  & $\mathcal{M}_4$ & 0.158 & $-$1.6 & 0.0668 & 0.0669 & 0.947 & 9.1 & 0.0565 & 0.0557 & 0.936 & 1.40 \\
\cmidrule(lr){2-13}
& \multirow{4}{*}{36} & $\mathcal{M}_1$ & 0.156 & $-$6.4 & 0.1586 & 0.1394 & 0.944 & $-$0.5 & 0.0913 & 0.0853 & 0.968 & 3.01 \\
&  & $\mathcal{M}_2$ & 0.156 & $-$6.4 & 0.1586 & 0.1394 & 0.944 & 6.9 & 0.0980 & 0.0908 & 0.959 & 2.62 \\
&  & $\mathcal{M}_3$ & 0.156 & 0.2 & 0.0801 & 0.0837 & 0.955 & 4.0 & 0.0605 & 0.0612 & 0.959 & 1.75 \\
&  & $\mathcal{M}_4$ & 0.156 & 0.2 & 0.0801 & 0.0837 & 0.955 & 12.4 & 0.0668 & 0.0664 & 0.936 & 1.44 \\
\bottomrule
\end{tabular}}
\end{table}

\begin{table}[htbp]
\centering
\caption{Simulation results for $\WR(\tau)$ under 20\% censoring by $\tau=36$ for the three-component prioritized endpoint. Same layout as the main-text simulation tables. RB: relative bias (\%) for $\widehat{\WR}(\tau)$ on the natural scale; MCSD: Monte Carlo standard deviation of $\log\widehat{\WR}(\tau)$; ASE: average estimated standard error for $\log\widehat{\WR}(\tau)$; Cov: empirical coverage of the 95\% Wald interval based on log-scale inference; RE $=\mathrm{MCSD}^2(C)/\mathrm{MCSD}^2(I)$ computed on the log scale.}
\label{tab:sim3-WR20}
\resizebox{\textwidth}{!}{%
\begin{tabular}{ccccrrrrrrrrr}
\toprule
 & & & & \multicolumn{4}{c}{$\widehat{\WR}^{(\text{ipcw})}(\tau)$} & \multicolumn{4}{c}{$\widehat{\WR}^{(\text{ctw})}(\tau)$} & \\
\cmidrule(lr){5-8}\cmidrule(lr){9-12}
$\theta$ & $\tau$ & $\mathcal{M}_k$ & True & RB\% & MCSD & ASE & Cov & RB\% & MCSD & ASE & Cov & RE \\
\midrule
\multirow{12}{*}{1.25}& \multirow{4}{*}{12} & $\mathcal{M}_1$ & 1.490 & 0.6 & 0.1089 & 0.1075 & 0.953 & 0.6 & 0.1068 & 0.1051 & 0.943 & 1.04 \\
&  & $\mathcal{M}_2$ & 1.490 & 0.6 & 0.1089 & 0.1075 & 0.953 & 0.6 & 0.1067 & 0.1050 & 0.945 & 1.04 \\
&  & $\mathcal{M}_3$ & 1.490 & 0.7 & 0.1092 & 0.1075 & 0.953 & 0.7 & 0.1069 & 0.1051 & 0.946 & 1.04 \\
&  & $\mathcal{M}_4$ & 1.490 & 0.7 & 0.1092 & 0.1075 & 0.953 & 0.7 & 0.1068 & 0.1050 & 0.944 & 1.05 \\
\cmidrule(lr){2-13}
& \multirow{4}{*}{24} & $\mathcal{M}_1$ & 1.402 & 0.4 & 0.0976 & 0.0959 & 0.943 & 0.3 & 0.0931 & 0.0920 & 0.946 & 1.10 \\
&  & $\mathcal{M}_2$ & 1.402 & 0.4 & 0.0976 & 0.0959 & 0.943 & 0.3 & 0.0930 & 0.0919 & 0.946 & 1.10 \\
&  & $\mathcal{M}_3$ & 1.402 & 0.6 & 0.0985 & 0.0958 & 0.944 & 0.5 & 0.0934 & 0.0920 & 0.944 & 1.11 \\
&  & $\mathcal{M}_4$ & 1.402 & 0.6 & 0.0985 & 0.0958 & 0.944 & 0.4 & 0.0933 & 0.0919 & 0.947 & 1.11 \\
\cmidrule(lr){2-13}
& \multirow{4}{*}{36} & $\mathcal{M}_1$ & 1.335 & 0.4 & 0.0936 & 0.0938 & 0.957 & 0.5 & 0.0897 & 0.0888 & 0.951 & 1.09 \\
&  & $\mathcal{M}_2$ & 1.335 & 0.4 & 0.0936 & 0.0938 & 0.957 & 0.4 & 0.0897 & 0.0888 & 0.952 & 1.09 \\
&  & $\mathcal{M}_3$ & 1.335 & 0.8 & 0.0950 & 0.0934 & 0.945 & 0.8 & 0.0902 & 0.0887 & 0.950 & 1.11 \\
&  & $\mathcal{M}_4$ & 1.335 & 0.8 & 0.0950 & 0.0934 & 0.945 & 0.7 & 0.0901 & 0.0886 & 0.949 & 1.11 \\
\midrule
\multirow{12}{*}{4.00}& \multirow{4}{*}{12} & $\mathcal{M}_1$ & 1.568 & 0.9 & 0.1187 & 0.1224 & 0.954 & 0.8 & 0.1155 & 0.1200 & 0.950 & 1.06 \\
&  & $\mathcal{M}_2$ & 1.568 & 0.9 & 0.1187 & 0.1224 & 0.954 & 0.7 & 0.1154 & 0.1194 & 0.948 & 1.06 \\
&  & $\mathcal{M}_3$ & 1.568 & 1.0 & 0.1191 & 0.1224 & 0.949 & 0.9 & 0.1158 & 0.1200 & 0.953 & 1.06 \\
&  & $\mathcal{M}_4$ & 1.568 & 1.0 & 0.1191 & 0.1224 & 0.949 & 0.8 & 0.1156 & 0.1194 & 0.951 & 1.06 \\
\cmidrule(lr){2-13}
& \multirow{4}{*}{24} & $\mathcal{M}_1$ & 1.480 & 0.8 & 0.1063 & 0.1053 & 0.947 & 0.8 & 0.1017 & 0.1016 & 0.947 & 1.09 \\
&  & $\mathcal{M}_2$ & 1.480 & 0.8 & 0.1063 & 0.1053 & 0.947 & 0.5 & 0.1012 & 0.1008 & 0.946 & 1.10 \\
&  & $\mathcal{M}_3$ & 1.480 & 1.0 & 0.1066 & 0.1052 & 0.942 & 1.0 & 0.1018 & 0.1016 & 0.946 & 1.10 \\
&  & $\mathcal{M}_4$ & 1.480 & 1.0 & 0.1066 & 0.1052 & 0.942 & 0.7 & 0.1012 & 0.1008 & 0.947 & 1.11 \\
\cmidrule(lr){2-13}
& \multirow{4}{*}{36} & $\mathcal{M}_1$ & 1.417 & 0.8 & 0.1007 & 0.0999 & 0.946 & 0.8 & 0.0962 & 0.0952 & 0.951 & 1.09 \\
&  & $\mathcal{M}_2$ & 1.417 & 0.8 & 0.1007 & 0.0999 & 0.946 & 0.3 & 0.0953 & 0.0943 & 0.948 & 1.12 \\
&  & $\mathcal{M}_3$ & 1.417 & 1.1 & 0.1004 & 0.0995 & 0.951 & 1.1 & 0.0959 & 0.0951 & 0.948 & 1.09 \\
&  & $\mathcal{M}_4$ & 1.417 & 1.1 & 0.1004 & 0.0995 & 0.951 & 0.5 & 0.0948 & 0.0942 & 0.947 & 1.12 \\
\bottomrule
\end{tabular}}
\end{table}

\begin{table}[htbp]
\centering
\caption{Simulation results for $\WR(\tau)$ under 40\% censoring by $\tau=36$ for the three-component prioritized endpoint. Same layout as the main-text simulation tables. RB: relative bias (\%) for $\widehat{\WR}(\tau)$ on the natural scale; MCSD: Monte Carlo standard deviation of $\log\widehat{\WR}(\tau)$; ASE: average estimated standard error for $\log\widehat{\WR}(\tau)$; Cov: empirical coverage of the 95\% Wald interval based on log-scale inference; RE $=\mathrm{MCSD}^2(C)/\mathrm{MCSD}^2(I)$ computed on the log scale.}
\label{tab:sim3-WR40}
\resizebox{\textwidth}{!}{%
\begin{tabular}{ccccrrrrrrrrr}
\toprule
 & & & & \multicolumn{4}{c}{$\widehat{\WR}^{(\text{ipcw})}(\tau)$} & \multicolumn{4}{c}{$\widehat{\WR}^{(\text{ctw})}(\tau)$} & \\
\cmidrule(lr){5-8}\cmidrule(lr){9-12}
$\theta$ & $\tau$ & $\mathcal{M}_k$ & True & RB\% & MCSD & ASE & Cov & RB\% & MCSD & ASE & Cov & RE \\
\midrule
\multirow{12}{*}{1.25}& \multirow{4}{*}{12} & $\mathcal{M}_1$ & 1.490 & 0.3 & 0.1136 & 0.1139 & 0.943 & 0.3 & 0.1080 & 0.1078 & 0.946 & 1.11 \\
&  & $\mathcal{M}_2$ & 1.490 & 0.3 & 0.1136 & 0.1139 & 0.943 & 0.2 & 0.1078 & 0.1076 & 0.945 & 1.11 \\
&  & $\mathcal{M}_3$ & 1.490 & 0.4 & 0.1136 & 0.1136 & 0.939 & 0.4 & 0.1079 & 0.1078 & 0.943 & 1.11 \\
&  & $\mathcal{M}_4$ & 1.490 & 0.4 & 0.1136 & 0.1136 & 0.939 & 0.4 & 0.1076 & 0.1075 & 0.945 & 1.11 \\
\cmidrule(lr){2-13}
& \multirow{4}{*}{24} & $\mathcal{M}_1$ & 1.402 & 0.4 & 0.1065 & 0.1070 & 0.949 & 0.5 & 0.0985 & 0.0964 & 0.946 & 1.17 \\
&  & $\mathcal{M}_2$ & 1.402 & 0.4 & 0.1065 & 0.1070 & 0.949 & 0.3 & 0.0983 & 0.0963 & 0.947 & 1.17 \\
&  & $\mathcal{M}_3$ & 1.402 & 0.7 & 0.1059 & 0.1053 & 0.946 & 0.7 & 0.0977 & 0.0960 & 0.944 & 1.17 \\
&  & $\mathcal{M}_4$ & 1.402 & 0.7 & 0.1059 & 0.1053 & 0.946 & 0.5 & 0.0973 & 0.0959 & 0.947 & 1.18 \\
\cmidrule(lr){2-13}
& \multirow{4}{*}{36} & $\mathcal{M}_1$ & 1.335 & 0.4 & 0.1105 & 0.1098 & 0.938 & 0.5 & 0.0984 & 0.0954 & 0.949 & 1.26 \\
&  & $\mathcal{M}_2$ & 1.335 & 0.4 & 0.1105 & 0.1098 & 0.938 & 0.3 & 0.0983 & 0.0955 & 0.944 & 1.26 \\
&  & $\mathcal{M}_3$ & 1.335 & 0.9 & 0.1071 & 0.1058 & 0.938 & 0.8 & 0.0959 & 0.0941 & 0.946 & 1.25 \\
&  & $\mathcal{M}_4$ & 1.335 & 0.9 & 0.1071 & 0.1058 & 0.938 & 0.6 & 0.0956 & 0.0941 & 0.943 & 1.26 \\
\midrule
\multirow{12}{*}{4.00}& \multirow{4}{*}{12} & $\mathcal{M}_1$ & 1.568 & 1.0 & 0.1312 & 0.1295 & 0.945 & 0.8 & 0.1230 & 0.1233 & 0.946 & 1.14 \\
&  & $\mathcal{M}_2$ & 1.568 & 1.0 & 0.1312 & 0.1295 & 0.945 & 0.7 & 0.1221 & 0.1221 & 0.943 & 1.16 \\
&  & $\mathcal{M}_3$ & 1.568 & 1.2 & 0.1303 & 0.1290 & 0.943 & 1.0 & 0.1227 & 0.1232 & 0.948 & 1.13 \\
&  & $\mathcal{M}_4$ & 1.568 & 1.2 & 0.1303 & 0.1290 & 0.943 & 0.8 & 0.1216 & 0.1220 & 0.947 & 1.15 \\
\cmidrule(lr){2-13}
& \multirow{4}{*}{24} & $\mathcal{M}_1$ & 1.480 & 0.9 & 0.1191 & 0.1176 & 0.947 & 0.6 & 0.1055 & 0.1071 & 0.951 & 1.27 \\
&  & $\mathcal{M}_2$ & 1.480 & 0.9 & 0.1191 & 0.1176 & 0.947 & 0.0 & 0.1049 & 0.1054 & 0.943 & 1.29 \\
&  & $\mathcal{M}_3$ & 1.480 & 1.1 & 0.1173 & 0.1156 & 0.947 & 0.9 & 0.1049 & 0.1066 & 0.951 & 1.25 \\
&  & $\mathcal{M}_4$ & 1.480 & 1.1 & 0.1173 & 0.1156 & 0.947 & 0.3 & 0.1040 & 0.1049 & 0.943 & 1.27 \\
\cmidrule(lr){2-13}
& \multirow{4}{*}{36} & $\mathcal{M}_1$ & 1.417 & 0.9 & 0.1180 & 0.1173 & 0.950 & 0.6 & 0.1029 & 0.1028 & 0.949 & 1.31 \\
&  & $\mathcal{M}_2$ & 1.417 & 0.9 & 0.1180 & 0.1173 & 0.950 & $-$0.4 & 0.1027 & 0.1012 & 0.945 & 1.32 \\
&  & $\mathcal{M}_3$ & 1.417 & 1.2 & 0.1147 & 0.1130 & 0.949 & 1.1 & 0.1021 & 0.1015 & 0.951 & 1.26 \\
&  & $\mathcal{M}_4$ & 1.417 & 1.2 & 0.1147 & 0.1130 & 0.949 & 0.1 & 0.1012 & 0.0997 & 0.948 & 1.28 \\
\bottomrule
\end{tabular}}
\end{table}

\begin{table}[htbp]
\centering
\caption{Simulation results for $\WR(\tau)$ under 60\% censoring by $\tau=36$ for the three-component prioritized endpoint. Same layout as the main-text simulation tables. RB: relative bias (\%) for $\widehat{\WR}(\tau)$ on the natural scale; MCSD: Monte Carlo standard deviation of $\log\widehat{\WR}(\tau)$; ASE: average estimated standard error for $\log\widehat{\WR}(\tau)$; Cov: empirical coverage of the 95\% Wald interval based on log-scale inference; RE $=\mathrm{MCSD}^2(C)/\mathrm{MCSD}^2(I)$ computed on the log scale.}
\label{tab:sim3-WR60}
\resizebox{\textwidth}{!}{%
\begin{tabular}{ccccrrrrrrrrr}
\toprule
 & & & & \multicolumn{4}{c}{$\widehat{\WR}^{(\text{ipcw})}(\tau)$} & \multicolumn{4}{c}{$\widehat{\WR}^{(\text{ctw})}(\tau)$} & \\
\cmidrule(lr){5-8}\cmidrule(lr){9-12}
$\theta$ & $\tau$ & $\mathcal{M}_k$ & True & RB\% & MCSD & ASE & Cov & RB\% & MCSD & ASE & Cov & RE \\
\midrule
\multirow{12}{*}{1.25}& \multirow{4}{*}{12} & $\mathcal{M}_1$ & 1.490 & 0.7 & 0.1260 & 0.1256 & 0.950 & 0.7 & 0.1129 & 0.1124 & 0.949 & 1.25 \\
&  & $\mathcal{M}_2$ & 1.490 & 0.7 & 0.1260 & 0.1256 & 0.950 & 0.6 & 0.1124 & 0.1120 & 0.945 & 1.26 \\
&  & $\mathcal{M}_3$ & 1.490 & 1.1 & 0.1269 & 0.1233 & 0.939 & 1.0 & 0.1132 & 0.1118 & 0.950 & 1.26 \\
&  & $\mathcal{M}_4$ & 1.490 & 1.1 & 0.1269 & 0.1233 & 0.939 & 0.8 & 0.1126 & 0.1114 & 0.950 & 1.27 \\
\cmidrule(lr){2-13}
& \multirow{4}{*}{24} & $\mathcal{M}_1$ & 1.402 & 1.0 & 0.1238 & 0.1307 & 0.958 & 0.5 & 0.1026 & 0.1056 & 0.958 & 1.46 \\
&  & $\mathcal{M}_2$ & 1.402 & 1.0 & 0.1238 & 0.1307 & 0.958 & 0.2 & 0.1025 & 0.1055 & 0.958 & 1.46 \\
&  & $\mathcal{M}_3$ & 1.402 & 1.5 & 0.1205 & 0.1211 & 0.946 & 1.1 & 0.1021 & 0.1028 & 0.948 & 1.39 \\
&  & $\mathcal{M}_4$ & 1.402 & 1.5 & 0.1205 & 0.1211 & 0.946 & 0.8 & 0.1018 & 0.1025 & 0.949 & 1.40 \\
\cmidrule(lr){2-13}
& \multirow{4}{*}{36} & $\mathcal{M}_1$ & 1.335 & 1.1 & 0.1430 & 0.1487 & 0.956 & 1.1 & 0.1097 & 0.1112 & 0.951 & 1.70 \\
&  & $\mathcal{M}_2$ & 1.335 & 1.1 & 0.1430 & 0.1487 & 0.956 & 0.9 & 0.1103 & 0.1113 & 0.950 & 1.68 \\
&  & $\mathcal{M}_3$ & 1.335 & 2.3 & 0.1249 & 0.1271 & 0.951 & 1.8 & 0.1025 & 0.1037 & 0.950 & 1.49 \\
&  & $\mathcal{M}_4$ & 1.335 & 2.3 & 0.1249 & 0.1271 & 0.951 & 1.5 & 0.1028 & 0.1037 & 0.953 & 1.48 \\
\midrule
\multirow{12}{*}{4.00}& \multirow{4}{*}{12} & $\mathcal{M}_1$ & 1.568 & 0.8 & 0.1440 & 0.1429 & 0.955 & 0.7 & 0.1304 & 0.1295 & 0.949 & 1.22 \\
&  & $\mathcal{M}_2$ & 1.568 & 0.8 & 0.1440 & 0.1429 & 0.955 & 0.6 & 0.1298 & 0.1272 & 0.945 & 1.23 \\
&  & $\mathcal{M}_3$ & 1.568 & 1.0 & 0.1405 & 0.1400 & 0.950 & 0.9 & 0.1298 & 0.1287 & 0.948 & 1.17 \\
&  & $\mathcal{M}_4$ & 1.568 & 1.0 & 0.1405 & 0.1400 & 0.950 & 0.8 & 0.1289 & 0.1263 & 0.947 & 1.19 \\
\cmidrule(lr){2-13}
& \multirow{4}{*}{24} & $\mathcal{M}_1$ & 1.480 & 1.0 & 0.1397 & 0.1437 & 0.961 & 0.6 & 0.1147 & 0.1180 & 0.966 & 1.48 \\
&  & $\mathcal{M}_2$ & 1.480 & 1.0 & 0.1397 & 0.1437 & 0.961 & $-$0.4 & 0.1137 & 0.1151 & 0.961 & 1.51 \\
&  & $\mathcal{M}_3$ & 1.480 & 1.4 & 0.1297 & 0.1330 & 0.962 & 1.2 & 0.1115 & 0.1148 & 0.963 & 1.35 \\
&  & $\mathcal{M}_4$ & 1.480 & 1.4 & 0.1297 & 0.1330 & 0.962 & 0.3 & 0.1098 & 0.1118 & 0.961 & 1.39 \\
\cmidrule(lr){2-13}
& \multirow{4}{*}{36} & $\mathcal{M}_1$ & 1.417 & 1.3 & 0.1623 & 0.1593 & 0.956 & 0.6 & 0.1186 & 0.1200 & 0.952 & 1.87 \\
&  & $\mathcal{M}_2$ & 1.417 & 1.3 & 0.1623 & 0.1593 & 0.956 & $-$1.0 & 0.1185 & 0.1173 & 0.942 & 1.87 \\
&  & $\mathcal{M}_3$ & 1.417 & 1.8 & 0.1365 & 0.1360 & 0.947 & 1.6 & 0.1107 & 0.1122 & 0.960 & 1.52 \\
&  & $\mathcal{M}_4$ & 1.417 & 1.8 & 0.1365 & 0.1360 & 0.947 & 0.0 & 0.1095 & 0.1094 & 0.958 & 1.55 \\
\bottomrule
\end{tabular}}
\end{table}

\begin{table}[htbp]
\centering
\caption{Simulation results for $\WR(\tau)$ under 80\% censoring by $\tau=36$ for the three-component prioritized endpoint. Same layout as the main-text simulation tables. RB: relative bias (\%) for $\widehat{\WR}(\tau)$ on the natural scale; MCSD: Monte Carlo standard deviation of $\log\widehat{\WR}(\tau)$; ASE: average estimated standard error for $\log\widehat{\WR}(\tau)$; Cov: empirical coverage of the 95\% Wald interval based on log-scale inference; RE $=\mathrm{MCSD}^2(C)/\mathrm{MCSD}^2(I)$ computed on the log scale.}
\label{tab:sim3-WR80}
\resizebox{\textwidth}{!}{%
\begin{tabular}{ccccrrrrrrrrr}
\toprule
 & & & & \multicolumn{4}{c}{$\widehat{\WR}^{(\text{ipcw})}(\tau)$} & \multicolumn{4}{c}{$\widehat{\WR}^{(\text{ctw})}(\tau)$} & \\
\cmidrule(lr){5-8}\cmidrule(lr){9-12}
$\theta$ & $\tau$ & $\mathcal{M}_k$ & True & RB\% & MCSD & ASE & Cov & RB\% & MCSD & ASE & Cov & RE \\
\midrule
\multirow{12}{*}{1.25}& \multirow{4}{*}{12} & $\mathcal{M}_1$ & 1.490 & 1.7 & 0.1576 & 0.1572 & 0.952 & 1.5 & 0.1257 & 0.1242 & 0.957 & 1.57 \\
&  & $\mathcal{M}_2$ & 1.490 & 1.7 & 0.1576 & 0.1572 & 0.952 & 1.3 & 0.1257 & 0.1236 & 0.954 & 1.57 \\
&  & $\mathcal{M}_3$ & 1.490 & 2.2 & 0.1460 & 0.1441 & 0.950 & 1.8 & 0.1225 & 0.1206 & 0.955 & 1.42 \\
&  & $\mathcal{M}_4$ & 1.490 & 2.2 & 0.1460 & 0.1441 & 0.950 & 1.6 & 0.1220 & 0.1199 & 0.957 & 1.43 \\
\cmidrule(lr){2-13}
& \multirow{4}{*}{24} & $\mathcal{M}_1$ & 1.402 & 3.0 & 0.2120 & 0.2062 & 0.947 & 1.3 & 0.1369 & 0.1343 & 0.955 & 2.40 \\
&  & $\mathcal{M}_2$ & 1.402 & 3.0 & 0.2120 & 0.2062 & 0.947 & 0.9 & 0.1369 & 0.1339 & 0.952 & 2.40 \\
&  & $\mathcal{M}_3$ & 1.402 & 3.3 & 0.1558 & 0.1564 & 0.943 & 2.0 & 0.1181 & 0.1179 & 0.956 & 1.74 \\
&  & $\mathcal{M}_4$ & 1.402 & 3.3 & 0.1558 & 0.1564 & 0.943 & 1.6 & 0.1177 & 0.1176 & 0.958 & 1.75 \\
\cmidrule(lr){2-13}
& \multirow{4}{*}{36} & $\mathcal{M}_1$ & 1.335 & 4.1 & 0.2897 & 0.2623 & 0.954 & 1.6 & 0.1797 & 0.1615 & 0.951 & 2.60 \\
&  & $\mathcal{M}_2$ & 1.335 & 4.1 & 0.2897 & 0.2623 & 0.954 & 1.2 & 0.1771 & 0.1601 & 0.952 & 2.68 \\
&  & $\mathcal{M}_3$ & 1.335 & 4.1 & 0.1743 & 0.1776 & 0.948 & 2.6 & 0.1238 & 0.1261 & 0.956 & 1.98 \\
&  & $\mathcal{M}_4$ & 1.335 & 4.1 & 0.1743 & 0.1776 & 0.948 & 2.3 & 0.1236 & 0.1258 & 0.957 & 1.99 \\
\midrule
\multirow{12}{*}{4.00}& \multirow{4}{*}{12} & $\mathcal{M}_1$ & 1.568 & 2.6 & 0.1891 & 0.1793 & 0.943 & 1.4 & 0.1476 & 0.1451 & 0.956 & 1.64 \\
&  & $\mathcal{M}_2$ & 1.568 & 2.6 & 0.1891 & 0.1793 & 0.943 & 1.0 & 0.1451 & 0.1404 & 0.953 & 1.70 \\
&  & $\mathcal{M}_3$ & 1.568 & 2.6 & 0.1710 & 0.1636 & 0.951 & 1.7 & 0.1416 & 0.1402 & 0.957 & 1.46 \\
&  & $\mathcal{M}_4$ & 1.568 & 2.6 & 0.1710 & 0.1636 & 0.951 & 1.3 & 0.1385 & 0.1357 & 0.953 & 1.52 \\
\cmidrule(lr){2-13}
& \multirow{4}{*}{24} & $\mathcal{M}_1$ & 1.480 & 3.2 & 0.2487 & 0.2283 & 0.928 & 1.6 & 0.1555 & 0.1536 & 0.945 & 2.56 \\
&  & $\mathcal{M}_2$ & 1.480 & 3.2 & 0.2487 & 0.2283 & 0.928 & 0.1 & 0.1522 & 0.1466 & 0.941 & 2.67 \\
&  & $\mathcal{M}_3$ & 1.480 & 3.0 & 0.1777 & 0.1720 & 0.943 & 2.3 & 0.1345 & 0.1334 & 0.945 & 1.74 \\
&  & $\mathcal{M}_4$ & 1.480 & 3.0 & 0.1777 & 0.1720 & 0.943 & 1.0 & 0.1312 & 0.1278 & 0.948 & 1.83 \\
\cmidrule(lr){2-13}
& \multirow{4}{*}{36} & $\mathcal{M}_1$ & 1.417 & 5.1 & 0.3227 & 0.2783 & 0.931 & 2.2 & 0.1916 & 0.1809 & 0.955 & 2.84 \\
&  & $\mathcal{M}_2$ & 1.417 & 5.1 & 0.3227 & 0.2783 & 0.931 & $-$0.1 & 0.1834 & 0.1706 & 0.946 & 3.09 \\
&  & $\mathcal{M}_3$ & 1.417 & 3.5 & 0.1893 & 0.1901 & 0.951 & 3.0 & 0.1386 & 0.1382 & 0.954 & 1.87 \\
&  & $\mathcal{M}_4$ & 1.417 & 3.5 & 0.1893 & 0.1901 & 0.951 & 1.0 & 0.1348 & 0.1323 & 0.952 & 1.97 \\
\bottomrule
\end{tabular}}
\end{table}

\begin{table}[htbp]
\centering
\caption{Simulation results for $\WO(\tau)$ under 20\% censoring by $\tau=36$ for the three-component prioritized endpoint. Same layout as the main-text simulation tables. RB: relative bias (\%) for $\widehat{\WO}(\tau)$ on the natural scale; MCSD: Monte Carlo standard deviation of $\log\widehat{\WO}(\tau)$; ASE: average estimated standard error for $\log\widehat{\WO}(\tau)$; Cov: empirical coverage of the 95\% Wald interval based on log-scale inference; RE $=\mathrm{MCSD}^2(C)/\mathrm{MCSD}^2(I)$ computed on the log scale.}
\label{tab:sim3-WO20}
\resizebox{\textwidth}{!}{%
\begin{tabular}{ccccrrrrrrrrr}
\toprule
 & & & & \multicolumn{4}{c}{$\widehat{\WO}^{(\text{ipcw})}(\tau)$} & \multicolumn{4}{c}{$\widehat{\WO}^{(\text{ctw})}(\tau)$} & \\
\cmidrule(lr){5-8}\cmidrule(lr){9-12}
$\theta$ & $\tau$ & $\mathcal{M}_k$ & True & RB\% & MCSD & ASE & Cov & RB\% & MCSD & ASE & Cov & RE \\
\midrule
\multirow{12}{*}{1.25}& \multirow{4}{*}{12} & $\mathcal{M}_1$ & 1.349 & 0.4 & 0.0816 & 0.0807 & 0.953 & 0.4 & 0.0801 & 0.0787 & 0.941 & 1.04 \\
&  & $\mathcal{M}_2$ & 1.349 & 0.4 & 0.0816 & 0.0807 & 0.953 & 0.4 & 0.0802 & 0.0788 & 0.942 & 1.03 \\
&  & $\mathcal{M}_3$ & 1.349 & 0.3 & 0.0815 & 0.0804 & 0.951 & 0.4 & 0.0800 & 0.0786 & 0.941 & 1.04 \\
&  & $\mathcal{M}_4$ & 1.349 & 0.3 & 0.0815 & 0.0804 & 0.951 & 0.4 & 0.0801 & 0.0787 & 0.942 & 1.03 \\
\cmidrule(lr){2-13}
& \multirow{4}{*}{24} & $\mathcal{M}_1$ & 1.362 & 0.3 & 0.0892 & 0.0882 & 0.946 & 0.3 & 0.0852 & 0.0843 & 0.946 & 1.10 \\
&  & $\mathcal{M}_2$ & 1.362 & 0.3 & 0.0892 & 0.0882 & 0.946 & 0.4 & 0.0856 & 0.0847 & 0.947 & 1.09 \\
&  & $\mathcal{M}_3$ & 1.362 & 0.4 & 0.0898 & 0.0879 & 0.945 & 0.4 & 0.0854 & 0.0842 & 0.944 & 1.10 \\
&  & $\mathcal{M}_4$ & 1.362 & 0.4 & 0.0898 & 0.0879 & 0.945 & 0.5 & 0.0858 & 0.0846 & 0.948 & 1.10 \\
\cmidrule(lr){2-13}
& \multirow{4}{*}{36} & $\mathcal{M}_1$ & 1.322 & 0.4 & 0.0906 & 0.0913 & 0.959 & 0.5 & 0.0869 & 0.0861 & 0.953 & 1.09 \\
&  & $\mathcal{M}_2$ & 1.322 & 0.4 & 0.0906 & 0.0913 & 0.959 & 0.6 & 0.0876 & 0.0868 & 0.952 & 1.07 \\
&  & $\mathcal{M}_3$ & 1.322 & 0.8 & 0.0920 & 0.0911 & 0.950 & 0.7 & 0.0873 & 0.0860 & 0.952 & 1.11 \\
&  & $\mathcal{M}_4$ & 1.322 & 0.8 & 0.0920 & 0.0911 & 0.950 & 0.9 & 0.0880 & 0.0866 & 0.950 & 1.09 \\
\midrule
\multirow{12}{*}{4.00}& \multirow{4}{*}{12} & $\mathcal{M}_1$ & 1.322 & 0.3 & 0.0729 & 0.0754 & 0.951 & 0.3 & 0.0711 & 0.0738 & 0.957 & 1.05 \\
&  & $\mathcal{M}_2$ & 1.322 & 0.3 & 0.0729 & 0.0754 & 0.951 & 0.5 & 0.0717 & 0.0742 & 0.954 & 1.03 \\
&  & $\mathcal{M}_3$ & 1.322 & 0.2 & 0.0728 & 0.0749 & 0.954 & 0.3 & 0.0711 & 0.0736 & 0.955 & 1.05 \\
&  & $\mathcal{M}_4$ & 1.322 & 0.2 & 0.0728 & 0.0749 & 0.954 & 0.5 & 0.0716 & 0.0740 & 0.950 & 1.03 \\
\cmidrule(lr){2-13}
& \multirow{4}{*}{24} & $\mathcal{M}_1$ & 1.376 & 0.5 & 0.0861 & 0.0858 & 0.949 & 0.6 & 0.0825 & 0.0825 & 0.949 & 1.09 \\
&  & $\mathcal{M}_2$ & 1.376 & 0.5 & 0.0861 & 0.0858 & 0.949 & 0.9 & 0.0835 & 0.0834 & 0.950 & 1.06 \\
&  & $\mathcal{M}_3$ & 1.376 & 0.5 & 0.0859 & 0.0852 & 0.948 & 0.7 & 0.0823 & 0.0824 & 0.949 & 1.09 \\
&  & $\mathcal{M}_4$ & 1.376 & 0.5 & 0.0859 & 0.0852 & 0.948 & 1.0 & 0.0833 & 0.0832 & 0.951 & 1.06 \\
\cmidrule(lr){2-13}
& \multirow{4}{*}{36} & $\mathcal{M}_1$ & 1.369 & 0.7 & 0.0908 & 0.0907 & 0.948 & 0.7 & 0.0869 & 0.0860 & 0.951 & 1.09 \\
&  & $\mathcal{M}_2$ & 1.369 & 0.7 & 0.0908 & 0.0907 & 0.948 & 1.0 & 0.0882 & 0.0874 & 0.945 & 1.06 \\
&  & $\mathcal{M}_3$ & 1.369 & 0.8 & 0.0903 & 0.0901 & 0.951 & 0.9 & 0.0866 & 0.0859 & 0.946 & 1.09 \\
&  & $\mathcal{M}_4$ & 1.369 & 0.8 & 0.0903 & 0.0901 & 0.951 & 1.3 & 0.0876 & 0.0872 & 0.948 & 1.06 \\
\bottomrule
\end{tabular}}
\end{table}

\begin{table}[htbp]
\centering
\caption{Simulation results for $\WO(\tau)$ under 40\% censoring by $\tau=36$ for the three-component prioritized endpoint. Same layout as the main-text simulation tables. RB: relative bias (\%) for $\widehat{\WO}(\tau)$ on the natural scale; MCSD: Monte Carlo standard deviation of $\log\widehat{\WO}(\tau)$; ASE: average estimated standard error for $\log\widehat{\WO}(\tau)$; Cov: empirical coverage of the 95\% Wald interval based on log-scale inference; RE $=\mathrm{MCSD}^2(C)/\mathrm{MCSD}^2(I)$ computed on the log scale.}
\label{tab:sim3-WO40}
\resizebox{\textwidth}{!}{%
\begin{tabular}{ccccrrrrrrrrr}
\toprule
 & & & & \multicolumn{4}{c}{$\widehat{\WO}^{(\text{ipcw})}(\tau)$} & \multicolumn{4}{c}{$\widehat{\WO}^{(\text{ctw})}(\tau)$} & \\
\cmidrule(lr){5-8}\cmidrule(lr){9-12}
$\theta$ & $\tau$ & $\mathcal{M}_k$ & True & RB\% & MCSD & ASE & Cov & RB\% & MCSD & ASE & Cov & RE \\
\midrule
\multirow{12}{*}{1.25}& \multirow{4}{*}{12} & $\mathcal{M}_1$ & 1.349 & 0.1 & 0.0847 & 0.0857 & 0.950 & 0.1 & 0.0805 & 0.0807 & 0.948 & 1.11 \\
&  & $\mathcal{M}_2$ & 1.349 & 0.1 & 0.0847 & 0.0857 & 0.950 & 0.2 & 0.0808 & 0.0810 & 0.949 & 1.10 \\
&  & $\mathcal{M}_3$ & 1.349 & $-$0.2 & 0.0839 & 0.0845 & 0.945 & 0.0 & 0.0801 & 0.0803 & 0.949 & 1.10 \\
&  & $\mathcal{M}_4$ & 1.349 & $-$0.2 & 0.0839 & 0.0845 & 0.945 & 0.1 & 0.0803 & 0.0806 & 0.946 & 1.09 \\
\cmidrule(lr){2-13}
& \multirow{4}{*}{24} & $\mathcal{M}_1$ & 1.362 & 0.3 & 0.0971 & 0.0990 & 0.958 & 0.4 & 0.0901 & 0.0885 & 0.948 & 1.16 \\
&  & $\mathcal{M}_2$ & 1.362 & 0.3 & 0.0971 & 0.0990 & 0.958 & 0.6 & 0.0909 & 0.0894 & 0.949 & 1.14 \\
&  & $\mathcal{M}_3$ & 1.362 & 0.4 & 0.0961 & 0.0968 & 0.947 & 0.5 & 0.0891 & 0.0879 & 0.949 & 1.16 \\
&  & $\mathcal{M}_4$ & 1.362 & 0.4 & 0.0961 & 0.0968 & 0.947 & 0.7 & 0.0898 & 0.0887 & 0.949 & 1.15 \\
\cmidrule(lr){2-13}
& \multirow{4}{*}{36} & $\mathcal{M}_1$ & 1.322 & 0.3 & 0.1069 & 0.1079 & 0.943 & 0.4 & 0.0951 & 0.0927 & 0.948 & 1.26 \\
&  & $\mathcal{M}_2$ & 1.322 & 0.3 & 0.1069 & 0.1079 & 0.943 & 0.7 & 0.0967 & 0.0944 & 0.944 & 1.22 \\
&  & $\mathcal{M}_3$ & 1.322 & 0.9 & 0.1040 & 0.1040 & 0.942 & 0.7 & 0.0928 & 0.0914 & 0.946 & 1.25 \\
&  & $\mathcal{M}_4$ & 1.322 & 0.9 & 0.1040 & 0.1040 & 0.942 & 1.1 & 0.0942 & 0.0930 & 0.945 & 1.22 \\
\midrule
\multirow{12}{*}{4.00}& \multirow{4}{*}{12} & $\mathcal{M}_1$ & 1.322 & 0.4 & 0.0805 & 0.0803 & 0.949 & 0.4 & 0.0754 & 0.0763 & 0.951 & 1.14 \\
&  & $\mathcal{M}_2$ & 1.322 & 0.4 & 0.0805 & 0.0803 & 0.949 & 0.9 & 0.0768 & 0.0772 & 0.946 & 1.10 \\
&  & $\mathcal{M}_3$ & 1.322 & 0.1 & 0.0789 & 0.0789 & 0.945 & 0.2 & 0.0746 & 0.0757 & 0.952 & 1.12 \\
&  & $\mathcal{M}_4$ & 1.322 & 0.1 & 0.0789 & 0.0789 & 0.945 & 0.8 & 0.0759 & 0.0766 & 0.953 & 1.08 \\
\cmidrule(lr){2-13}
& \multirow{4}{*}{24} & $\mathcal{M}_1$ & 1.376 & 0.6 & 0.0963 & 0.0968 & 0.952 & 0.5 & 0.0854 & 0.0875 & 0.950 & 1.27 \\
&  & $\mathcal{M}_2$ & 1.376 & 0.6 & 0.0963 & 0.0968 & 0.952 & 1.3 & 0.0885 & 0.0895 & 0.945 & 1.18 \\
&  & $\mathcal{M}_3$ & 1.376 & 0.4 & 0.0936 & 0.0938 & 0.950 & 0.5 & 0.0843 & 0.0865 & 0.951 & 1.23 \\
&  & $\mathcal{M}_4$ & 1.376 & 0.4 & 0.0936 & 0.0938 & 0.950 & 1.3 & 0.0871 & 0.0885 & 0.944 & 1.16 \\
\cmidrule(lr){2-13}
& \multirow{4}{*}{36} & $\mathcal{M}_1$ & 1.369 & 0.6 & 0.1064 & 0.1078 & 0.959 & 0.5 & 0.0930 & 0.0935 & 0.954 & 1.31 \\
&  & $\mathcal{M}_2$ & 1.369 & 0.6 & 0.1064 & 0.1078 & 0.959 & 1.3 & 0.0980 & 0.0970 & 0.946 & 1.18 \\
&  & $\mathcal{M}_3$ & 1.369 & 0.8 & 0.1027 & 0.1029 & 0.954 & 0.9 & 0.0918 & 0.0920 & 0.952 & 1.25 \\
&  & $\mathcal{M}_4$ & 1.369 & 0.8 & 0.1027 & 0.1029 & 0.954 & 1.8 & 0.0962 & 0.0953 & 0.951 & 1.14 \\
\bottomrule
\end{tabular}}
\end{table}

\begin{table}[htbp]
\centering
\caption{Simulation results for $\WO(\tau)$ under 60\% censoring by $\tau=36$ for the three-component prioritized endpoint. Same layout as the main-text simulation tables. RB: relative bias (\%) for $\widehat{\WO}(\tau)$ on the natural scale; MCSD: Monte Carlo standard deviation of $\log\widehat{\WO}(\tau)$; ASE: average estimated standard error for $\log\widehat{\WO}(\tau)$; Cov: empirical coverage of the 95\% Wald interval based on log-scale inference; RE $=\mathrm{MCSD}^2(C)/\mathrm{MCSD}^2(I)$ computed on the log scale.}
\label{tab:sim3-WO60}
\resizebox{\textwidth}{!}{%
\begin{tabular}{ccccrrrrrrrrr}
\toprule
 & & & & \multicolumn{4}{c}{$\widehat{\WO}^{(\text{ipcw})}(\tau)$} & \multicolumn{4}{c}{$\widehat{\WO}^{(\text{ctw})}(\tau)$} & \\
\cmidrule(lr){5-8}\cmidrule(lr){9-12}
$\theta$ & $\tau$ & $\mathcal{M}_k$ & True & RB\% & MCSD & ASE & Cov & RB\% & MCSD & ASE & Cov & RE \\
\midrule
\multirow{12}{*}{1.25}& \multirow{4}{*}{12} & $\mathcal{M}_1$ & 1.349 & 0.3 & 0.0938 & 0.0951 & 0.950 & 0.4 & 0.0843 & 0.0844 & 0.948 & 1.24 \\
&  & $\mathcal{M}_2$ & 1.349 & 0.3 & 0.0938 & 0.0951 & 0.950 & 0.6 & 0.0848 & 0.0850 & 0.950 & 1.22 \\
&  & $\mathcal{M}_3$ & 1.349 & 0.1 & 0.0927 & 0.0915 & 0.944 & 0.3 & 0.0839 & 0.0833 & 0.948 & 1.22 \\
&  & $\mathcal{M}_4$ & 1.349 & 0.1 & 0.0927 & 0.0915 & 0.944 & 0.6 & 0.0843 & 0.0838 & 0.947 & 1.21 \\
\cmidrule(lr){2-13}
& \multirow{4}{*}{24} & $\mathcal{M}_1$ & 1.362 & 0.6 & 0.1125 & 0.1219 & 0.964 & 0.4 & 0.0937 & 0.0971 & 0.959 & 1.44 \\
&  & $\mathcal{M}_2$ & 1.362 & 0.6 & 0.1125 & 0.1219 & 0.964 & 0.7 & 0.0957 & 0.0989 & 0.961 & 1.38 \\
&  & $\mathcal{M}_3$ & 1.362 & 0.9 & 0.1089 & 0.1115 & 0.951 & 0.8 & 0.0930 & 0.0941 & 0.949 & 1.37 \\
&  & $\mathcal{M}_4$ & 1.362 & 0.9 & 0.1089 & 0.1115 & 0.951 & 1.1 & 0.0945 & 0.0957 & 0.947 & 1.33 \\
\cmidrule(lr){2-13}
& \multirow{4}{*}{36} & $\mathcal{M}_1$ & 1.322 & 0.8 & 0.1389 & 0.1483 & 0.971 & 0.9 & 0.1061 & 0.1085 & 0.956 & 1.72 \\
&  & $\mathcal{M}_2$ & 1.322 & 0.8 & 0.1389 & 0.1483 & 0.971 & 1.6 & 0.1103 & 0.1121 & 0.956 & 1.59 \\
&  & $\mathcal{M}_3$ & 1.322 & 2.1 & 0.1213 & 0.1260 & 0.956 & 1.6 & 0.0989 & 0.1011 & 0.956 & 1.50 \\
&  & $\mathcal{M}_4$ & 1.322 & 2.1 & 0.1213 & 0.1260 & 0.956 & 2.3 & 0.1024 & 0.1041 & 0.953 & 1.40 \\
\midrule
\multirow{12}{*}{4.00}& \multirow{4}{*}{12} & $\mathcal{M}_1$ & 1.322 & 0.2 & 0.0884 & 0.0890 & 0.955 & 0.3 & 0.0808 & 0.0805 & 0.954 & 1.20 \\
&  & $\mathcal{M}_2$ & 1.322 & 0.2 & 0.0884 & 0.0890 & 0.955 & 1.4 & 0.0837 & 0.0821 & 0.946 & 1.12 \\
&  & $\mathcal{M}_3$ & 1.322 & $-$0.3 & 0.0843 & 0.0850 & 0.953 & 0.1 & 0.0794 & 0.0789 & 0.948 & 1.13 \\
&  & $\mathcal{M}_4$ & 1.322 & $-$0.3 & 0.0843 & 0.0850 & 0.953 & 1.1 & 0.0821 & 0.0805 & 0.940 & 1.05 \\
\cmidrule(lr){2-13}
& \multirow{4}{*}{24} & $\mathcal{M}_1$ & 1.376 & 0.5 & 0.1134 & 0.1196 & 0.972 & 0.5 & 0.0936 & 0.0972 & 0.970 & 1.47 \\
&  & $\mathcal{M}_2$ & 1.376 & 0.5 & 0.1134 & 0.1196 & 0.972 & 1.9 & 0.0991 & 0.1013 & 0.966 & 1.31 \\
&  & $\mathcal{M}_3$ & 1.376 & 0.2 & 0.1030 & 0.1076 & 0.967 & 0.6 & 0.0899 & 0.0933 & 0.960 & 1.31 \\
&  & $\mathcal{M}_4$ & 1.376 & 0.2 & 0.1030 & 0.1076 & 0.967 & 2.1 & 0.0946 & 0.0971 & 0.964 & 1.19 \\
\cmidrule(lr){2-13}
& \multirow{4}{*}{36} & $\mathcal{M}_1$ & 1.369 & 0.8 & 0.1476 & 0.1498 & 0.968 & 0.6 & 0.1078 & 0.1104 & 0.961 & 1.87 \\
&  & $\mathcal{M}_2$ & 1.369 & 0.8 & 0.1476 & 0.1498 & 0.968 & 1.9 & 0.1173 & 0.1174 & 0.960 & 1.58 \\
&  & $\mathcal{M}_3$ & 1.369 & 1.0 & 0.1212 & 0.1242 & 0.960 & 1.2 & 0.0997 & 0.1022 & 0.960 & 1.48 \\
&  & $\mathcal{M}_4$ & 1.369 & 1.0 & 0.1212 & 0.1242 & 0.960 & 2.7 & 0.1073 & 0.1084 & 0.955 & 1.28 \\
\bottomrule
\end{tabular}}
\end{table}

\begin{table}[htbp]
\centering
\caption{Simulation results for $\WO(\tau)$ under 80\% censoring by $\tau=36$ for the three-component prioritized endpoint. Same layout as the main-text simulation tables. RB: relative bias (\%) for $\widehat{\WO}(\tau)$ on the natural scale; MCSD: Monte Carlo standard deviation of $\log\widehat{\WO}(\tau)$; ASE: average estimated standard error for $\log\widehat{\WO}(\tau)$; Cov: empirical coverage of the 95\% Wald interval based on log-scale inference; RE $=\mathrm{MCSD}^2(C)/\mathrm{MCSD}^2(I)$ computed on the log scale.}
\label{tab:sim3-WO80}
\resizebox{\textwidth}{!}{%
\begin{tabular}{ccccrrrrrrrrr}
\toprule
 & & & & \multicolumn{4}{c}{$\widehat{\WO}^{(\text{ipcw})}(\tau)$} & \multicolumn{4}{c}{$\widehat{\WO}^{(\text{ctw})}(\tau)$} & \\
\cmidrule(lr){5-8}\cmidrule(lr){9-12}
$\theta$ & $\tau$ & $\mathcal{M}_k$ & True & RB\% & MCSD & ASE & Cov & RB\% & MCSD & ASE & Cov & RE \\
\midrule
\multirow{12}{*}{1.25}& \multirow{4}{*}{12} & $\mathcal{M}_1$ & 1.349 & 0.8 & 0.1169 & 0.1202 & 0.963 & 0.9 & 0.0938 & 0.0935 & 0.957 & 1.55 \\
&  & $\mathcal{M}_2$ & 1.349 & 0.8 & 0.1169 & 0.1202 & 0.963 & 1.3 & 0.0954 & 0.0947 & 0.956 & 1.50 \\
&  & $\mathcal{M}_3$ & 1.349 & 0.2 & 0.1052 & 0.1058 & 0.955 & 0.6 & 0.0902 & 0.0893 & 0.954 & 1.36 \\
&  & $\mathcal{M}_4$ & 1.349 & 0.2 & 0.1052 & 0.1058 & 0.955 & 1.0 & 0.0914 & 0.0903 & 0.954 & 1.32 \\
\cmidrule(lr){2-13}
& \multirow{4}{*}{24} & $\mathcal{M}_1$ & 1.362 & 2.4 & 0.2063 & 0.2044 & 0.962 & 0.9 & 0.1270 & 0.1252 & 0.963 & 2.64 \\
&  & $\mathcal{M}_2$ & 1.362 & 2.4 & 0.2063 & 0.2044 & 0.962 & 1.6 & 0.1313 & 0.1290 & 0.961 & 2.47 \\
&  & $\mathcal{M}_3$ & 1.362 & 1.9 & 0.1393 & 0.1438 & 0.959 & 1.4 & 0.1071 & 0.1080 & 0.960 & 1.69 \\
&  & $\mathcal{M}_4$ & 1.362 & 1.9 & 0.1393 & 0.1438 & 0.959 & 2.1 & 0.1101 & 0.1110 & 0.961 & 1.60 \\
\cmidrule(lr){2-13}
& \multirow{4}{*}{36} & $\mathcal{M}_1$ & 1.322 & 8.6 & 0.3548 & 0.3338 & 0.966 & 4.0 & 0.2184 & 0.1855 & 0.962 & 2.64 \\
&  & $\mathcal{M}_2$ & 1.322 & 8.6 & 0.3548 & 0.3338 & 0.966 & 6.5 & 0.2285 & 0.1988 & 0.969 & 2.41 \\
&  & $\mathcal{M}_3$ & 1.322 & 3.5 & 0.1676 & 0.1768 & 0.966 & 2.3 & 0.1196 & 0.1233 & 0.958 & 1.96 \\
&  & $\mathcal{M}_4$ & 1.322 & 3.5 & 0.1676 & 0.1768 & 0.966 & 3.5 & 0.1253 & 0.1291 & 0.959 & 1.79 \\
\midrule
\multirow{12}{*}{4.00}& \multirow{4}{*}{12} & $\mathcal{M}_1$ & 1.322 & 0.9 & 0.1154 & 0.1130 & 0.961 & 0.7 & 0.0914 & 0.0907 & 0.965 & 1.59 \\
&  & $\mathcal{M}_2$ & 1.322 & 0.9 & 0.1154 & 0.1130 & 0.961 & 2.5 & 0.0965 & 0.0942 & 0.955 & 1.43 \\
&  & $\mathcal{M}_3$ & 1.322 & $-$0.2 & 0.0998 & 0.0979 & 0.949 & 0.2 & 0.0855 & 0.0854 & 0.957 & 1.36 \\
&  & $\mathcal{M}_4$ & 1.322 & $-$0.2 & 0.0998 & 0.0979 & 0.949 & 2.0 & 0.0897 & 0.0886 & 0.957 & 1.24 \\
\cmidrule(lr){2-13}
& \multirow{4}{*}{24} & $\mathcal{M}_1$ & 1.376 & 2.0 & 0.2111 & 0.2035 & 0.956 & 1.2 & 0.1290 & 0.1293 & 0.961 & 2.68 \\
&  & $\mathcal{M}_2$ & 1.376 & 2.0 & 0.2111 & 0.2035 & 0.956 & 3.8 & 0.1401 & 0.1372 & 0.952 & 2.27 \\
&  & $\mathcal{M}_3$ & 1.376 & 0.6 & 0.1377 & 0.1379 & 0.948 & 1.1 & 0.1077 & 0.1084 & 0.950 & 1.64 \\
&  & $\mathcal{M}_4$ & 1.376 & 0.6 & 0.1377 & 0.1379 & 0.948 & 3.8 & 0.1168 & 0.1154 & 0.942 & 1.39 \\
\cmidrule(lr){2-13}
& \multirow{4}{*}{36} & $\mathcal{M}_1$ & 1.369 & 5.8 & 0.3410 & 0.3105 & 0.945 & 4.5 & 0.2128 & 0.1958 & 0.967 & 2.57 \\
&  & $\mathcal{M}_2$ & 1.369 & 5.8 & 0.3410 & 0.3105 & 0.945 & 6.9 & 0.2239 & 0.2055 & 0.965 & 2.32 \\
&  & $\mathcal{M}_3$ & 1.369 & 1.7 & 0.1659 & 0.1729 & 0.962 & 2.2 & 0.1248 & 0.1264 & 0.960 & 1.77 \\
&  & $\mathcal{M}_4$ & 1.369 & 1.7 & 0.1659 & 0.1729 & 0.962 & 5.2 & 0.1384 & 0.1377 & 0.944 & 1.44 \\
\bottomrule
\end{tabular}}
\end{table}

\subsection{Copula sensitivity study: supplementary results}
\label{supp:sim_copula}

This subsection reports results from the copula-sensitivity study. In contrast to the original nuisance-model misspecification study, the marginal event-time models are correctly specified here through separate Cox proportional hazards models within each treatment arm and component. The sensitivity analysis therefore isolates the effect of the dependence model used to estimate the conditional tie probabilities. The data-generating copula and the working copula are varied over Gumbel--Hougaard, Clayton, Frank, and Plackett, while the IPCW estimator of \citet{cui2025ipcw} is included as a reference because it does not use a working copula. The four copulas used in this study are summarized in Table \ref{tab:copula-main} of the main text, and the broader copula library is described in Table \ref{tab:supp-copula}.

In this section, we provide the corresponding results under 40\% censoring and the analogous results for $\WR(\tau)$ and $\WO(\tau)$. Across these additional settings, the proposed estimator remains substantially more efficient than the IPCW estimator of \citet{cui2025ipcw}. under all working copulas considered. The efficiency gain is smaller under 40\% censoring than under 80\% censoring, as expected, because fewer lower-priority comparisons are blocked by unresolved higher-priority ties when censoring is moderate. Within each data-generating copula, the four proposed-method columns are similar in RBias, COV, and RE, indicating that the efficiency gain is not driven by exact matching of the working copula to the data-generating copula. Instead, when the Cox marginal models are correctly specified, the dominant source of improvement is from conditional tie probabilities.

\begin{figure}[htbp!]
\centering
\includegraphics[width=\textwidth]{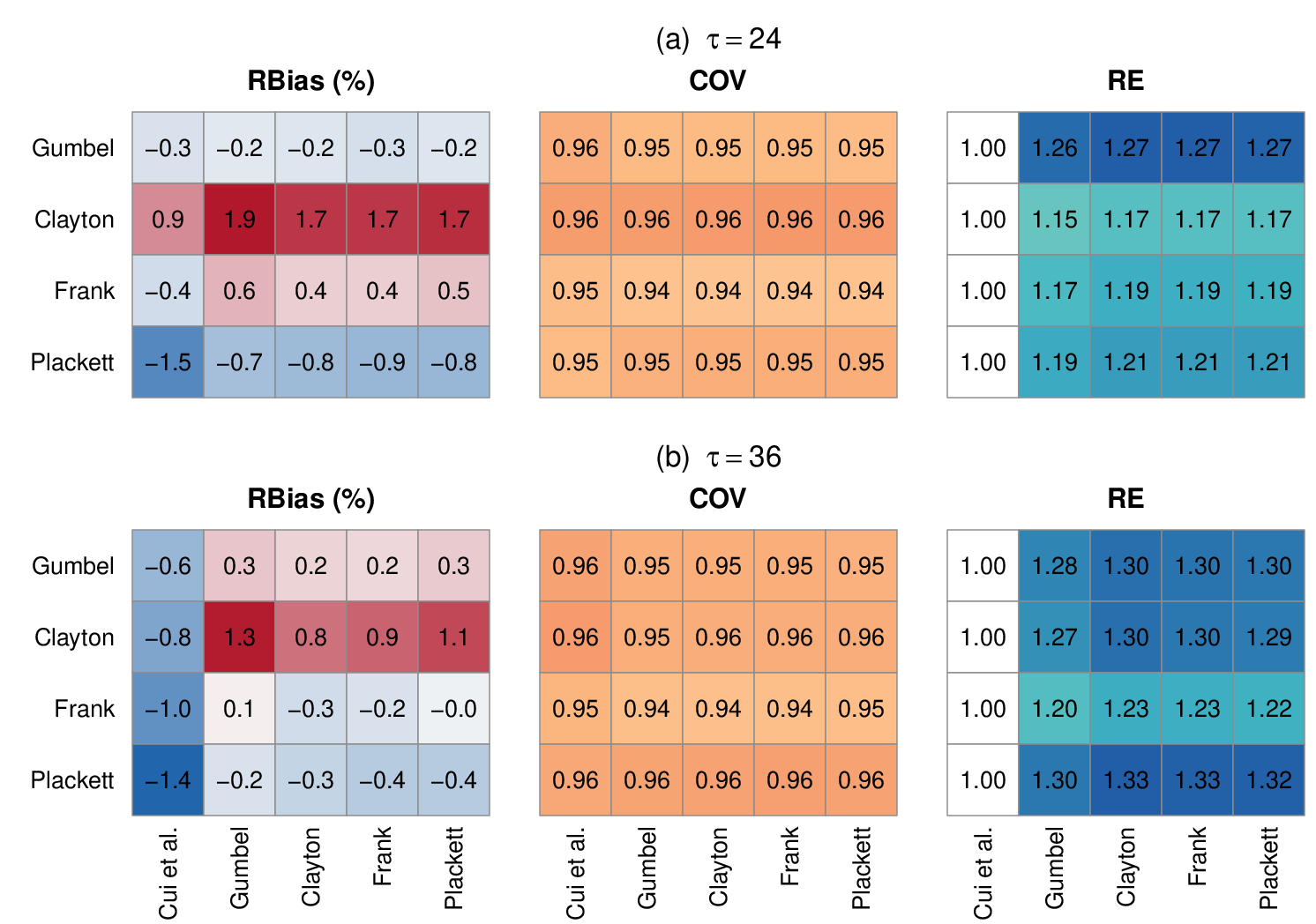}
\caption{Copula sensitivity of $\NB(\tau)$ under 40\% censoring. The two stacked blocks correspond to (a) $\tau=12$, (b) $\tau=24$, and (c) $\tau=36$ months. Within each block, rows represent the data-generating copula and columns represent \citet{cui2025ipcw} or the working copula used by the proposed estimator. The three panels display relative bias (RBias), empirical coverage of the nominal 95\% confidence interval (COV), and efficiency gain relative to \citet{cui2025ipcw} (RE), respectively.}
\label{fig:supp-copula-NB40}
\end{figure}

\begin{figure}[htbp!]
\centering
\includegraphics[width=\textwidth]{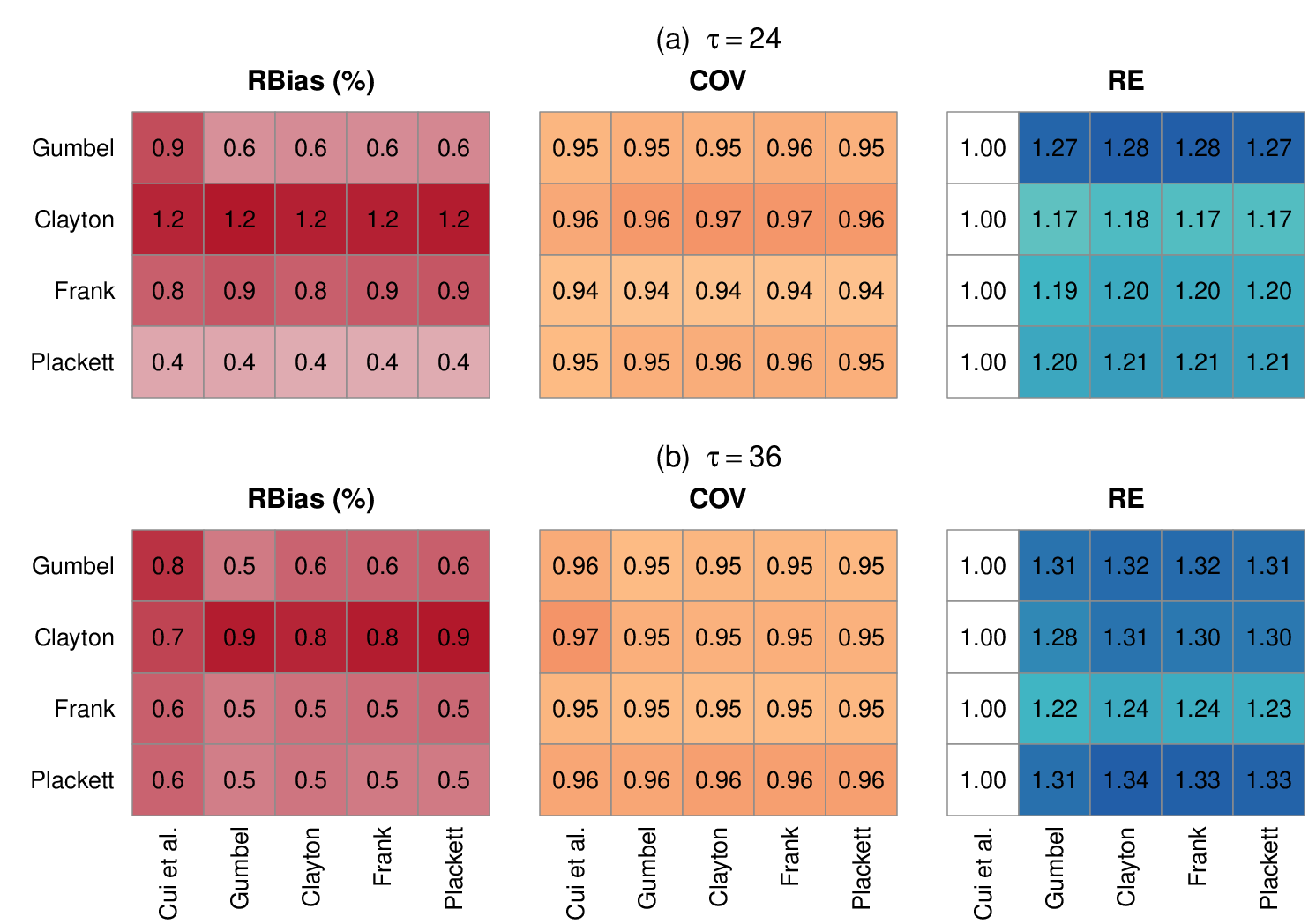}
\caption{Copula sensitivity of $\WR(\tau)$ under 40\% censoring. The two stacked blocks correspond to (a) $\tau=12$, (b) $\tau=24$, and (c) $\tau=36$ months. Within each block, rows represent the data-generating copula and columns represent \citet{cui2025ipcw} or the working copula used by the proposed estimator. The three panels display relative bias (RBias), empirical coverage of the nominal 95\% confidence interval (COV), and efficiency gain relative to \citet{cui2025ipcw} (RE), respectively.}
\label{fig:supp-copula-WR40}
\end{figure}

\begin{figure}[htbp!]
\centering
\includegraphics[width=\textwidth]{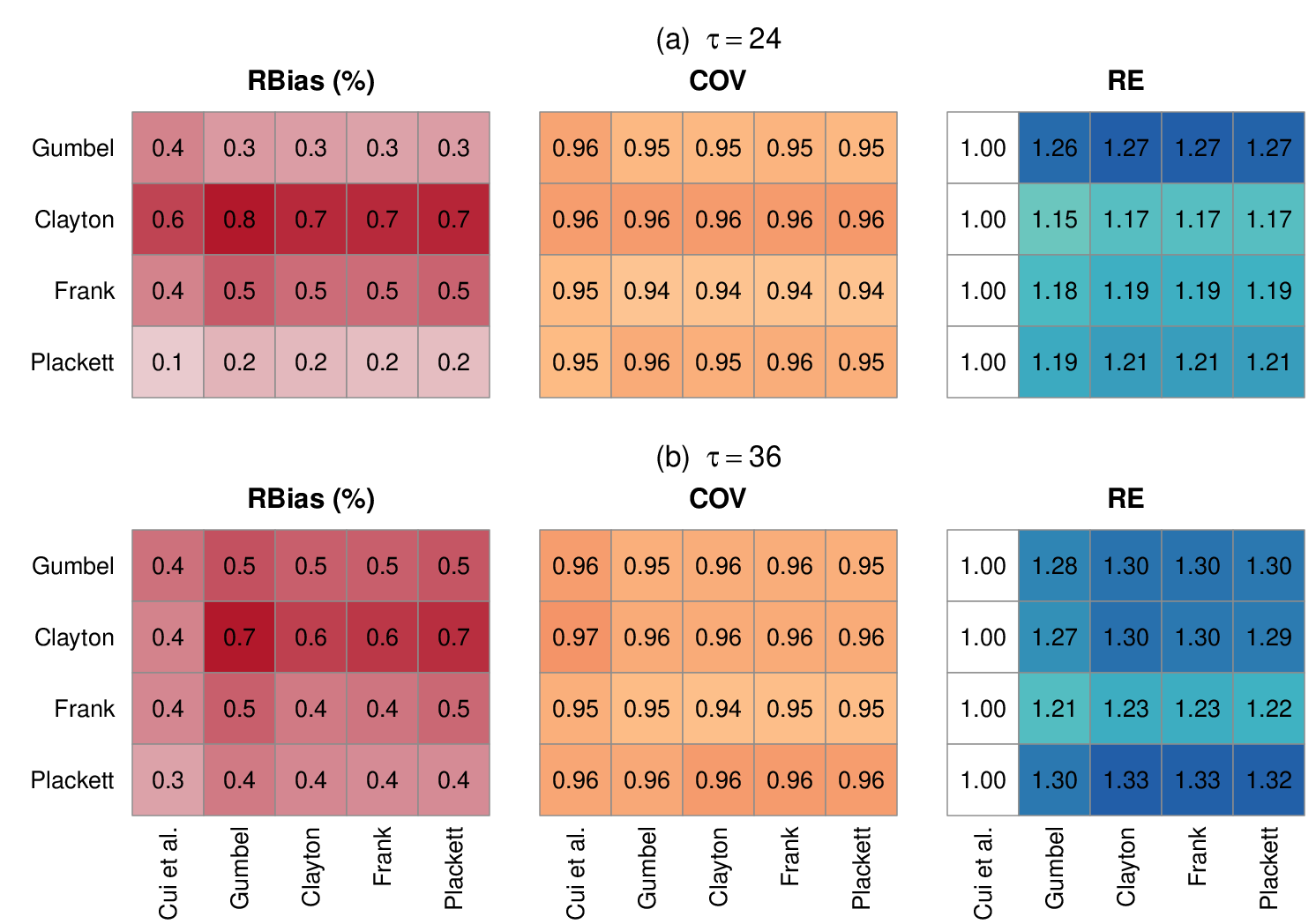}
\caption{Copula sensitivity of $\WO(\tau)$ under 40\% censoring. The two stacked blocks correspond to (a) $\tau=12$, (b) $\tau=24$, and (c) $\tau=36$ months. Within each block, rows represent the data-generating copula and columns represent \citet{cui2025ipcw} or the working copula used by the proposed estimator. The three panels display relative bias (RBias), empirical coverage of the nominal 95\% confidence interval (COV), and efficiency gain relative to \citet{cui2025ipcw} (RE), respectively.}
\label{fig:supp-copula-WO40}
\end{figure}

\begin{figure}[htbp!]
\centering
\includegraphics[width=\textwidth]{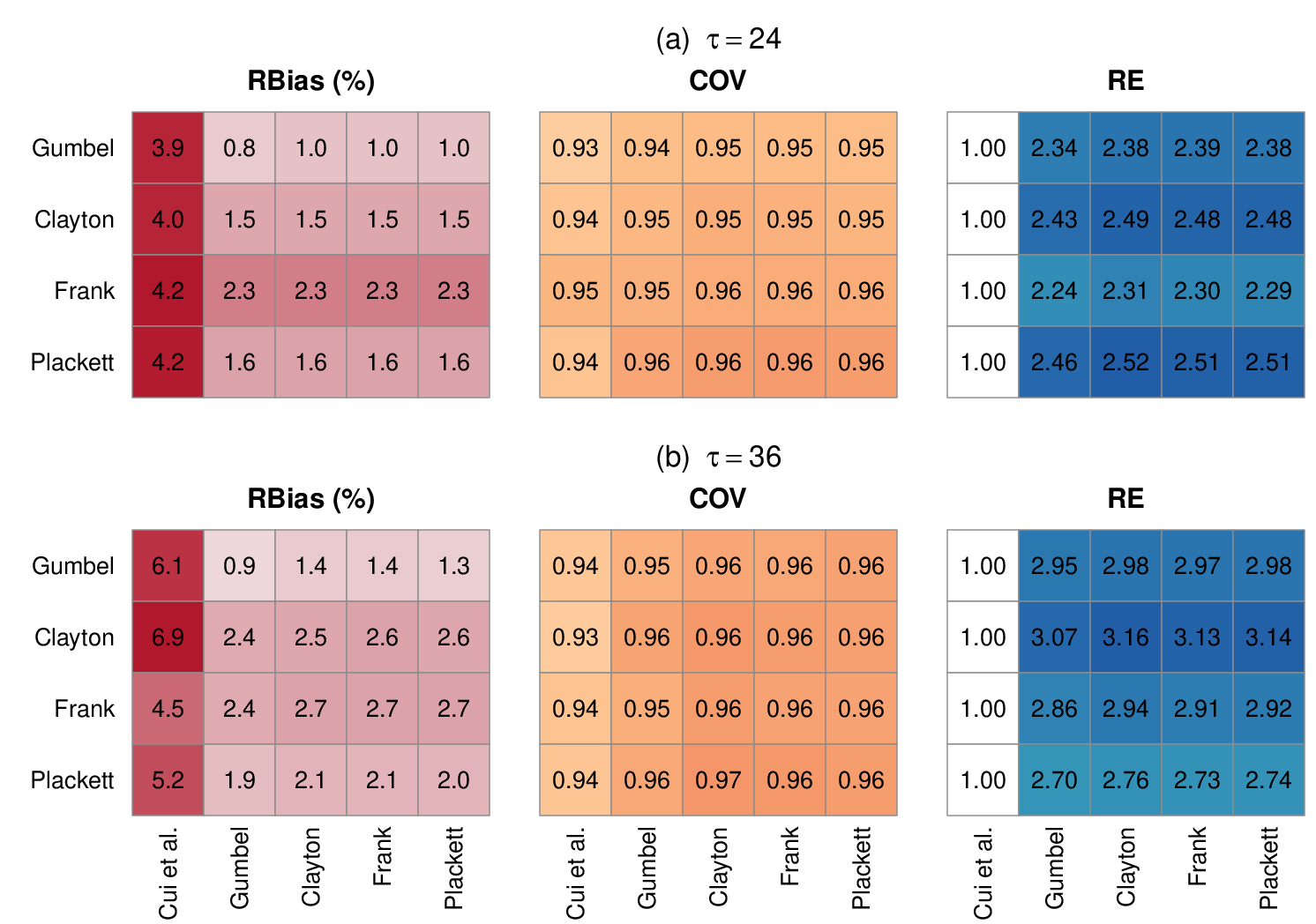}
\caption{Copula sensitivity of $\WR(\tau)$ under 80\% censoring. The two stacked blocks correspond to (a) $\tau=12$, (b) $\tau=24$, and (c) $\tau=36$ months. Within each block, rows represent the data-generating copula and columns represent \citet{cui2025ipcw} or the working copula used by the proposed estimator. The three panels display relative bias (RBias), empirical coverage of the nominal 95\% confidence interval (COV), and efficiency gain relative to \citet{cui2025ipcw} (RE), respectively.}
\label{fig:supp-copula-WR80}
\end{figure}

\begin{figure}[htbp!]
\centering
\includegraphics[width=\textwidth]{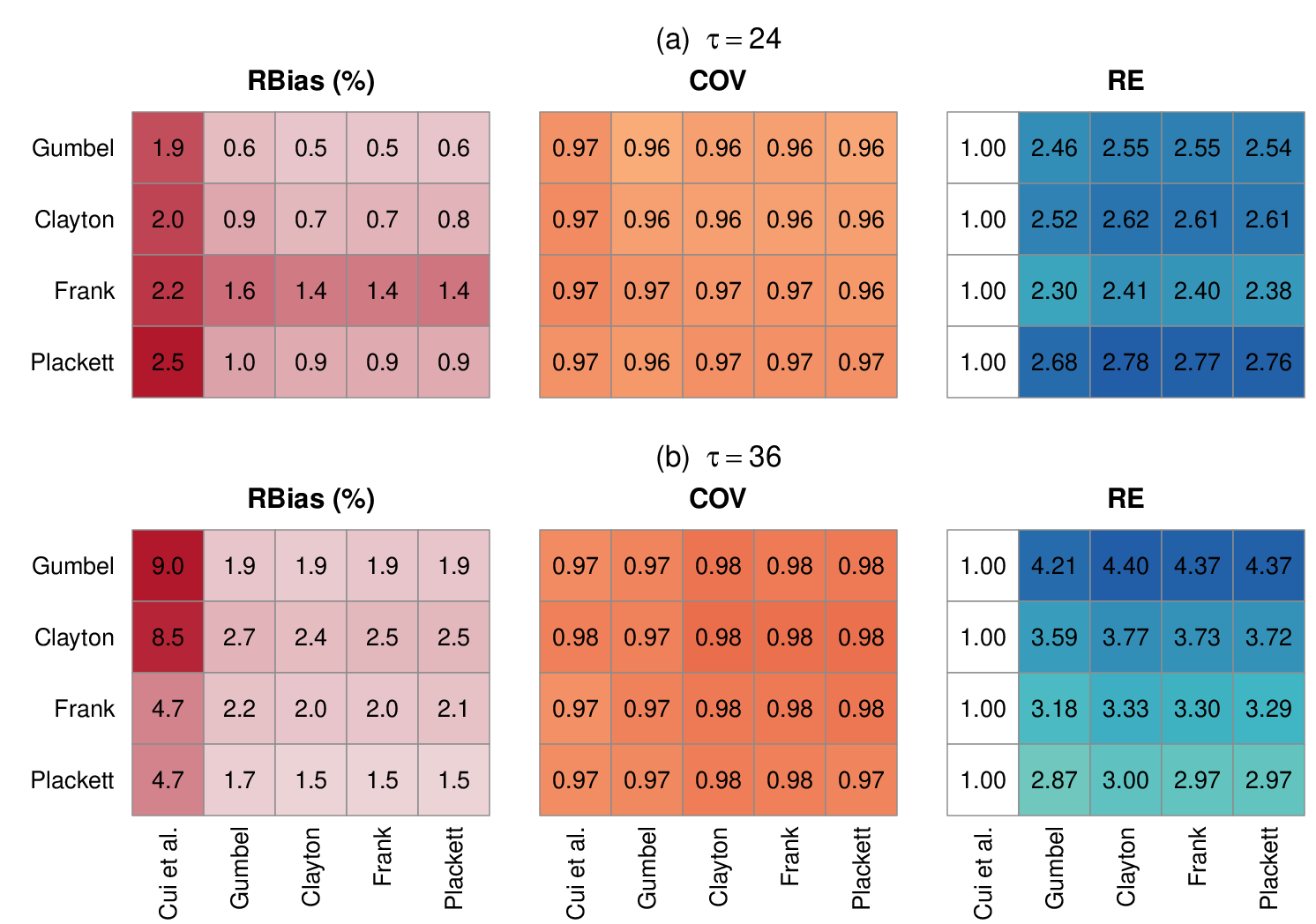}
\caption{Copula sensitivity of $\WO(\tau)$ under 80\% censoring. The three stacked blocks correspond to (a) $\tau=12$, (b) $\tau=24$, and (c) $\tau=36$ months. Within each block, rows represent the data-generating copula and columns represent \citet{cui2025ipcw} or the working copula used by the proposed estimator. The three panels display relative bias (RBias), empirical coverage of the nominal 95\% confidence interval (COV), and efficiency gain relative to \citet{cui2025ipcw} (RE), respectively.}
\label{fig:supp-copula-WO80}
\end{figure}

Taken together, the supplementary results support the conclusions in the main text. In the original nuisance-model misspecification study, the proposed estimator improves efficiency relative to IPCW estimator of \citet{cui2025ipcw} across censoring levels, restriction times, win-statistic summaries, and endpoint hierarchies, with the largest gains under heavier censoring and longer restriction horizons. Accurate censoring-model specification remains important for bias and interval performance because both estimators rely on inverse probability of censoring weighting. In the copula-sensitivity study, when the event-time margins are correctly specified by Cox models, the proposed estimator remains substantially more efficient than \citet{cui2025ipcw} across data-generating and working copula families, and the working copula choice has a smaller impact than the transition from IPCW estimator to conditional tie weighting.

\section{Computation of true values in the simulation study} \label{supp:truevalues}

For each simulation setting, the true component specific win probabilities were computed deterministically under the uncensored data-generating mechanism. This section describes the calculation used for both the two-component and three-component prioritized endpoints considered in Section \ref{sec:simulation} and  \ref{supp:sim3}. Let $Q$ denote the number of prioritized event types. For component $q=1,\dots,Q$, let
\[
\Lambda_{q,a}(t\mid \bZ)=\lambda_{q0}\,t^{\rho_q}\exp\!\left(\beta_{q1}Z_1+\beta_{q2}Z_2+\beta_{q3}Z_3+\beta_{qa}a\right), \qquad a\in\{0,1\},
\]
be the marginal cumulative hazard under arm $a$, with corresponding marginal survival function and density
\[
S_{q,a}(t\mid \bZ)=\exp\{-\Lambda_{q,a}(t\mid \bZ)\},
\]
and 
\[f_{q,a}(t\mid \bZ)=\lambda_{q0}\rho_q t^{\rho_q-1}\exp\!\left(\beta_{q1}Z_1+\beta_{q2}Z_2+\beta_{q3}Z_3+\beta_{qa}a\right)
S_{q,a}(t\mid \bZ).\]
Dependence among the $Q$ latent event times was introduced through an exchangeable Gumbel copula with parameter $\theta\ge1$. For a fixed priority level $q$, define
\[
A_{q,a}(t,\tau\mid \bZ)=\sum_{k=1}^{q-1}\Lambda_{k,a}(\tau\mid \bZ)^\theta+\Lambda_{q,a}(t\mid \bZ)^\theta.
\]
Under the exchangeable Gumbel copula, the joint survival function needed for the $q$th priority comparison takes the form
\begin{align} \label{eq:supp-true-S}
\mathcal{S}_{q,a}(\tau,t\mid \bZ)&=\Pr\!\bigl(T_{1}> \tau,\dots,T_{q-1}>\tau,\;T_q>t \mid A=a,\bZ\bigr) \nonumber\\
& =\exp\!\left\{-A_{q,a}(t,\tau\mid \bZ)^{1/\theta}\right\}.  
\end{align}
Differentiating \eqref{eq:supp-true-S} with respect to $t$ gives the corresponding prefix subdensity
\begin{align*}
    \mathcal{H}_{q,a}(\tau,t\mid \bZ)&=-\frac{\partial}{\partial t}\mathcal{S}_{q,a}(\tau,t\mid \bZ)\\
    & =\mathcal{S}_{q,a}(\tau,t\mid \bZ)\Lambda_{q,a}(t\mid \bZ)^{\theta-1}A_{q,a}(t,\tau\mid \bZ)^{1/\theta-1}\\
    &\times \lambda_{q0}\rho_q t^{\rho_q-1}\exp\!\left(\beta_{q1}Z_1+\beta_{q2}Z_2+\beta_{q3}Z_3+\beta_{qa}a\right),
\end{align*}
with the convention that when $\theta=1$ this reduces to
\[
\mathcal{H}_{q,a}(\tau,t\mid \bZ)=\mathcal{S}_{q,a}(\tau,t\mid \bZ)\,\lambda_{q0}\rho_q t^{\rho_q-1}\exp\!\left(\beta_{q1}Z_1+\beta_{q2}Z_2+\beta_{q3}Z_3+\beta_{qa}a\right).
\]
For a treated subject with covariates $\bZ_i$ and a control subject with covariates $\bZ_j$, the conditional treatment-win probability contributed by the $q$th priority level is
\begin{equation}\label{eq:supp-true-pitq-cond}
\pi_{tq}(\tau\mid \bZ_i,\bZ_j)=\int_0^\tau\mathcal{S}_{q,1}(\tau,t\mid \bZ_i)\,\mathcal{H}_{q,0}(\tau,t\mid \bZ_j)\,\mathrm{d}t.
\end{equation}
This quantity is the probability that the treated subject remains event-free through $\tau$ on all higher-priority components and beyond time $t$ on component $q$, while the control subject remains event-free through $\tau$ on all higher-priority components and experiences the component-$q$ event at time $t$. The corresponding conditional control-win probability is obtained symmetrically by interchanging the treatment and control arms,
\begin{equation}\label{eq:supp-true-picq-cond}
\pi_{cq}(\tau\mid \bZ_i,\bZ_j)=
\int_0^\tau\mathcal{S}_{q,0}(\tau,t\mid \bZ_j)\,\mathcal{H}_{q,1}(\tau,t\mid \bZ_i)\,\mathrm{d}t.
\end{equation}
The marginal component-specific win probabilities were then obtained by integrating
\eqref{eq:supp-true-pitq-cond}--\eqref{eq:supp-true-picq-cond} over the joint distribution of $(\bZ_i,\bZ_j)$. Because the baseline covariates were generated independently with
\[
Z_1\sim \mathrm{Bernoulli}(0.5),\qquad Z_2\sim \mathrm{Uniform}(0,1),\qquad Z_3\sim \mathrm{Bernoulli}(0.4),
\]
the outer expectation factors across arms. In implementation, the expectation over the binary covariates $(Z_1,Z_3)$ was evaluated exactly by summation over the four possible values with probability
\[
\Pr(Z_1=z_1,Z_3=z_3)=0.5\times 0.4^{z_3}0.6^{1-z_3},
\]
and the expectation over the continuous covariate $Z_2$ in each arm was approximated by 24-point Gauss--Legendre quadrature on $[0,1]$. Thus,
\begin{align}
\pi_{tq}(\tau)
&=\sum_{z_{1i}=0}^1\sum_{z_{3i}=0}^1\sum_{z_{1j}=0}^1\sum_{z_{3j}=0}^1\Pr(Z_{1i}=z_{1i},Z_{3i}=z_{3i})\Pr(Z_{1j}=z_{1j},Z_{3j}=z_{3j})\nonumber\\
&\quad\times\int_0^1\int_0^1\pi_{tq}\!\left(\tau\mid z_{1i},z_{2i},z_{3i},z_{1j},z_{2j},z_{3j}\right)\,\mathrm{d}z_{2i}\,\mathrm{d}z_{2j},
\label{eq:supp-true-pitq}
\end{align}
and $\pi_{cq}(\tau)$ was computed analogously. The inner integral over $t\in[0,\tau]$ in \eqref{eq:supp-true-pitq-cond} and \eqref{eq:supp-true-picq-cond} was evaluated by 80-point Gauss--Legendre quadrature on $[0,\tau]$. After obtaining the component-specific win probabilities, the overall treatment win and control win probabilities were formed as
\[
\pi_t(\tau)=\sum_{q=1}^Q \pi_{tq}(\tau),
\qquad
\pi_c(\tau)=\sum_{q=1}^Q \pi_{cq}(\tau).
\]
The truth values of the summary measures reported in the simulation study were then computed by direct transformation:
\[
\NB(\tau)=\pi_t(\tau)-\pi_c(\tau),\qquad\WR(\tau)=\frac{\pi_t(\tau)}{\pi_c(\tau)},\qquad\WO(\tau)=\frac{1+\NB(\tau)}{1-\NB(\tau)}.
\]
By construction, these truth calculations depend only on the latent event-time model, the copula dependence parameter, the endpoint hierarchy, and the restriction time $\tau$.

\section{Tutorial for the \texttt{winIPCW} R package}
\label{supp:package_tutorial}

This section provides a brief tutorial for applying the \texttt{winIPCW} R package at \\\url{https://github.com/fancy575/winIPCW} to estimate restricted win statistics for prioritized time-to-event composite endpoints under right censoring. The package implements the IPCW estimator of \citet{cui2025ipcw} and the proposed modified IPCW estimator with conditional tie weighting described in the main text. The example below uses the simulated data set \texttt{windat}, which is included with the package. The data are stored in long format, with one follow-up or censoring row per subject coded by \texttt{event\_type = 0}, and additional rows for observed prioritized endpoint components. In this example, \texttt{event\_type = 1} is treated as the first-priority endpoint and \texttt{event\_type = 2} as the second-priority endpoint.

\begin{verbatim}
## Install winIPCW from GitHub
if (!requireNamespace("remotes", quietly = TRUE)) {
  install.packages("remotes")
}

remotes::install_github("fancy575/winIPCW",
                        dependencies = TRUE,
                        upgrade = "never")

## Load packages
library(winIPCW)
library(survival)

## Load example data
data("windat", package = "winIPCW")

## Fit restricted win statistics using the full example data set
fit <- winIPCW(
  Surv(time, status) ~ Z1 + Z2 + Z3,
  data = windat,
  id = "id",
  trt = "A",
  endpoint = "event_type",
  tau = 36,
  method = c("ipcw", "ctw"),
  estimand = c("NB", "WR", "WO"),
  priority = c(1, 2),
  censor = "cox",
  margin = "cox",
  copula = "gumbel",
  eps = 0.01,
  conf.level = 0.95
)

## Extract and print results
out <- results(fit)
print(out)

\end{verbatim}

In this example, the formula \texttt{Surv(time, status) \textasciitilde{} Z1 + Z2 + Z3} specifies the working Cox model covariates for the censoring model and for the event-time margins used by the proposed estimator. The argument \texttt{id} identifies subjects, \texttt{trt} specifies the treatment indicator, and \texttt{endpoint} identifies the follow-up row and endpoint component rows. The treatment indicator is coded as \texttt{A = 1} for the treatment arm and \texttt{A = 0} for the control arm. The argument \texttt{tau = 36} requests estimates at the restriction time of 36 months, and \texttt{priority = c(1, 2)} specifies that component 1 has higher clinical priority than component 2. The option \texttt{method = c("ipcw", "ctw")} computes both the IPCW estimator of \citet{cui2025ipcw} and the proposed modified IPCW estimator with conditional tie weighting. For the proposed estimator, \texttt{margin = "cox"} uses Cox proportional hazards models for the event-time margins and \texttt{copula = "gumbel"} uses a Gumbel--Hougaard working copula. The resulting output is summarized in Table \ref{tab:supp-winipcw-example}.

\begin{table}[htbp]
\centering
\caption{Example output from the \texttt{winIPCW} package using the full included \texttt{windat} data set. The analysis reports the net benefit (NB), win ratio (WR), and win odds (WO) at $\tau=36$ months. Confidence intervals are Wald-type 95\% intervals; WR and WO intervals are constructed on the log scale.}
\label{tab:supp-winipcw-example}
\begin{tabular}{rlllrrrr}
\toprule
$\tau$ & Method & Copula & Estimand & Estimate & SE & Lower & Upper \\
\midrule
36 & IPCW   & --     & NB & $-0.014$ & $0.023$ & $-0.059$ & $0.032$ \\
36 & IPCW   & --     & WR & $0.851$  & $0.275$ & $0.496$  & $1.459$ \\
36 & IPCW   & --     & WO & $0.973$  & $0.047$ & $0.888$  & $1.066$ \\
36 & ctw & Gumbel & NB & $0.012$  & $0.042$ & $-0.071$ & $0.094$ \\
36 & ctw & Gumbel & WR & $1.032$  & $0.116$ & $0.823$  & $1.296$ \\
36 & ctw & Gumbel & WO & $1.023$  & $0.084$ & $0.868$  & $1.207$ \\
\bottomrule
\end{tabular}
\end{table}

The same syntax can be used to request estimates at multiple restriction times. For example, replacing \texttt{tau = 36} by \texttt{tau = c(12, 24, 36)} returns the requested win statistics at each restriction time. Additional working copulas for the modified IPCW estimator can be requested by changing the \texttt{copula} argument, for example \texttt{copula = c("gumbel", "clayton", "frank", "plackett")}.

\end{document}